	\theoremstyle{plain}
\newtheoremstyle{remark}
{12pt}
{12pt}
{\normalfont}
{0pt}
{\bfseries}
{}
{10pt}
{\thmname{#1}\thmnumber{ #2}: \bfseries\thmnote{#3}}
\newtheoremstyle{theorems}
{12pt}
{12pt}
{\itshape}
{17pt}
{\bfseries}
{}
{10pt}
{\thmname{#1}\thmnumber{ #2}. \bfseries\thmnote{#3}}
\newtheoremstyle{model}
{8pt}
{8pt}
{\itshape}
{17pt}
{\bfseries}
{}
{10pt}
{\thmname{#1}\thmnumber{ #2}: \bfseries\thmnote{#3}}
\theoremstyle{theorems}
\newtheorem{lemma}{Lemma}
\newtheorem{theorem}{Theorem}
\theoremstyle{remark}
\theoremstyle{model}
\newcommand{\indep}{\;\, \rule[0em]{.03em}{.6em} \hspace{-.25em}      
	\rule[0em]{.65em}{.03em} \hspace{-.25em} \rule[0em]{.03em}{.6em}\;\,}
\newcommand{\E}{\mathrm{E}}                
\newcommand{\sign}[1]{\mathrm{sgn}(#1)}
\def\cs{{\mathcal S}\lo {Y | X}}
\newcommand\cms {{\mathcal S}\lo {E(Y | X)}}
\def\nano{\scriptscriptstyle}
\def\trans{^{\mbox{\tiny{\sf T}}}}
\newcommand\spn[1]{{\mathcal S} ({#1})}
\newcommand\udex[1]{^{\raisebox{1.2pt}{\mbox{$\nano #1$}}}}
\newcommand\lo[1]{_{\nano #1}}
\def\inv{\udex {-1}}
\def\tsum{\textstyle\sum}
\def\var{\mathrm{var}}
\def\real{\mathbb {R}}
\def\eop
\def\bop{O\lo P}
\def\sop{o\lo P}
\def\one{\mbox{I}}
\def\two{\mbox{I\!I}}
\def\vecc{{\mathrm {vec}}}
\def\prob{\mathrm {prob}}
\def\I{\mathrm I}
\def\diag{{\mathrm {diag}}}
\def\F{{\mathcal {F}}}
\def\msir{\mathrm{SIR}\lo{\mathrm {M}}}
\def\rmsir{\mathrm{SIR}\lo{\mathrm {RM}}}
\def\smsir{\mathrm{SIR}\lo{\mathrm {M}}\udex {\mathrm {S}}}
\def\srmsir{\mathrm{SIR}\lo{\mathrm {RM}}\udex {\mathrm {S}}}
\def\msave{\mathrm{SAVE}\lo{\mathrm {M}}}
\def\rmsave{\mathrm{SAVE}\lo{\mathrm {RM}}}
\def\go{{\mathrm {G}}\lo 1}
\def\hgo{\widehat {\mathrm {G}}\lo 1}
\def\G{{\mathrm {G}}}
\begin{document}

\begin{frontmatter}

\title{Adjusting inverse regression for predictors with clustered distribution}
\runtitle{Inverse regression for predictors with clustered distribution}


\author{\fnms{Wei} \snm{Luo}\ead[label=e1]{weiluo@zju.edu.cn}}
\and
\author{\fnms{Yan} \snm{Guo}\ead[label=e2]{yan\_guo@zju.edu.cn}}

\address{Center for Data Science, Zhejiang University, China 310058\\
\printead{e1,e2}}

\runauthor{W. Luo}

\begin{abstract}
A major family of sufficient dimension reduction (SDR) methods, called inverse regression, commonly require the distribution of the predictor $X$ to have a linear $E(X|\beta\trans X)$ and a degenerate $\var (X | \beta\trans X)$ for the desired reduced predictor $\beta\trans X$. In this paper, we adjust the first- and second-order inverse regression methods by modeling $E(X|\beta\trans X)$ and $\var (X | \beta\trans X)$ under the mixture model assumption on $X$, which allows these terms to convey more complex patterns and is most suitable when $X$ has a clustered sample distribution. The proposed SDR methods build a natural path between inverse regression and the localized SDR methods, and in particular inherit the advantages of both; that is, they are $n\udex {1/2}$-consistent, efficiently implementable, directly adjustable under the high-dimensional settings, and fully recovering the desired reduced predictor. These findings are illustrated by simulation studies and a real data example at the end, which also suggest the effectiveness of the proposed methods for non-clustered data.
\end{abstract}

\begin{keyword}
\kwd{exhaustive SDR estimation}
\kwd{high-dimensional SDR}
\kwd{inverse regression}
\kwd{mixture model}
\end{keyword}



\end{frontmatter}

\section{Introduction}\label{sec:Intro}

Sufficient dimension reduction (SDR) has attracted intensive research interest in the recent years due to the need of high-dimensional statistical data analysis. Usually, SDR assumes that the predictor $X$ affects the response $Y$ through a lower-dimensional $\beta\trans X$, that is,
\begin{align}\label{assume: sdr cs}
	Y \indep X \mid \beta\trans X
\end{align}
where $\indep$ means independence; and the research interest of SDR is to find the qualified $\beta\trans X$ as the subsequent predictor in a model-free manner, i.e. without parametric assumptions on $Y|\beta\trans X$. For identifiability, \cite{cook1998} introduced the central subspace, denoted by $\cs$, as the parameter of interest, whose arbitrary basis matrix satisfies (\ref{assume: sdr cs}) with minimal number of columns. This space is unique under fairly general conditions \citep{yin2008}.

When the research interest is focused on regression analysis, SDR is adjusted to find $\beta\trans X$ that only preserves the information about regression, i.e. $E(Y | X)$ being measurable with respect to $\beta\trans X$. Similar to $\cs$, the unique parameter of interest in this scenario is called the central mean subspace and denoted by $\cms$ \citep{cook2002}. For ease of presentation, we do not distinguish the two parameters $\cms$ and $\cs$ unless otherwise specified, and call both the central SDR subspace uniformly. Let $d$ be the dimension of the central SDR subspace and $\beta\lo 0 \in \real\udex {p\times d}$ be its arbitrary basis matrix.

\def\M{{\Omega}}

In the literature, a major family of estimators for the central SDR subspace is called inverse regression, which includes the ordinary least squares \citep[OLS;][]{li1989} and principal Hessian directions \citep[pHd;][]{li1992} that estimate $\cms$, and sliced inverse regression \citep[SIR;][]{li1991}, sliced average variance estimator \citep[SAVE;][]{cook1991}, and directional regression \citep[DR;][]{li2007} that estimate $\cs$, etc. A common feature of the inverse regression methods is that they first use the moments of $X | Y$ to construct a matrix-valued parameter $\M$, whose columns fall in the central SDR subspace under certain parametric assumptions of $X$, and then use an estimate of $\M$ to recover the central SDR subspace. Suppose $X$ has zero mean and covariance matrix $\Sigma\lo X$. The forms of $\M$ for OLS, SIR, and SAVE are, respectively,
\begin{align} \label{eq: M}
	\Sigma\lo X\inv E(XY), \quad \Sigma\lo X\inv E\{E\udex {\otimes 2}(X|Y)\} \Sigma\lo X\inv, \quad  \Sigma\lo X\inv E[E\{\Sigma\lo X - \var (X | Y)\}\udex {\otimes 2}] \Sigma\lo X\inv,
\end{align}
where $v\udex {\otimes 2}$ denotes $vv\trans$ for any matrix $v$. Clearly, $\M$ is readily estimable when $Y$ is discrete with limited number of outcomes. To ease the estimation for continuous $Y$ (except for OLS), the slicing technic is commonly applied to replace a univariate $Y$ in $\M$ with $Y\lo S = \tsum\lo {i=1}\udex H i I(q\lo {i-1} < Y \leq q\lo i)$, $H$ being a fixed integer and $q\lo i$ being the $(i/H)$th quantile of $Y$; the case of multivariate $Y$ follows similarly. Because the moments of $X|Y\lo S$ can be estimated by the sample moments within each slice $(q\lo i, q\lo {i+1})$, a consistent estimator of $\M$, denoted by $\widehat \M$, can be readily constructed. The central SDR subspace is then estimated by the linear span of the leading left singular vectors of $\widehat \M$.

Despite their popularity in applications, the inverse regression methods are also commonly criticized for the aforementioned parametric assumptions on $X$. For example, the first-order inverse regression methods, i.e.  OLS and SIR that only involve $E(X |Y)$ or its average $E(XY)$, require the linearity condition
\begin{align}\label{assume: linearity}
	E(X | \beta\lo 0\trans X) = P\trans (\Sigma\lo X, \beta\lo 0) X
\end{align}
where $P(\Sigma\lo X, \beta) = \beta (\beta\trans \Sigma\lo X\beta)\inv \beta\trans \Sigma\lo X$ is the projection matrix onto the column space of $\beta$ under the inner product $\langle u, v\rangle = u\trans \Sigma\lo X v$. The second-order inverse regression methods that additionally involve $\var(X| Y)$, e.g. pHd, SAVE, and directional regression, require both the linearity condition (\ref{assume: linearity}) and the constant variance condition
\begin{align}\label{assume: const var}
	\var(X | \beta\lo 0\trans X) = \Sigma\lo X - \Sigma\lo X P (\Sigma\lo X, \beta\lo 0).
\end{align}
Since both conditions adopt simple parametric forms on the moments of $X|\beta\lo 0\trans X$, they can be violated in practice. Moreover, as the central SDR subspace is unknown, both (\ref{assume: linearity}) and (\ref{assume: const var}) are often strengthened to that they hold for any $\beta \in \real\udex {p \times d}$. The strengthened linearity condition requires $X$ to be elliptically distributed \citep{li1991}, and, together with the strengthened constant variance condition, $X$ must have a multivariate normal distribution \citep{cook1991}. \cite{hall1993} justified the approximate satisfaction of (\ref{assume: linearity}) for general $X$ as $p$ grows, but their theory is developed for diverging $p$, and, more importantly, it is built in a Bayesian sense that $\beta\lo 0$ follows a continuous prior distribution on $\real\udex {p\times d}$. In the literature of high-dimensional SDR \citep{chen2010, lin2016sparse,zeng2022}, sparsity is commonly assumed on the central SDR subspace so that $\beta\lo 0$ only has a few nonzero rows. Thus, \citeauthor{hall1993}'s result is often inapplicable.


To enhance the applicability of inverse regression, multiple methods have been proposed to relax the linearity condition (\ref{assume: linearity}). \cite{li2009} and \cite{dong2010} allow $E(X | \beta\lo 0\trans X)$ to lie in a general finite-dimensional functional space, and they reformulate the inverse regression methods accordingly as minimizing certain objective functions. However, the non-convexity of the objective functions may complicate the implementation, and, unlike the connection between the linearity condition and the ellipticity of $X$, the relaxed linearity condition is lack of interpretability. Another effort specified for SIR can be found in \cite{guan2017}, which assumes $X$ to follow a mixture of skew-elliptical distributions with identical shape parameters up to multiplicative scalars. \citeauthor{guan2017}'s method can be easily implemented once the mixture model of $X$ is consistently fitted.

In a certain sense, the minimum average variance estimator \citep[MAVE;][]{xia2002}, a state-of-the-art SDR method that estimates the central mean subspace $\cms$ by local linear regression, can also be regarded as relaxing the linearity condition (\ref{assume: linearity}) for OLS. Recall that OLS is commonly used to estimate the coefficients, i.e. the one-dimensional $\cms$, of the linear model of $E(Y | \beta\lo 0\trans X)$. Thus, the linearity condition (\ref{assume: linearity}) on $X$ can be regarded as a surrogate to the assumption of linear $E(Y|\beta\lo 0\trans X)$ for SDR estimation. As MAVE adopts and fits a linear $E(Y|\beta\lo 0\trans X)$ in each local neighborhood of $X$, it equivalently adopts the linearity condition in these neighborhoods, which only requires the first-order Taylor expansion of $E(X|\beta\lo 0\trans X)$ and thus is fairly general.

Following the same spirit as MAVE, the aggregate dimension reduction \citep[ADR;][]{wang2020} conducts SIR in the local neighborhoods of $X$ and thus also relaxes the linearity condition. Recently, the idea of localization was also studied in \cite{fertl2022ensemble}, who proposed both the cumulative covariance estimator (CVE) to recover $\cms$ and the enemble CVE (eCVE) to recover $\cs$. For convenience, we call MAVE, ADR, CVE, eCVE, and other SDR methods based on localized estimation uniformly the MAVE-type methods. By the nature of localized estimation, these methods share the common limitation that they quickly lose consistency as $p$ grows.

Generally, the conjugate between the parametric assumptions on $X|\beta\lo 0\trans X$, i.e. (\ref{assume: linearity}) and (\ref{assume: const var}), and the parametric assumptions on $Y|\beta\lo 0\trans X$ can be formulated under the unified SDR framework proposed in \cite{ma2012}. In the population level, each SDR method in this framework solves the estimating equation
\begin{align}\label{eq: ee general}
	E  [\{\alpha (X) - \mu\lo \alpha (\beta\trans X) \} \{ g (Y, \beta\trans X) - \mu\lo g (\beta\trans X) \}] = 0
\end{align}
for certain $\alpha (X)$ and $g (Y, \beta\trans X)$, where $\mu\lo \alpha (\beta\trans X)$ and $\mu\lo g (\beta\trans X)$ are the working models for $E\{\alpha (X) | \beta\trans X\}$ and $E\{g (Y, \beta\trans X) | \beta\trans X\}$, respectively. \cite{ma2012} justified a double-robust property that, as long as either $\mu\lo \alpha (\beta\lo 0\trans X)$ or $\mu\lo g (\beta\lo 0\trans X)$ truly specifies the corresponding regression function when $\beta$ is fixed at $\beta\lo 0$, any minimal solution to (\ref{eq: ee general}) must be included in the central SDR subspace. Up to asymptotically negligible errors, (\ref{eq: ee general}) incorporates a large family of SDR methods. For example, with $\alpha(X) = X$ and the linearity condition (\ref{assume: linearity}), one can take linear $\mu\lo \alpha ( \beta\trans X )$ and force $\mu \lo g(\beta\trans X)$ to be zero, and solving (\ref{eq: ee general}) delivers OLS if $g(Y,\beta\trans X)$ is $Y$ and delivers SIR if it is $(I(Y\lo S=1),\ldots, I(Y\lo S = H))$.

The double-robustness property for (\ref{eq: ee general}) illuminates the tradeoff for general SDR methods: if we choose to avoid parametric assumptions or nonparametric estimation on $Y|\beta\lo 0\trans X$, then we must adopt those on $X|\beta\lo 0\trans X$ instead. In principle, like the rich literature for modeling $Y|\beta\lo 0\trans X$, numerous assumptions can be adopted on $X | \beta\lo 0\trans X$ to represent different balances between parsimoniousness and flexibility. Nonetheless, under the unified framework (\ref{eq: ee general}), only the two extremes have been widely studied in the SDR literature: the most aggressive parametric models (\ref{assume: linearity}) and (\ref{assume: const var}) in inverse regression, and the most conservative nonparametric model in MAVE. Although a compromise between the two was discussed by \cite{li2009} and \cite{dong2010} as mentioned above, there is lack of parametric learning of $X | \beta\lo 0\trans X$ that specifies easily interpretable conditional moments and is often satisfied in practice.

In this paper, we adjust both the first- and second-order inverse regression methods by approximating the distribution of $X$ with mixture models, where we allow each mixture component to differ in either center or shape but it must satisfy the linearity condition (\ref{assume: linearity}) and the constant variance condition (\ref{assume: const var}). The mixture model approximation has been widely used in applications to fit generally unknown distributions \citep{lindsay1995mixture,everitt2013finite}, especially those with clustered sample support, and it permits estimating each of $E(X | \beta\lo 0\trans X)$ and $\var (X | \beta\lo 0\trans X)$ by one of multiple commonly seen parametric models chosen in a data-driven manner. As supported by numerical studies, the proposed methods also have a consistent sample performance under more complex settings, for example, if $X$ has a unimodal but curved sample support. Compared with the MAVE-type methods, our approach is more robust to the dimensionality and in particular effective under the high-dimensional settings. Compared with \cite{guan2017}, it allows the mixture components to have different shapes and is applicable for all the second-order inverse regression methods. It also alleviates the issue of non-exhaustive recovery of the central SDR subspace that persists in OLS and SIR, in a similar fashion to MAVE and ADR that capture the local patterns of data.

The rest of the paper is organized as follows. Throughout the theoretical development, we assume $X$ to be continuous and follow a mixture model, under which the parametric forms of the moments of $X | \beta\lo 0\trans X$ are studied in Section $2$. We adjust SIR in Section $3$, and further modify it towards sparsity under the high-dimensional settings in Section 4. Section 5 is devoted for the adjustment of SAVE. The simulation studies are presented in Section 6, where we evaluate the performance of the proposed methods under various settings, including the cases where $X$ does not follow a mixture model. A real data example is investigated in Section 7. The adjustment for OLS is omitted for its similarity to SIR. The adjustments of other second-order inverse regression methods, as well as the detailed implementation and complementery simulation studies, are deferred to the Appendix. R code for implementing the proposed methods can be downloaded at \url{https://github.com/Yan-Guo1120/IRMN}. For convenience, we focus on continuous $Y$ whose sample slices have an equal size, and we regard the dimension $d$ of the central SDR subspace as known a priori. The generality of the latter is briefly justified in Section 2, given the consistent estimation of $d$ discussed in Sections 3 and 5. 

\section{Preliminaries on mixture models}\label{sec:mixnorm}

We first clarify the notations. Suppose $X= (X\lo 1,\ldots, X\lo p)\trans$. For any matrix $\beta \in \real\udex {p\times d}$, let $\spn \beta$ be the column space of $\beta$ and let $\vecc(\beta)$ be the vector that stacks the columns of $\beta$ together. Like $\beta\udex {\otimes 2}$ in (\ref{eq: M}), sometimes we also denote $\beta$ itself by $\beta\udex {\otimes 1}$. We expand $P(\Sigma\lo X, \beta)$ in (\ref{assume: linearity}) to $P(\Lambda, \beta)$ for any positive definite matrix $\Lambda$, and, with $\I\lo p$ being the identity matrix, we denote $\I\lo p - P(\Lambda, \beta)$ by $Q(\Lambda, \beta)$. For a set of matrices $\{\M\lo i: i \in \{1,\ldots, k\}\}$ with equal number of rows, we denote the ensemble matrix $(\M\lo 1,\ldots, \M\lo k)$ by $(\M\lo i)\lo {i\in\{1,\ldots, k\}}$.

In the literature, the mixture model has been widely employed to approximate unknown distributions \citep{lindsay1995mixture,everitt2013finite}, both clustered and non-clustered, as it provides a wide range of choices to balance between the estimation efficiency and the modeling flexibility, especially for the local patterns of data. Application fields include, for example, astronomy, bioinformatics, ecology, and economics \citep[][Subsection 1.1]{marin2005bayesian}. Given a parametric family of probability density functions $\F = \{f(x,\alpha)\}$ on $\real\udex p$, a mixture model for $X$ means
\begin{align}\label{assume: mNormal x}
X = \tsum\lo {i=1}\udex q I(W = i) Z\udex {(i)},
\end{align}
where $q$ is an unknown integer, each mixture component $Z\udex {(i)}$ is generated from $f(\cdot,\alpha\lo i) \in \F$, $W$ is a latent variable that follows a multinomial distribution with support $\{1,\ldots, q\}$ and probabilities $\pi\lo 1,\ldots, \pi\lo q$, and $W, Z\udex {(1)},\ldots, Z\udex {(q)}$ are mutually independent. For the identifiability of (\ref{assume: mNormal x}), we require each $\pi\lo i$ to be positive and each $\alpha\lo i$ to be distinct, as well as additional regularity conditions on $\F$. The latter can be found in \cite{titterington1985}, details omitted. As a special case, when $\F$ is the family of multivariate normal distributions, (\ref{assume: mNormal x}) becomes a mixture multivariate normal distribution with $\alpha\lo i$ being $(\mu\lo i, \Sigma\lo i)$, the mean and the covariance matrix of the $i$th mixture component. Here, we allow $\Sigma\lo i$'s to differ.

For the most flexibility of the mixture model (\ref{assume: mNormal x}), one can allow the number of mixture components $q$ to be relatively large or even diverge with $n$ in practice, so that it can ``facilitate much more careful description of complex systems" \citep[][Subsection 1.2.3]{marin2005bayesian}, including those distributions who convey non-clustered sample supports. An extreme case is the kernel density estimation that uses an average of $n$ multivariate normal distributions to approximate a general distribution, which can be regarded as a mixture multivariate normal distribution mentioned above but with $q=n$. For simplicity, we set $q$ to be fixed as $n$ grows throughout the theoretical development, by which (\ref{assume: mNormal x}) is a parametric analogue of kernel density estimation, with each mixture component being a ``global" neighborhood of $X$. The setting of non-clustered $X$ approximated by mixture model will be investigated numerically in Section $6$.

Given that $\F$ truly specifies the distributions of the mixture components, the consistency of the mixture model (\ref{assume: mNormal x}) hinges on the true determination of $q$. This can be achieved by multiple methods, such as the Bayesian information criterion \citep[BIC;][]{zhao2008knee}, the integrated completed likelihood \citep{biernacki2000assessing}, the sequential testing procedure \citep{mclachlan2019finite}, and the recently proposed method based on data augmentation \citep{luo2022determine}. In the presence of these estimators, we regard $q$ as known {\it a priori} throughout the theoretical study to ease the presentation. The generality of doing so is justified in the following lemma, which shows that the asymptotic properties of any statistic that involves $q$ will be invariant if $q$ is replaced with its arbitrary consistent estimator $\widehat q$. The same invariance property also holds for using the true value of the dimension $d$ of the central SDR subspace in place of its consistent estimator, which justifies the generality of assuming $d$ known mentioned at the end of the Introduction.

\begin{lemma}\label{lemma: q hat q}
Suppose $\widehat q$ is a consistent estimator of $q$ in the sense that $\prob(\widehat q = q) \rightarrow 1$. For any statistic $S\lo n$ that involves $\widehat q$, let $R\lo n$ be constructed in the same way as $S\lo n$ but with $\widehat q$ replaced by $q$. Then we have $S\lo n - R\lo n = \sop (n\udex {-r})$ for any $r > 0$, that is, $R\lo n$ is asymptotically equivalent to $S\lo n$.
\end{lemma}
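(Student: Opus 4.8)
The plan is to condition on the event $A_n = \{\widehat q = q\}$ and exploit the fact that its probability tends to $1$. First I would observe that on the event $A_n$, the statistics $S_n$ and $R_n$ are constructed by literally the same recipe applied to the same data, so they coincide exactly: $S_n \cdot \mathbf{1}(A_n) = R_n \cdot \mathbf{1}(A_n)$, hence $(S_n - R_n)\mathbf{1}(A_n) = 0$ identically. Therefore $S_n - R_n = (S_n - R_n)\mathbf{1}(A_n^c)$, and the entire difference is supported on the vanishing-probability event $A_n^c$.

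The next step is to convert "supported on a vanishing-probability event" into the $\sop(n^{-r})$ statement for every $r>0$. For any fixed $r>0$ and any $\epsilon>0$, I would write
\begin{align*}
\prob\left( |n^{r}(S_n - R_n)| > \epsilon \right) = \prob\left( |n^{r}(S_n - R_n)| > \epsilon,\ A_n^c \right) \leq \prob(A_n^c) \rightarrow 0,
\end{align*}
where the first equality uses that the difference vanishes on $A_n$ and the inequality is trivial. Since this holds for arbitrary $\epsilon>0$, we get $n^{r}(S_n - R_n) \xrightarrow{p} 0$, i.e. $S_n - R_n = \sop(n^{-r})$, which is exactly the claimed asymptotic equivalence. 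The same argument applies verbatim with $q$ and $\widehat q$ replaced by $d$ and any consistent estimator $\widehat d$, since the only property used is $\prob(\widehat q = q)\to 1$ (respectively $\prob(\widehat d = d)\to 1$), giving the parallel invariance statement for the dimension of the central SDR subspace.

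There is essentially no analytic obstacle here; the lemma is a soft consistency-transfer argument. The only point that warrants care — and the step I would be most careful about stating precisely — is the phrase "constructed in the same way." One needs $S_n$ to be a deterministic measurable function of the data together with the value plugged in for $q$, so that feeding it $\widehat q$ versus the true $q$ produces outputs that agree pointwise on $\{\widehat q = q\}$. I would make this explicit by writing $S_n = T_n(\text{data}, \widehat q)$ and $R_n = T_n(\text{data}, q)$ for a common measurable map $T_n$, after which the coincidence on $A_n$ is immediate. No moment bounds, tightness, or rate computations for $S_n$ itself are needed, which is why the conclusion holds for every $r>0$ rather than only up to the intrinsic rate of $S_n$.
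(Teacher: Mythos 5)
Your proposal is correct and follows essentially the same route as the paper: split on the event $\{\widehat q = q\}$, use that $S\lo n$ and $R\lo n$ coincide there, and bound the remaining probability by $\prob(\widehat q \neq q) \rightarrow 0$, which gives $S\lo n - R\lo n = \sop(n\udex {-r})$ for every $r > 0$. Your added remark about writing $S\lo n = T\lo n(\mathrm{data}, \widehat q)$ for a common measurable map is a nice explicit formalization of what the paper leaves implicit, but the argument is the same.
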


\begin{proof}
For any $r > 0$ and $\delta > 0$, we have
\begin{align*}
&\prob(|S\lo n-R\lo n|>n\udex {-r} \delta)\\
& \hspace{.5cm} = \prob(|S\lo n-R\lo n|>n\udex {-r} \delta,\widehat q=q)+\prob(|S\lo n-R\lo n|>n\udex {-r } \delta,\widehat q\neq q)\\
& \hspace{.5cm} = \prob(|S\lo n-R\lo n|>n\udex {-r} \delta,\widehat q\neq q) \leq \prob(\widehat q\neq q)
\rightarrow 0
\end{align*}
The second equation above is due to $S\lo n=R\lo n$ whenever $\widehat q=q$. Thus, we have $S\lo n - R\lo n = \sop (n\udex {-r})$. This completes the proof. 
\end{proof}

Given $q$, the marginal density of $X$ is $\tsum\lo {i=1}\udex q \pi\lo i f(x,\alpha\lo i)$ up to the unknown $(\pi\lo i, \alpha\lo i)$'s, which generates a maximal likelihood approach to estimate these parameters using the EM algorithm; see \cite{cai2019} and \cite{mclachlan2019} for the relative literature.  Hereafter, we treat the estimators $\widehat \pi\lo i$ and $\widehat \alpha\lo i$ as granted, and assume their $n\udex {1/2}$-consistency and asymptotic normality whenever needed. The same applies to $\mu\lo i$ and $\Sigma\lo i$ defined above, where $\widehat \mu\lo i$ and $\widehat \Sigma\lo i$ for general parametric family $\F$ can be derived by appropriate transformations of $\widehat \alpha\lo i$.

To connect the mixture model (\ref{assume: mNormal x}) with a parametric model on the moments of $X|\beta\lo 0\trans X$, we assume that the linearity condition and the constant variance condition holds for each mixture component; that is,
for $i\in\{1,\ldots, q\}$,
\begin{align}
	& E (X | \beta\lo 0\trans X, W = i) = P\trans (\Sigma\lo i, \beta\lo 0)(X - \mu\lo i), \label{assume: linearity Wi} \\
	& \var (X | \beta\lo 0\trans X, W = i) = \Sigma\lo i - \Sigma\lo i P (\Sigma\lo i, \beta\lo 0). \label{assume: const var Wi}
\end{align}
Referring to the discussion about (\ref{assume: linearity}) and (\ref{assume: const var}) in the Introduction, this includes but is not limited to the cases where each mixture component of $X$ is multivariate normal. For clarification, hereafter we call (\ref{assume: linearity}) the overall linearity condition and call (\ref{assume: linearity Wi}) the mixture component-wise linearity condition whenever necessary, and likewise call (\ref{assume: const var}) and (\ref{assume: const var Wi}) the overall and the mixture component-wise constant variance conditions, respectively. Because $W$ is latent, we next provide two useful derivations of (\ref{assume: linearity Wi}). For each $i\in\{1,\ldots, q\}$, let $\pi\lo i(X)$ be $\prob(W = i | X)$, which is a known function up to $(\pi\lo i, \alpha\lo i)$'s, and let $\pi\lo i(\beta\trans X)$ be $\prob(W = i | \beta\trans X)$ for any $\beta\in \real\udex {p\times d}$. Generally, $\pi\lo i(\beta\trans X)$ differs from $\pi\lo i(X)$ unless in the extreme case $W\indep X | \beta\trans X$. The latter is not assumed in this article. The estimators $\widehat \pi\lo i (X)$ and $\widehat \pi\lo i(\beta\trans X)$ are readily derived from the mixture model fit.

\begin{lemma}\label{lemma: linear gen}
	Suppose $X$ follows a mixture model that satisfies (\ref{assume: linearity Wi}). For $i\in\{1,\ldots, q\}$, let $E\lo i (X | \beta\lo 0\trans X)$ be $E\{(X - \mu\lo i) \pi\lo i (X) | \beta\lo 0\trans X\}$. We have
	\begin{align}
		& E\lo i (X | \beta\lo 0\trans X) = \pi\lo i (\beta\lo 0\trans X) P\trans (\Sigma\lo i, \beta\lo 0)(X - \mu\lo i), \label{eq: mean xpi|x mnoromal} \\
		& E(X | \beta\lo 0\trans X) = \tsum\lo {i=1}\udex q \pi\lo i (\beta\lo 0\trans X) \{P\trans (\Sigma\lo i, \beta\lo 0) (X - \mu\lo i) + \mu\lo i\}.
		\label{eq: mean xbx mixNormal}
	\end{align}
\end{lemma}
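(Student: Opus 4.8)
The plan is to derive (\ref{eq: mean xpi|x mnoromal}) first and then obtain (\ref{eq: mean xbx mixNormal}) by summing over the mixture components. For the first identity, I would start from the tower property conditioning further on the latent label $W$. Writing $E\lo i(X|\beta\lo 0\trans X) = E\{(X-\mu\lo i)\pi\lo i(X)\mid\beta\lo 0\trans X\}$ and noting that $\pi\lo i(X) = E\{I(W=i)\mid X\}$, I would use iterated expectations to replace $\pi\lo i(X)$ with the indicator $I(W=i)$ at the cost of an inner conditional expectation; concretely, $E\{(X-\mu\lo i)\pi\lo i(X)\mid\beta\lo 0\trans X\} = E\{(X-\mu\lo i)I(W=i)\mid\beta\lo 0\trans X\}$ because conditioning on $\beta\lo 0\trans X$ and then on $(\beta\lo 0\trans X, W)$ collapses $I(W=i)$ to $\pi\lo i(X)$ after one more conditioning on $X$. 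Then condition on $(\beta\lo 0\trans X, W)$: on the event $W=i$ the inner expectation is $E(X-\mu\lo i\mid\beta\lo 0\trans X, W=i)$, which by the mixture component-wise linearity condition (\ref{assume: linearity Wi}) equals $P\trans(\Sigma\lo i,\beta\lo 0)(X-\mu\lo i)$ — but here one must be careful, since $X$ still appears inside this expression. The resolution is that $P\trans(\Sigma\lo i,\beta\lo 0)(X-\mu\lo i)$ is measurable with respect to $\beta\lo 0\trans X$: indeed $P\trans(\Sigma\lo i,\beta\lo 0) = \Sigma\lo i\beta\lo 0(\beta\lo 0\trans\Sigma\lo i\beta\lo 0)\inv\beta\lo 0\trans$, so $P\trans(\Sigma\lo i,\beta\lo 0)X$ depends on $X$ only through $\beta\lo 0\trans X$. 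Hence pulling this term out of the expectation and using $E\{I(W=i)\mid\beta\lo 0\trans X\} = \pi\lo i(\beta\lo 0\trans X)$ gives exactly (\ref{eq: mean xpi|x mnoromal}).

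For (\ref{eq: mean xbx mixNormal}), I would decompose $X = \tsum\lo {i=1}\udex q I(W=i)X$ and write $E(X\mid\beta\lo 0\trans X) = \tsum\lo {i=1}\udex q E\{I(W=i)X\mid\beta\lo 0\trans X\}$. Splitting $X = (X-\mu\lo i)+\mu\lo i$ inside each term, the first piece is $E\{I(W=i)(X-\mu\lo i)\mid\beta\lo 0\trans X\}$, which by the argument just given equals $\pi\lo i(\beta\lo 0\trans X)P\trans(\Sigma\lo i,\beta\lo 0)(X-\mu\lo i)$, and the second piece is $\mu\lo i E\{I(W=i)\mid\beta\lo 0\trans X\} = \mu\lo i\pi\lo i(\beta\lo 0\trans X)$. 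Adding these and summing over $i$ yields (\ref{eq: mean xbx mixNormal}). Note $\tsum\lo {i=1}\udex q\pi\lo i(\beta\lo 0\trans X) = 1$, so the formula is a genuine convex combination of the component-wise linear predictors, which also serves as a useful sanity check.

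The only delicate point — and what I would treat as the main obstacle — is the measurability maneuver: justifying that, conditionally on $W=i$, the conditional mean of $X$ given $\beta\lo 0\trans X$ is a $\beta\lo 0\trans X$-measurable affine function, so that it can legitimately be factored out of the outer expectation against $I(W=i)$. This requires observing that $P\trans(\Sigma\lo i,\beta\lo 0)(X-\mu\lo i)$ is a function of $\beta\lo 0\trans X$ alone (true because the projection acts through $\beta\lo 0\trans$), combined with care about the order of conditioning — one genuinely needs the component-wise condition (\ref{assume: linearity Wi}), which conditions on $(\beta\lo 0\trans X, W)$, rather than the unconditional linearity, because $W$ is not independent of $X$ given $\beta\lo 0\trans X$ in general (as the paper explicitly notes). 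Everything else is bookkeeping with the tower property and linearity of expectation.
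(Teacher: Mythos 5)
Your proposal is correct and follows essentially the same route as the paper: replace $\pi\lo i(X)$ by $I(W=i)$ via the tower property through $X$, condition on $(\beta\lo 0\trans X, W)$ so that the component-wise linearity condition (\ref{assume: linearity Wi}) applies on the event $W=i$, and obtain (\ref{eq: mean xbx mixNormal}) by summing over components. Your explicit remark that $P\trans(\Sigma\lo i,\beta\lo 0)(X-\mu\lo i)$ is $\beta\lo 0\trans X$-measurable is a point the paper leaves implicit, but it does not change the argument.
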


\begin{proof}
To show (\ref{eq: mean xpi|x mnoromal}), we have
\begin{align*}
E\{(X-\mu\lo i) \pi \lo i (X) | \beta\lo 0\trans X\} & = E\{(X-\mu\lo i) I(W = i) | \beta\lo 0\trans X\} \\
& = \tsum\lo {j=1}\udex q \pi\lo j(\beta\lo 0\trans X) E\{(X-\mu\lo i) I(W = i) | \beta\lo 0\trans X, W = j\} \\
& = \pi\lo i (\beta\lo 0\trans X) E(X-\mu\lo i | \beta\lo 0\trans X, W = i) \\
& = \pi\lo i (\beta\lo 0\trans X) P\trans (\Sigma\lo i, \beta\lo 0)(X - \mu\lo i)
\end{align*}
To show (\ref{eq: mean xbx mixNormal}), we have
\begin{align*}
E(X | \beta\lo 0\trans X) = E\{E(X | \beta\lo 0\trans X, W) | \beta\lo 0\trans X\} = \tsum\lo {i=1}\udex q \pi\lo i (\beta\lo 0\trans X) \{P\trans (\Sigma\lo i, \beta\lo 0) (X - \mu\lo i) + \mu\lo i\}.
\end{align*}
This completes the proof.
\end{proof}

From the proof of Lemma \ref{lemma: linear gen}, $\pi\lo i (X)$ in $E\lo i (X | \beta\lo 0\trans X)$ can be replaced with the indicator $I(W=i)$, so (\ref{eq: mean xpi|x mnoromal}) resembles the mixture component-wise linearity condition (\ref{assume: linearity Wi}) for a single mixture component of $X$. Similarly, (\ref{eq: mean xbx mixNormal}) is an ensemble of (\ref{assume: linearity Wi}) across all the mixture components of $X$.

Depending on how $\pi\lo i(\beta\lo 0\trans X)$ is distributed, that is, how $W$ is associated with $\beta\lo 0\trans X$, the functional form of $E(X | \beta\lo 0\trans X)$ in (\ref{eq: mean xbx mixNormal}) can be approximated by various parametric models. In this sense, (\ref{eq: mean xbx mixNormal}) relaxes the overall linearity condition (\ref{assume: linearity}) to numerous parametric models, among which the most appropriate is automatically chosen by the data. For example, if the mixture components of $\beta\lo 0\trans X$ are well separate, then each $\pi\lo i(\beta\lo 0\trans X)$ is close to a Bernoulli distribution and thus $E(X | \beta\lo 0\trans X)$ is nearly a linear spline of $\beta\lo 0\trans X$. This is illustrated by $E(X\lo 1 | X\lo 2)$ in the upper-left and upper-middle panels of Figure~\ref{figure: approx mean var}, where $X$ is a balanced mixture of $N((0, 2)\trans, (r\lo 1\udex {|i-j|}))$ and $N((0, -2)\trans, (r\lo 2\udex {|i-j|}))$, with $r\lo 1 = .5$ and $r\lo 2 =-.5$ in the former and both equal to $.5$ in the latter. In the other extreme, if $\pi\lo i (\beta\lo 0\trans X)$'s are degenerate, which occurs if the mixture components of $X$ are identical along the directions in $\beta\lo 0$, then $E(X | \beta\lo 0\trans X)$ reduces to satisfy the overall linearity condition (\ref{assume: linearity}). When $\pi\lo i (\beta\lo 0\trans X)$'s are continuously distributed, $E(X | \beta\lo 0\trans X)$ conveys other forms, such as a quadratic function if we change $\mu\lo 1$ and $\mu\lo 2$ in the upper-left panel to $\mu\lo 1 = -\mu\lo 2 = (0, .2)\trans$, as depicted in the upper-right panel of Figure~\ref{figure: approx mean var}.

Similar to (\ref{assume: linearity Wi}), the mixture component-wise constant variance condition (\ref{assume: const var Wi}) can also be modified in two directions towards practical use, with the aid of $\pi\lo i (X)$ and $\pi\lo i (\beta\lo 0\trans X)$, respectively.

\begin{figure}[H]
\begin{center}
\includegraphics[width=.3\textwidth]{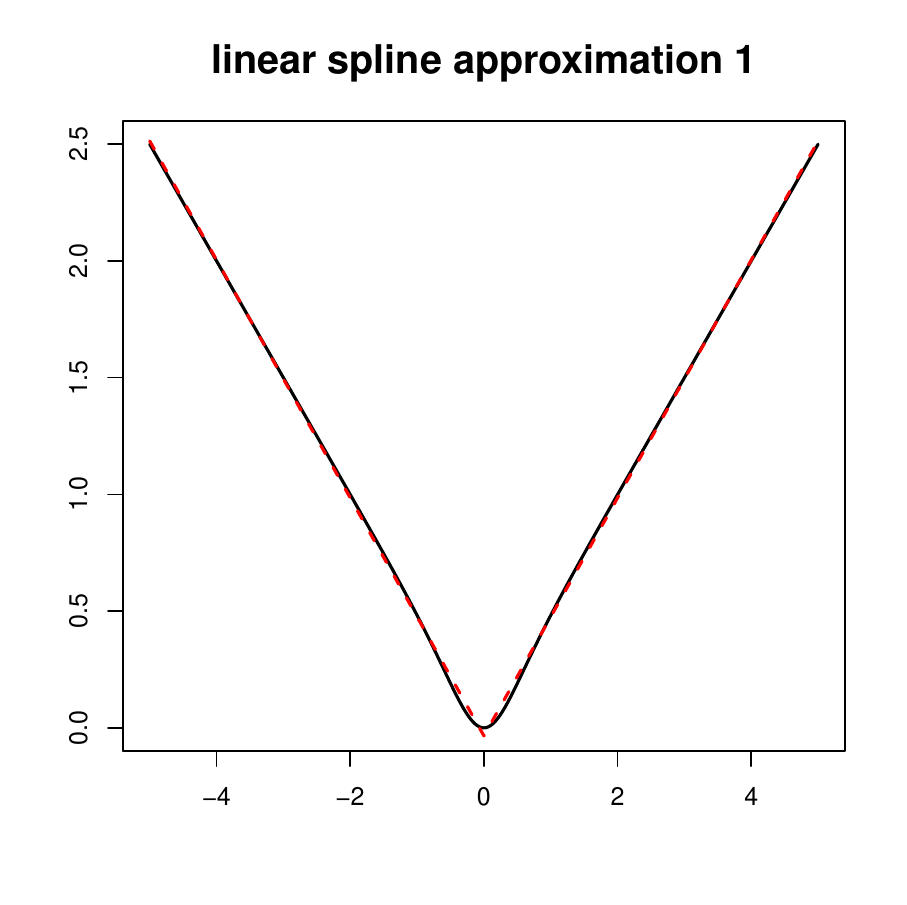}
\includegraphics[width=.3\textwidth]{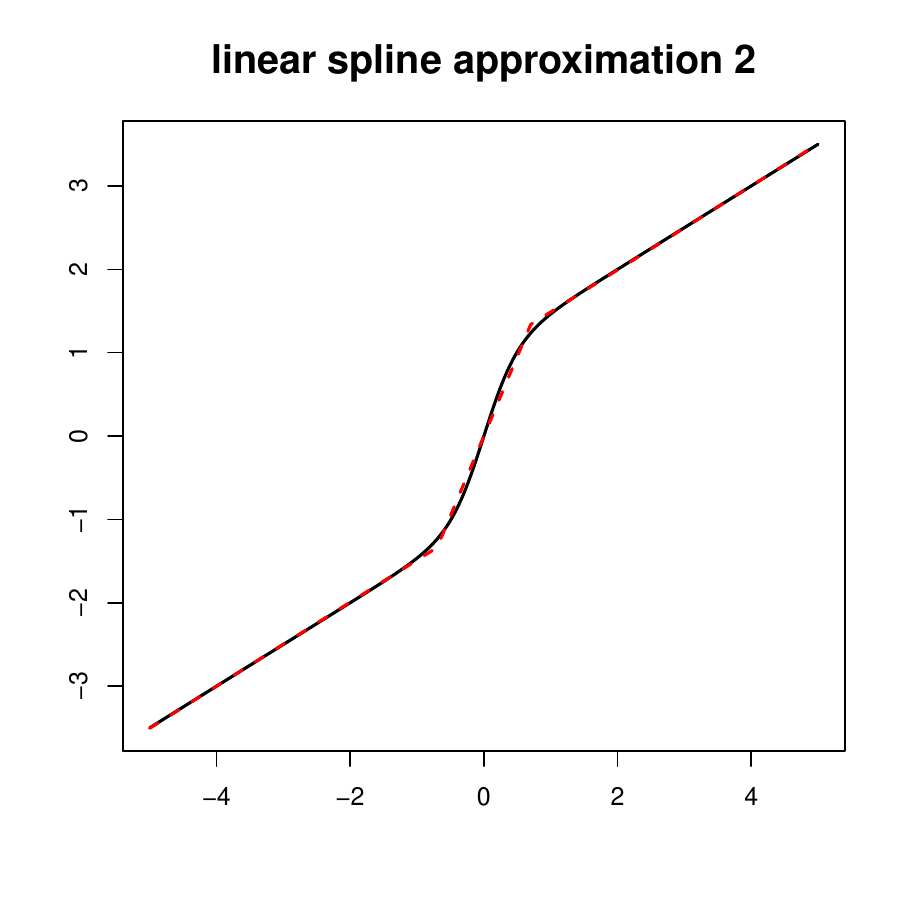}
\includegraphics[width=.3\textwidth]{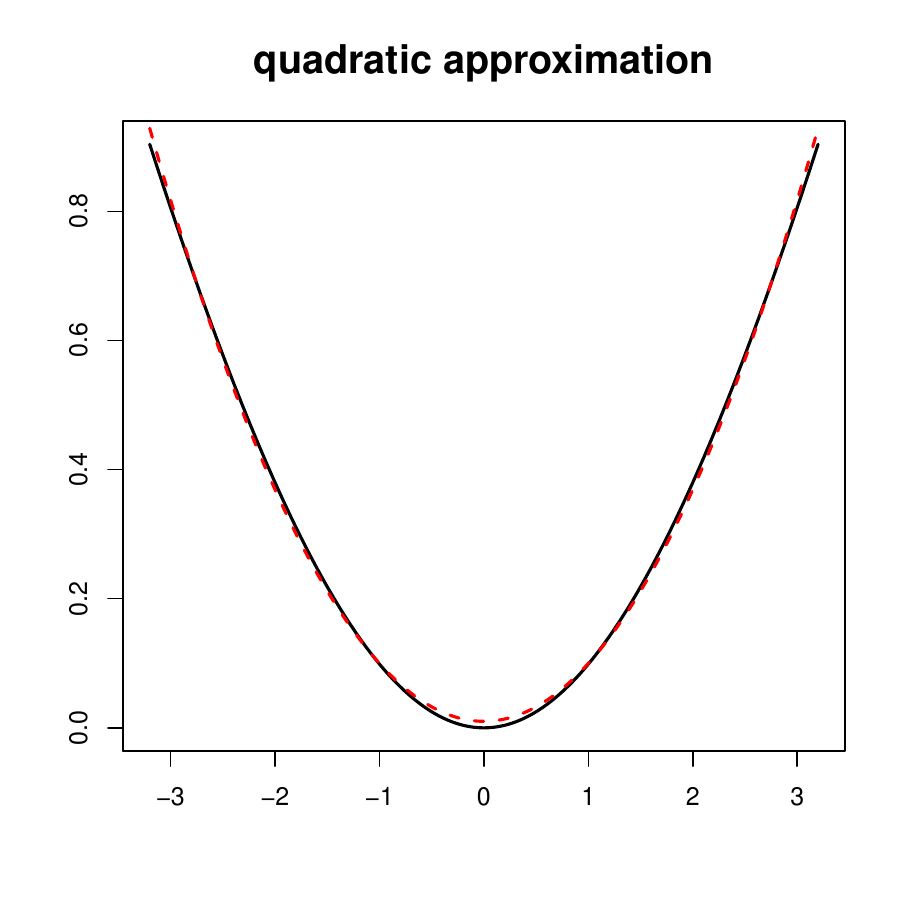}
\vspace{-.5cm}
\includegraphics[width=.3\textwidth]{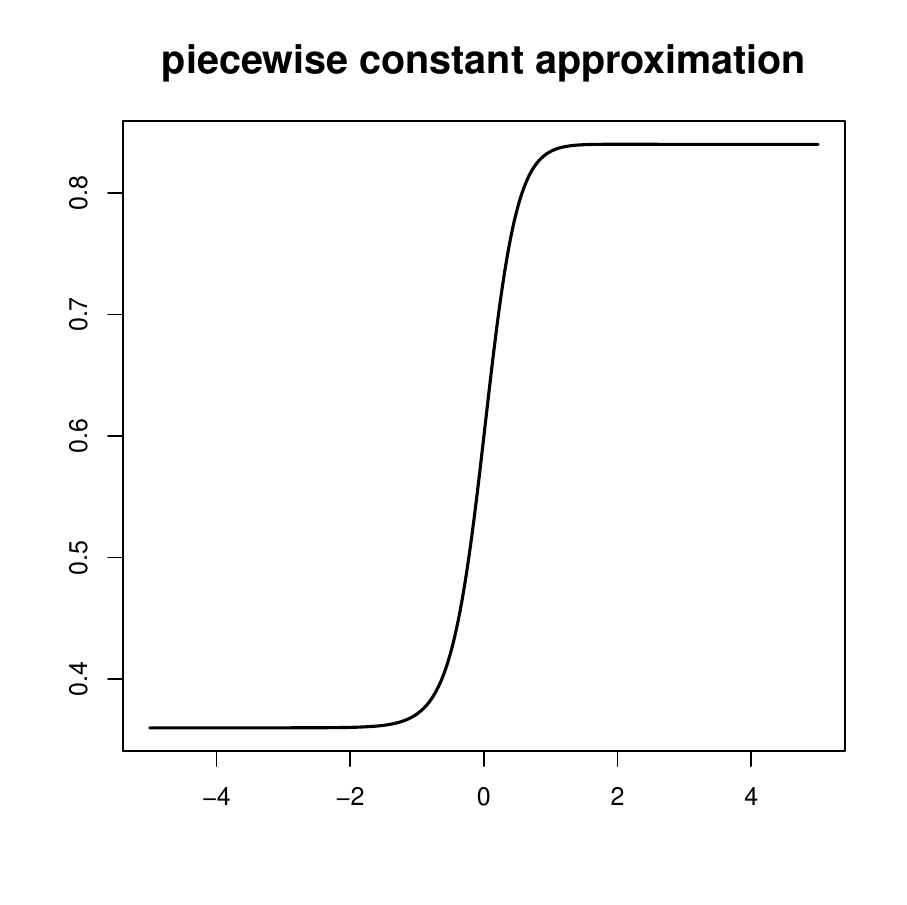}
\includegraphics[width=.3\textwidth]{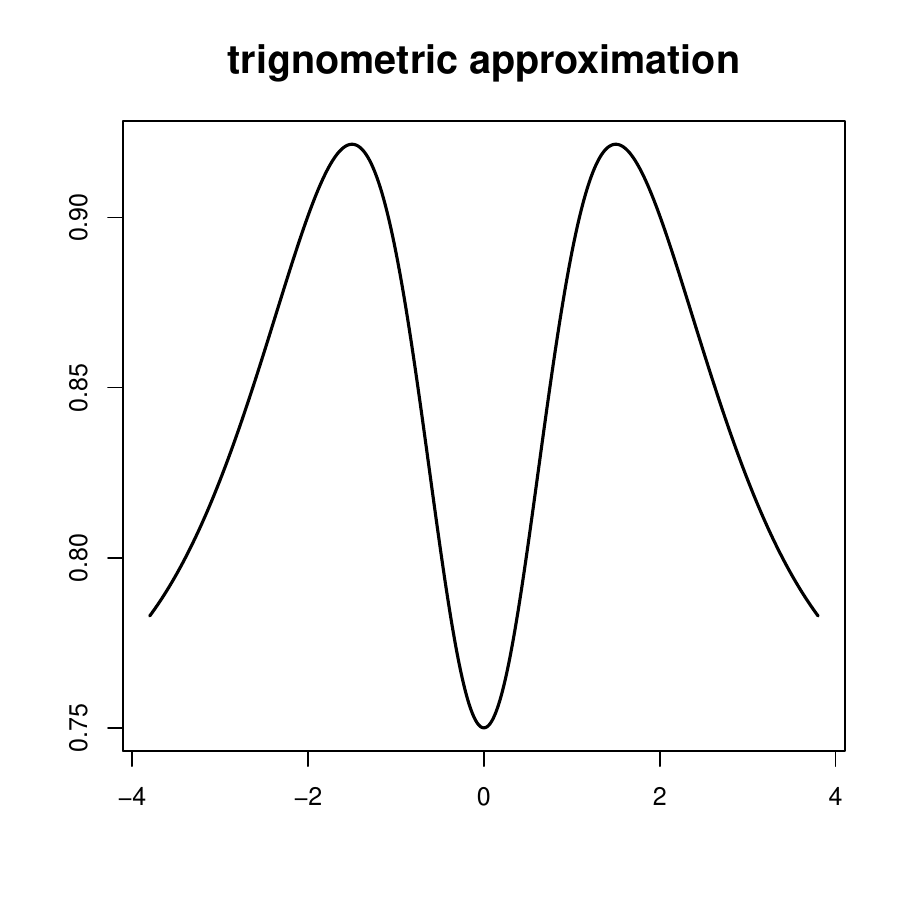}
\includegraphics[width=.3\textwidth]{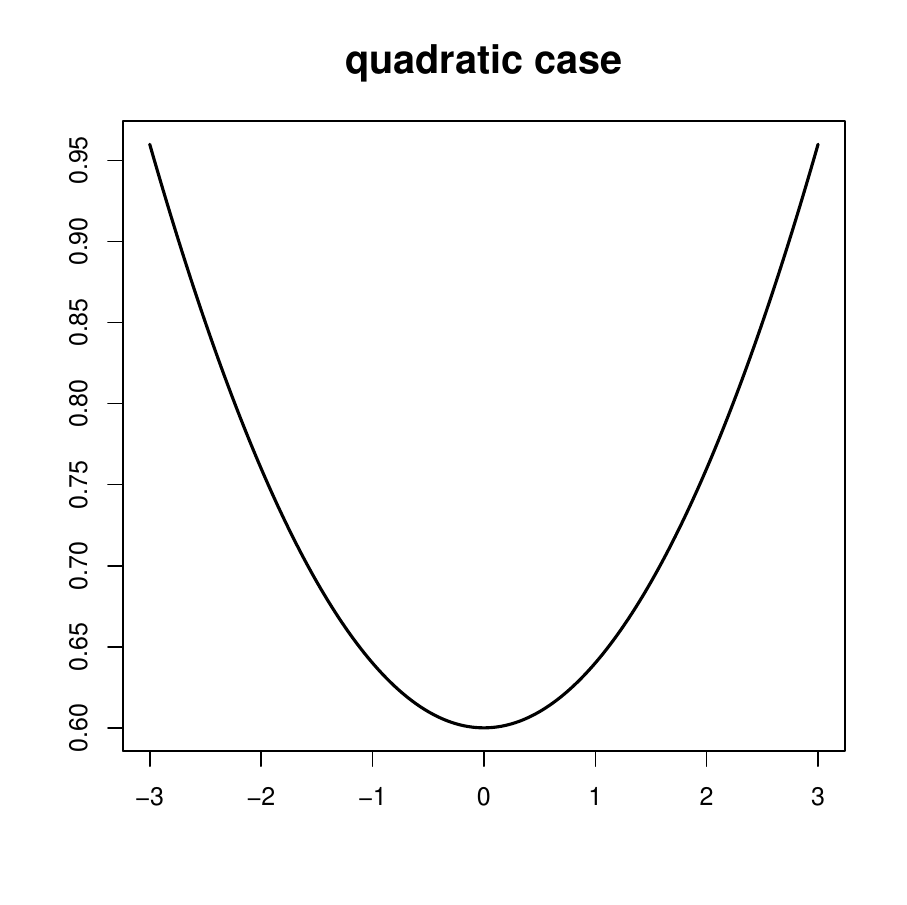}
\caption{\small In the upper panels, the solid curve in each plot is $E(X\lo 1 | X\lo 2)$, and the dashed curve is a symmetric linear spline, a monotone linear spline, and a quadratic function in the left, middle, and right panel, respectively. In the lower panels, the solid curve in each plot is $\var(X\lo 1 | X\lo 2)$ that approximates to a piecewise constant function, approximates to a trigonometric function, and is exactly quadratic, respectively.}
\label{figure: approx mean var}
\end{center}
\end{figure}

\begin{lemma}\label{lemma: con var gen}
Suppose $X$ follows a mixture model that satisfies (\ref{assume: linearity Wi}) and (\ref{assume: const var Wi}). Let $\var \lo i (X | \beta\lo 0\trans X) = E \{(X - \mu\lo i)\udex {\otimes 2} \pi\lo i (X) | \beta\lo 0\trans X\} - \{E\lo i (X | \beta\lo 0\trans X)\}\udex {\otimes 2}$ for $i = 1,\ldots, q$, and $\mu\lo {k}(\beta\lo 0\trans X) = \tsum\lo {i=1}\udex q \pi\lo i (\beta\lo 0\trans X) \{P\trans (\Sigma\lo i, \beta\lo 0) (X - \mu\lo i) + \mu\lo i\}\udex {\otimes k}$ for $k=1, 2$. We have,
\begin{align}
& \var\lo i (X | \beta\lo 0\trans X) = \pi\lo i (\beta\lo 0\trans X) [ \Sigma\lo i Q(\Sigma\lo i, \beta\lo 0) + \{1 - \pi\lo i (\beta\lo 0\trans X)\} \{P\trans (\Sigma\lo i, \beta\lo 0) (X - \mu\lo i)\}\udex {\otimes 2} ], \label{eq: var xpi|x mNormal} \\
& \var (X | \beta\lo 0\trans X) = \tsum\lo {i=1}\udex q \pi\lo i(\beta\lo 0\trans X) \Sigma\lo i Q(\Sigma\lo i, \beta\lo 0) + \mu\lo 2 (\beta\lo 0\trans X) - \mu\lo 1\udex {\otimes 2} (\beta\lo 0\trans X).
\label{eq: var xbx mixNormal}
\end{align}
\end{lemma}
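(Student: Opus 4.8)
The plan is to mimic the structure of the proof of Lemma \ref{lemma: linear gen}, working first with the indicator $I(W=i)$ in place of $\pi\lo i(X)$ via the tower property, then aggregating over $i$ by iterated expectations on $W$. For (\ref{eq: var xpi|x mNormal}), I would start from the definition and first replace $\pi\lo i(X)$ by $I(W=i)$: since $\pi\lo i(X) = E\{I(W=i)\mid X\}$ and $(X-\mu\lo i)\udex {\otimes 2}$ is $X$-measurable, $E\{(X-\mu\lo i)\udex {\otimes 2}\pi\lo i(X)\mid \beta\lo 0\trans X\} = E\{(X-\mu\lo i)\udex {\otimes 2} I(W=i)\mid \beta\lo 0\trans X\}$, exactly as in Lemma \ref{lemma: linear gen}. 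Then I would condition further on $W$:
\begin{align*}
E\{(X-\mu\lo i)\udex {\otimes 2} I(W=i)\mid \beta\lo 0\trans X\} = \pi\lo i(\beta\lo 0\trans X)\, E\{(X-\mu\lo i)\udex {\otimes 2}\mid \beta\lo 0\trans X, W=i\}.
\end{align*}
The inner conditional second moment decomposes as the conditional variance plus the outer product of the conditional mean, i.e. $\var(X\mid\beta\lo 0\trans X, W=i) + \{E(X-\mu\lo i\mid\beta\lo 0\trans X,W=i)\}\udex {\otimes 2}$; invoking (\ref{assume: const var Wi}) and (\ref{assume: linearity Wi}) turns this into $\Sigma\lo i Q(\Sigma\lo i,\beta\lo 0) + \{P\trans(\Sigma\lo i,\beta\lo 0)(X-\mu\lo i)\}\udex {\otimes 2}$. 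Multiplying by $\pi\lo i(\beta\lo 0\trans X)$ and subtracting $\{E\lo i(X\mid\beta\lo 0\trans X)\}\udex {\otimes 2} = \pi\lo i\udex 2(\beta\lo 0\trans X)\{P\trans(\Sigma\lo i,\beta\lo 0)(X-\mu\lo i)\}\udex {\otimes 2}$ from (\ref{eq: mean xpi|x mnoromal}) then collects the $\{P\trans(\Sigma\lo i,\beta\lo 0)(X-\mu\lo i)\}\udex {\otimes 2}$ terms with coefficient $\pi\lo i(\beta\lo 0\trans X) - \pi\lo i\udex 2(\beta\lo 0\trans X) = \pi\lo i(\beta\lo 0\trans X)\{1-\pi\lo i(\beta\lo 0\trans X)\}$, giving (\ref{eq: var xpi|x mNormal}).

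For (\ref{eq: var xbx mixNormal}), I would use the law of total variance across the latent $W$:
\begin{align*}
\var(X\mid\beta\lo 0\trans X) = E\{\var(X\mid\beta\lo 0\trans X, W)\mid\beta\lo 0\trans X\} + \var\{E(X\mid\beta\lo 0\trans X, W)\mid\beta\lo 0\trans X\}.
\end{align*}
The first term is $\tsum\lo {i=1}\udex q \pi\lo i(\beta\lo 0\trans X)\Sigma\lo i Q(\Sigma\lo i,\beta\lo 0)$ by (\ref{assume: const var Wi}). For the second term, note $E(X\mid\beta\lo 0\trans X, W=i) = P\trans(\Sigma\lo i,\beta\lo 0)(X-\mu\lo i) + \mu\lo i$ by (\ref{assume: linearity Wi}); the conditional variance of this $W$-indexed quantity (given $\beta\lo 0\trans X$, with $W\sim\pi\lo\cdot(\beta\lo 0\trans X)$) is its conditional second moment minus the square of its conditional mean, which are precisely $\mu\lo 2(\beta\lo 0\trans X)$ and $\mu\lo 1\udex {\otimes 2}(\beta\lo 0\trans X)$ in the stated notation. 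Adding the two pieces yields (\ref{eq: var xbx mixNormal}). Alternatively, one can derive (\ref{eq: var xbx mixNormal}) directly by summing (\ref{eq: var xpi|x mNormal}) over $i$ after adding back the cross terms $\mu\lo i\otimes$-corrections, but the total-variance route is cleaner.

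The only delicate point — and the step I would be most careful with — is the bookkeeping of the outer-product/cross terms when passing from the component-wise identity (\ref{eq: var xpi|x mNormal}) to the aggregate (\ref{eq: var xbx mixNormal}), since $\var\lo i$ is defined with the centering $\mu\lo i$ and the weight $\pi\lo i(X)$ absorbed, whereas the aggregate involves the full conditional means $P\trans(\Sigma\lo i,\beta\lo 0)(X-\mu\lo i)+\mu\lo i$ and the Jensen-type gap $\mu\lo 2(\beta\lo 0\trans X) - \mu\lo 1\udex {\otimes 2}(\beta\lo 0\trans X)$; one must check that the $\{1-\pi\lo i(\beta\lo 0\trans X)\}$ factors and the cross terms $\pi\lo i(\beta\lo 0\trans X)\pi\lo j(\beta\lo 0\trans X)$ for $i\neq j$ reassemble correctly. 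Everything else is a routine application of the tower property and of the two component-wise conditions, exactly parallel to Lemma \ref{lemma: linear gen}.
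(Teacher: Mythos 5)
Your proposal is correct and follows essentially the same route as the paper: for (\ref{eq: var xpi|x mNormal}) the paper likewise replaces $\pi\lo i(X)$ by $I(W=i)$, conditions on $W$, decomposes the conditional second moment into $\var(X\mid\beta\lo 0\trans X, W=i)$ plus the squared conditional mean, applies (\ref{assume: linearity Wi}) and (\ref{assume: const var Wi}), and then combines with (\ref{eq: mean xpi|x mnoromal}); for (\ref{eq: var xbx mixNormal}) it uses exactly the same law-of-total-variance decomposition with $\var\{E(X\mid\beta\lo 0\trans X,W)\mid\beta\lo 0\trans X\}=\mu\lo 2(\beta\lo 0\trans X)-\mu\lo 1\udex{\otimes 2}(\beta\lo 0\trans X)$. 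Your extra bookkeeping of the $\pi\lo i(\beta\lo 0\trans X)-\pi\lo i\udex 2(\beta\lo 0\trans X)$ coefficient only makes explicit what the paper leaves as ``together with (\ref{eq: mean xpi|x mnoromal})''.
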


\begin{proof}
To show (\ref{eq: var xpi|x mNormal}), we have
\begin{align*}
E \{(X-\mu\lo i)\udex {\otimes 2} \pi\lo i (X) | \beta\lo 0\trans X\} & = E [E \{(X-\mu\lo i)\udex {\otimes 2} I(W = i) | \beta\lo 0\trans X, W\} | \beta\lo 0\trans X] \\
& = \pi\lo i (\beta\lo 0\trans X) E \{(X-\mu\lo i)\udex {\otimes 2} | \beta\lo 0\trans X, W = i\} \\
& = \pi\lo i (\beta\lo 0\trans X) \{\var (X | \beta\lo 0\trans X, W = i) \\
&\qquad + E\udex {\otimes 2} (X - \mu\lo i | \beta\lo 0\trans X, W = i) \} \\
& = \pi\lo i (\beta\lo 0\trans X) [\Sigma\lo i Q(\Sigma\lo i, \beta\lo 0) + \{P\trans (\Sigma\lo i, \beta\lo 0) (X  - \mu\lo i) \}\udex {\otimes 2}].
\end{align*}
Together with (\ref{eq: mean xpi|x mnoromal}), this implies (\ref{eq: var xpi|x mNormal}).

To show (\ref{eq: var xbx mixNormal}), note that $\var\{E(X | \beta\lo 0\trans X, W) | \beta\lo 0\trans X\} = \mu\lo 2 (\beta\lo 0\trans X) - \mu\lo 1\udex {\otimes 2} (\beta\lo 0\trans X)$, by which we have
\begin{align*}
\var(X | \beta\lo 0\trans X) & = E\{\var (X | \beta\lo 0\trans X, W) | \beta\lo 0\trans X\} + \var\{E(X | \beta\lo 0\trans X, W) | \beta\lo 0\trans X\} \\
& = \tsum\lo {i=1}\udex q \pi\lo i(\beta\lo 0\trans X) \Sigma\lo i Q(\Sigma\lo i, \beta\lo 0) + \mu\lo 2 (\beta\lo 0\trans X) - \mu\lo 1\udex {\otimes 2} (\beta\lo 0\trans X).
\end{align*}
This completes the proof.
\end{proof}

Same as $E (X | \beta\lo 0\trans X)$ in (\ref{eq: mean xbx mixNormal}), the functional form of $\var (X | \beta\lo 0\trans X)$ in (\ref{eq: var xbx mixNormal}) can be approximated by various commonly seen parametric models. This is illustrated in the lower panels of Figure~\ref{figure: approx mean var}, where we again set $X$ to be a balanced mixture of $N(\mu\lo 1, (r\lo 1\udex {|i-j|}))$ and $N(-\mu\lo 1, (r\lo 2\udex {|i-j|}))$, with $(\mu\lo 1\trans, r\lo 1, r\lo 2)$ being $((0, 2), .4, .8)$, $((0, .8), .5,-.5)$, and $((0,0), .4, .8)$, respectively. The resulting $\var (X\lo 1 | X\lo 2)$ approximates to a piecewise constant function and a trigonometric function in the first two cases. In the third case, $\pi\lo i(X\lo 2)$ is degenerate, and, by (\ref{eq: var xbx mixNormal}), $\var (X\lo 1 | X\lo 2)$ is exactly a quadratic function. As mentioned below Lemma \ref{lemma: linear gen}, the overall linearity condition (\ref{assume: linearity}) is satisfied for $E(X\lo 1| X\lo 2)$ in this case. A relative discussion about the inverse regression methods under linear $E(X|\beta\lo 0\trans X)$ and quadratic $\var (X | \beta\lo 0\trans X)$, but with elliptically distributed $X$, can be found in \cite{luo2018}.

Compared with the general parametric families of $E(X | \beta\lo 0\trans X)$ discussed in \cite{li2009} and \cite{dong2010}, the proposed parametric form (\ref{eq: mean xbx mixNormal}) has a clear interpretation; that is, it is most suitable to use if the sample distribution of $X$ conveys a clustered pattern. In addition, because we model $E(X | \beta\lo 0\trans X)$ by averaging a few linear functions of $\beta\lo 0\trans X$, the result is relatively robust to the outliers of $X$. The functional form of $\var (X | \beta\lo 0\trans X)$ is also relaxed in our work, whereas it is fixed as a constant in \cite{dong2010}.

Recall that the MAVE-type methods adopt the linearity condition in the local neighborhoods of $X$, and that each mixture component in the mixture model (\ref{assume: mNormal x}) can be regarded as a global neighborhood of $X$. Since we adopt the linearity condition and the constant variance condition for each mixture component of $X$, a corresponding modification of the inverse regression methods will naturally serve as a bridge, or a compromise, that connects inverse regression with the MAVE-type methods. This modification will also inherit the advantages of both sides, including the $n\udex {1/2}$-consistency and the exhaustive recovery of the central SDR subspace, etc., as seen next.

\section{Adjusting SIR for $X$ under mixture model}\label{sec:sir}

\def\mo{\Omega\lo {1}}
\def\so{{\mathcal S}\lo 1}
\def\sro{{\mathcal S}\lo {1}^*}
\def\hmo{\widehat {\Omega}\lo {1}}
\def\hso{\widehat {\mathcal S}\lo 1}
\def\hsro{\widehat {\mathcal S}\lo {r1}}
\def\hssir{\widehat {\mathcal S}\lo {\mathrm {SIR}}}
\def\hsrsir{\widehat {\mathcal S}\lo {\mathrm {rSIR}}}
\def\Mrsir{{\Omega}\lo {\mathrm {rSIR}}}
\def\hmsir{\widehat {\Omega}\lo {\mathrm {SIR}}}
\def\hmrsir{\widehat {\Omega}\lo {\mathrm {rSIR}}}
\def\tr{{\mathrm {tr}}}

Based on the discussions in Section 2, there are two intuitive strategies for adjusting inverse regression for $X$ that follows a mixture model. First, one can conduct inverse regression on each mixture component of $X$ using (\ref{eq: mean xpi|x mnoromal}) and (\ref{eq: var xpi|x mNormal}), and then merge the SDR results. Second, one can construct and solve an appropriate estimation equation (\ref{eq: ee general}) with the aid of the functional forms (\ref{eq: mean xbx mixNormal}) for $E(X|\beta\lo 0\trans X)$ and (\ref{eq: var xbx mixNormal}) for $\var(X|\beta\lo 0\trans X)$. Because these two strategies use the terms $\pi \lo i (X)$ and $\pi \lo i (\beta\lo 0\trans X)$, respectively, which characterize the ``global" neighborhoods of $X$ and $\beta\lo 0\trans X$, they can be parallelized with MAVE and its refined version \citep{xia2002}. We start with adjusting SIR in this section, and discuss both strategies sequentially.

To formulate the first strategy, we first parametrize the SDR result specified for an individual mixture component of $X$ as the conditional central subspace on $W=i$, which is the unique subspace of $\real\udex p$ that satisfies
\begin{align}\label{assume: cs i}
Y \indep X \mid (\beta\trans X, W = i),
\end{align}
with minimal dimension. Denote this space by $\cs\udex {(i)}$. For a general non-latent variable $W$, the space spanned by the union of $\cs\udex {(i)}$'s, denoted by $\spn {\bigcup\lo {i=1}\udex q \cs\udex {(i)}}$, is called the partial central subspace in \cite{chiaromonte2002sufficient}. Naturally, the first strategy aims to estimate $\spn {\bigcup\lo {i=1}\udex q \cs\udex {(i)}}$, so it recovers $\cs$ if 
\begin{align}\label{eq: cs i ens = cs}
\cs = {\mathcal {S}} \left(\bigcup\lo {i=1}\udex q \cs\udex {(i)} \right).
\end{align}
Because $W$ is the latent variable constructed for the convenience of modeling the distribution of $X$, we can naturally assume that $W$ is uninformative to $Y$ given $X$. Intuitively, this means that the term $W=i$ can be removed from (\ref{assume: cs i}), by which $\cs \udex {(i)}$ becomes identical to $\cs$. The only issue is that the support of $X|W=i$ can be a proper subset of the support of $X$, under which $\cs \udex {(i)}$ may only capture a local pattern of $Y|X$ and becomes a proper subspace of $\cs$. Nonetheless, since the supports of $X|W=i$ for $i=1,\ldots, q$ must together cover the support of $X$, the union of $\cs \udex {(i)}$'s must span the entire $\cs$. This reasoning is rigorized in the following lemma, which justifies (\ref{eq: cs i ens = cs}) and thus justifies the consistency of the first strategy. 

\begin{lemma}\label{lemma: cs i ens = cs}
Suppose $Y\indep W | X$. Then the coincidence between $\cs$ and $\spn{\bigcup\lo {i=1}\udex q \cs\udex {(i)}}$ in (\ref{eq: cs i ens = cs}) always holds. In addition, if the support of $X|W=i$ coincides with the support of $X$ for some $i=1,\ldots, q$, then $\cs$ further coincides with $\cs\udex {(i)}$ specified for the $i$th mixture component of $X$.
\end{lemma}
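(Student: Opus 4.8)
The plan is to establish the two assertions separately, with the first being the crux. For the coincidence $\cs = {\mathcal S}(\bigcup_{i=1}^q \cs\udex {(i)})$, I would argue by double inclusion. The inclusion $\cs\udex {(i)} \subseteq \cs$ for each $i$ should follow from combining the conditional independence defining $\cs\udex {(i)}$ with the hypothesis $Y \indep W \mid X$: since $W = i$ carries no information about $Y$ beyond $X$, a basis matrix $\beta_0$ of $\cs$ satisfies $Y \indep X \mid (\beta_0\trans X, W = i)$, so $\cs\udex {(i)} \subseteq \spn{\beta_0} = \cs$ by minimality of $\cs\udex {(i)}$. Hence $\spn{\bigcup_{i=1}^q \cs\udex {(i)}} \subseteq \cs$. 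For the reverse inclusion, let $\eta$ be a basis matrix of $\spn{\bigcup_{i=1}^q \cs\udex {(i)}}$; I want to show $Y \indep X \mid \eta\trans X$, which forces $\cs \subseteq \spn\eta$. Conditioning on $W$, we have $Y \indep X \mid (\eta\trans X, W = i)$ for each $i$ because $\spn\eta \supseteq \cs\udex {(i)}$; then, using $Y \indep W \mid X$ to strip the dependence on $W$, I would integrate over the conditional distribution of $W \mid (\eta\trans X, X)$ — or more cleanly, verify that the conditional density $f_{Y \mid X}$ equals $f_{Y \mid \eta\trans X}$ pointwise by writing $f_{Y \mid X} = f_{Y \mid X, W}$ (from $Y \indep W \mid X$) and noting each $f_{Y \mid X, W = i} = f_{Y \mid \eta\trans X, W = i}$.

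The delicate point — and the main obstacle — is that the last equality only holds on the support of $X \mid W = i$, which may be a proper subset of the support of $X$; so I cannot directly conclude $f_{Y \mid X} = f_{Y \mid \eta\trans X}$ everywhere from a single component. The resolution is that the supports of $X \mid W = i$, $i = 1, \dots, q$, jointly cover the support of $X$ (since $\pi_i > 0$ for all $i$ in the identifiable mixture model), so for $\prob_X$-almost every $x$ there is at least one $i$ with $x$ in the support of $X \mid W = i$, and for that $i$ we get $f_{Y \mid X = x} = f_{Y \mid X = x, W = i} = f_{Y \mid \eta\trans X = \eta\trans x, W = i}$. To then drop the remaining $W = i$ on the right-hand side, I would observe that on the set $\{\eta\trans X = \eta\trans x\}$ the quantity $f_{Y \mid \eta\trans X, W = i}$ must agree with $f_{Y \mid \eta\trans X}$ wherever it is defined, because $f_{Y \mid \eta\trans X}$ is the mixture-weighted average of the $f_{Y \mid \eta\trans X, W = j}$'s and, by the component-wise relation plus $Y \indep W \mid X$, all the component conditionals that are ``active'' on a given fiber of $\eta\trans X$ coincide $\prob$-a.e. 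Care is needed to handle the measure-theoretic bookkeeping here (null sets, fibers of $\eta\trans X$), but no new idea is required beyond the support-covering observation.

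For the second assertion, suppose the support of $X \mid W = i$ coincides with the support of $X$ for some particular $i$. Then the argument above simplifies: the single equality $f_{Y \mid X = x} = f_{Y \mid \eta\trans X = \eta\trans x, W = i}$ holds for $\prob_X$-almost every $x$, with $\spn\eta = \cs\udex {(i)}$, giving $Y \indep X \mid (\beta\trans X)$ for a basis $\beta$ of $\cs\udex {(i)}$ and hence $\cs \subseteq \cs\udex {(i)}$; combined with $\cs\udex {(i)} \subseteq \cs$ from the first part, this yields $\cs = \cs\udex {(i)}$. I expect the whole proof to be short modulo the support-covering subtlety, which I would state explicitly as the one place where the positivity of the mixing proportions and the structure of the mixture model (\ref{assume: mNormal x}) are genuinely used.
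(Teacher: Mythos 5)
Your double-inclusion architecture matches the paper's, and two of the three pieces are fine: the inclusion $\spn {\bigcup\lo {i=1}\udex q \cs\udex {(i)}} \subseteq \cs$ follows exactly as you say from $Y\indep W\mid X$ and the minimality of each $\cs\udex {(i)}$, and your treatment of the case where the support of $X\mid W=i$ equals that of $X$ is the density-level version of the paper's one-line argument. The problem is the reverse inclusion, at precisely the step you flag as delicate and then claim needs no new idea beyond the support-covering observation.

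Having $f\lo {Y\mid X=x}=f\lo {Y\mid \eta\trans X=\eta\trans x,\, W=i}$ for some component $i$ whose support contains $x$, you must drop the conditioning on $W=i$, and your bridge --- that all component conditionals ``active'' on a given fiber of $\eta\trans X$ coincide a.e. --- does not follow from what you have established at that point, namely $Y\indep W\mid X$, the component-wise statements $Y\indep X\mid(\eta\trans X, W=i)$, positive mixing weights, and the covering of the support of $X$ by the component supports. A configuration satisfying all of these but violating the bridge: $q=2$ with components supported in $\{x\lo 1<0\}$ and $\{x\lo 1>0\}$, and $Y=I(X\lo 1>0)+\varepsilon$ with $\varepsilon\indep(X,W)$; then $Y\indep X\mid (W=i)$ for each $i$, so every $\cs\udex {(i)}$ is trivial and the relevant fiber of $\eta\trans X$ is everything, yet $f\lo {Y\mid \eta\trans X,\,W=1}$ and $f\lo {Y\mid\eta\trans X,\,W=2}$ differ. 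In effect, coincidence of the active component conditionals over a fiber is equivalent to the conditional independence you are trying to prove on that fiber, so the argument as written is circular: support covering links each $x$ to some component, but it never links the conditional laws of $Y$ at points of different components lying over the same value of $\eta\trans X$ unless their supports overlap there. Note that the paper's proof never touches fibers or supports in this direction; it works with $E\{g(Y)\mid\cdot\}$ for $g\in L\lo 2 (Y)$, derives $E\{g(Y)\mid X\}=E\{g(Y)\mid\gamma\trans X,W\}$ from $Y\indep W\mid X$ and the component-wise statements, and then disposes of $W$ in one stroke by asserting that the right-hand side, being equal to a quantity that does not involve $W$, is invariant in $W$ and hence equals $E\{g(Y)\mid\gamma\trans X\}$. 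To complete your route you would need either an invariance-in-$W$ statement of that kind or an explicit overlap/linkage argument connecting the components across each fiber; as it stands the reverse inclusion is not established.
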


\proof
Since $Y\indep W | X$, $\cs$ clearly satisfies (\ref{assume: cs i}) for each $i=1,\ldots, q$. Since $\cs\udex {(i)}$ is by definition the intersection of all the subspaces of $\real\udex p$ that satisfy (\ref{assume: cs i}) \citep{chiaromonte2002sufficient}, it must be a subspace of $\cs$. Thus, we have
\begin{align}\label{prf: cs i in cs}
{\mathcal {S}} \left(\bigcup\lo {i=1}\udex q \cs\udex {(i)} \right) \subseteq \cs.
\end{align}
Conversely, let $\gamma$ be an arbitrary basis matrix of $\spn {\bigcup\lo {i=1}\udex q \cs\udex {(i)}}$. By definition, $\gamma$ satisfies (\ref{assume: cs i}) for each $i=1,\ldots, q$, which means that, for any $g(Y) \in L\lo 2 (Y)$,
\begin{align}\label{prf: cs in cs i ens}
E\{g(Y) | X, W\} = E\{g(Y) | \gamma\trans X, W\}.
\end{align}
Since $Y\indep W | X$, we have $E\{g(Y) | X, W\} = E\{g(Y) | X\}$. Thus, (\ref{prf: cs in cs i ens}) implies 
\begin{align}\label{prf: cs in cs i ens 2}
E\{g(Y) | X\} = E\{g(Y) | \gamma\trans X, W\}, 
\end{align}
which means that $E\{g(Y) | \gamma\trans X, W\}$ is invariant as $W$ varies and thus equal to $E\{g(Y) | \gamma\trans X\}$. Hence, (\ref{prf: cs in cs i ens 2}) implies $E\{g(Y) | X\} = E\{g(Y) | \gamma\trans X)$, which, by the arbitrariness of $g(\cdot)$, means $Y\indep X |\gamma\trans X$. By the definition of $\cs$, we have $\cs \subseteq \spn {\gamma}$. Together with (\ref{prf: cs i in cs}), we have (\ref{eq: cs i ens = cs}). Suppose the support of $X|W= i$ is identical to the support of $X$. Then, for any $g(Y) \in L\lo 2 (Y)$, $E\{g(Y)|X,W=i\}$ coincides with $E\{g(Y)|X\}$ in terms of the same functional form and the same domain. The proof of $\cs = \cs\udex {(i)}$ thus follows the same as above.
\eop

The coincidence between the overall support of $X$ and that specified for an individual mixture component occurs, for example, if $X$ has a mixture multivariate normal distribution. However, such coincidence should be often unreliable in practice, as the clustered pattern of $X$ determines that each of its individual mixture components is likely clouded within a specific region of the support of $X$. Thus, we recommend using the ensemble $\spn {\bigcup\lo {i=1}\udex q \cs\udex {(i)}}$ as a conservative choice rather than using an individual $\cs\udex {(i)}$, although the former means more parameters to estimate. 

To recover $\spn {\bigcup\lo{i=1}\udex q \cs\udex {(i)}}$ by conducting SIR within the mixture components of $X$, we first formulate the slicing technic in SIR mentioned in the Introduction by replacing $Y$ in (\ref{eq: M}) with $\vec Y\lo S \equiv (I(Y\lo S=1),\ldots, I(Y\lo S = H))$, which generates 
\begin{align}\label{eq: 1st order M}
	\Sigma\lo {X}\inv E\{X \vec Y\lo S\}
\end{align}
that again spans a subspace of $\cs$ under the overall linearity condition (\ref{assume: linearity}). In light of (\ref{eq: mean xpi|x mnoromal}) that reformulates the mixture component-wise linearity condition (\ref{assume: linearity Wi}), we incorporate the grouping information for the $i$th mixture component, i.e. $\pi\lo i (X)$, into (\ref{eq: 1st order M}) and merge the results across all the mixture components. These deliver 
\begin{align}\label{eq: modimm first order}
	\mo \equiv E [\{\Sigma\lo i\inv (X - \mu\lo i) \pi\lo i (X) \vec Y\lo S\}\lo {i\in\{1,\ldots, q\}}],
\end{align}
whose linear span $\spn \mo$ is employed to recover $\cs$. Because $\spn \mo$ is constructed under the mixture model (\ref{assume: mNormal x}), we call any method that recovers $\spn \mo$ $\msir$, M in the subscript for the mixture model. As seen in the next theorem, the consistency of $\msir$ in the population level is immediately implied by Lemma \ref{lemma: linear gen}, under the mixture component-wise linearity condition (\ref{assume: linearity Wi}). 

\begin{theorem}\label{thm: so}
Suppose $X$ follows a mixture model that satisfies (\ref{assume: linearity Wi}). Then $\spn {\mo}$ is always a subspace of $\cs$. In addition, if, for any $v\! \in \!\cs$, there exists $i\in \{1,\ldots, q\}$ and $h\in \{1,\ldots, H\}$ such that $v \in \spn {\Sigma\lo i\inv E(X - \mu\lo i|W =i, Y\lo S= h )}$, then $\spn {\mo}$ is identical to $\cs$.
\end{theorem}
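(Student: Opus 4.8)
The plan is to write the columns of $\mo$ in closed form and read off both assertions from that form. Decompose $\mo=(\mo\udex {(i)})\lo {i\in\{1,\ldots,q\}}$ as in (\ref{eq: modimm first order}), where the block $\mo\udex {(i)}=E\{\Sigma\lo i\inv(X-\mu\lo i)\pi\lo i(X)\vec Y\lo S\}$ has $p$ rows and $H$ columns, its $h$th column being
\begin{align*}
c\lo {ih}\;:=\;\Sigma\lo i\inv E\{(X-\mu\lo i)\pi\lo i(X) I(Y\lo S=h)\},\qquad 1\le i\le q,\ 1\le h\le H.
\end{align*}
Then $\spn\mo=\spn{\{c\lo {ih}:\ 1\le i\le q,\ 1\le h\le H\}}$, so it suffices to show (i) $c\lo {ih}\in\cs$ for all $i,h$, which yields $\spn\mo\subseteq\cs$; and (ii) under the stated condition every vector named in the hypothesis lies in $\spn\mo$, which together with (i) forces $\spn\mo=\cs$.

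For (i), I would start from the SDR relation (\ref{assume: sdr cs}), $Y\indep X | \beta\lo 0\trans X$. Since $Y\lo S$ is a function of $Y$ alone, conditioning on $\beta\lo 0\trans X$ decouples the factor $I(Y\lo S=h)$ from the $X$-measurable factor $(X-\mu\lo i)\pi\lo i(X)$, so by iterated expectations
\begin{align*}
E\{(X-\mu\lo i)\pi\lo i(X) I(Y\lo S=h)\}\;=\;E\big[\, E\{(X-\mu\lo i)\pi\lo i(X) | \beta\lo 0\trans X\}\;\prob(Y\lo S=h | \beta\lo 0\trans X)\,\big].
\end{align*}
By (\ref{eq: mean xpi|x mnoromal}) of Lemma \ref{lemma: linear gen} --- which is where the mixture component-wise linearity condition (\ref{assume: linearity Wi}) enters --- the inner expectation equals $\pi\lo i(\beta\lo 0\trans X)\,P\trans(\Sigma\lo i,\beta\lo 0)(X-\mu\lo i)$. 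Since $\Sigma\lo i\inv P\trans(\Sigma\lo i,\beta\lo 0)=\beta\lo 0(\beta\lo 0\trans\Sigma\lo i\beta\lo 0)\inv\beta\lo 0\trans$ has column space $\spn{\beta\lo 0}$, rearranging the scalar factors and pulling the constant matrices out of the expectation gives $c\lo {ih}\in\spn{\beta\lo 0}=\cs$. This establishes the first assertion.

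For (ii), I would invoke the assumption maintained throughout this section that the latent label is uninformative about the response given the predictor, $Y\indep W | X$ (cf.\ Lemma \ref{lemma: cs i ens = cs}). Then $\pi\lo i(X)=\prob(W=i | X)=\prob(W=i | X,Y)$, so iterated expectations give
\begin{align*}
E\{(X-\mu\lo i)\pi\lo i(X) I(Y\lo S=h)\} &= E\{(X-\mu\lo i) I(W=i) I(Y\lo S=h)\} \\
&= \prob(W=i,Y\lo S=h)\, E(X-\mu\lo i | W=i,Y\lo S=h)
\end{align*}
whenever $\prob(W=i,Y\lo S=h)>0$, while the left-hand side is $0$ otherwise. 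Hence each column $c\lo {ih}$ is a scalar multiple of $\Sigma\lo i\inv E(X-\mu\lo i | W=i,Y\lo S=h)$ --- the latter being defined, and being invoked by the hypothesis, only on cells of positive probability --- so $\spn\mo$ equals the span of the collection $\{\Sigma\lo i\inv E(X-\mu\lo i | W=i,Y\lo S=h)\}$ over all such $(i,h)$. In particular $\spn\mo$ contains each space $\spn{\Sigma\lo i\inv E(X-\mu\lo i | W=i,Y\lo S=h)}$ appearing in the hypothesis, so every $v\in\cs$ lies in $\spn\mo$; that is, $\cs\subseteq\spn\mo$, and with (i) we conclude $\spn\mo=\cs$.

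I expect the main obstacle to be the bookkeeping with the latent $W$: justifying the replacement of $\pi\lo i(X)$ by $I(W=i)$ inside the expectation (which is exactly where $Y\indep W | X$ is used), carrying out the factorization across $\beta\lo 0\trans X$ correctly using that $Y\lo S$ depends on $Y$ only, and handling cells $\{W=i,Y\lo S=h\}$ of probability zero so that the identification of the columns of $\mo$ with the vectors in the hypothesis is well-posed. The remaining linear algebra --- in essence the identity $\Sigma\lo i\inv P\trans(\Sigma\lo i,\beta\lo 0)=\beta\lo 0(\beta\lo 0\trans\Sigma\lo i\beta\lo 0)\inv\beta\lo 0\trans$ and the fact that its range is $\cs$ --- is routine.
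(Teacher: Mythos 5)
Your proof of the inclusion $\spn{\mo}\subseteq\cs$ is essentially the paper's own argument: condition on $\beta\lo 0\trans X$, use $Y\indep X\mid\beta\lo 0\trans X$ to decouple $\vec Y\lo S$ from the $X$-measurable factor, apply (\ref{eq: mean xpi|x mnoromal}), and observe that $\Sigma\lo i\inv P\trans(\Sigma\lo i,\beta\lo 0)=\beta\lo 0(\beta\lo 0\trans\Sigma\lo i\beta\lo 0)\inv\beta\lo 0\trans$ has column space $\spn{\beta\lo 0}$. For the exhaustiveness part the paper only states that the proof is straightforward and omits it; your identification of each column of $\mo$ with $\prob(W=i,Y\lo S=h)\,\Sigma\lo i\inv E(X-\mu\lo i\mid W=i,Y\lo S=h)$ (zero on null cells) supplies it correctly, with the one caveat that this step genuinely requires $Y\indep W\mid X$ so that $\pi\lo i(X)$ may be replaced by $I(W=i)$ inside the expectation; that condition is not listed among the theorem's hypotheses but is the standing assumption the paper makes about the latent label in Section 3 (cf.\ Lemma \ref{lemma: cs i ens = cs}), and making it explicit, as you do, is the right reading of the exhaustiveness condition.
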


\begin{proof}
For each $i=1,\ldots, q$, we have
\begin{align*}
E\{(X - \mu\lo i ) \pi\lo i (X) \vec Y\lo S\} & = E[E\{(X - \mu\lo i ) \pi\lo i (X) | \beta\lo 0\trans X, \vec Y\lo S\} \vec Y\lo S] \\
& = E[E\{(X - \mu\lo i ) \pi \lo i (X) | \beta\lo 0\trans X\} \vec Y\lo S],
\end{align*}
which, by (\ref{eq: mean xpi|x mnoromal}), means $\Sigma\lo i\inv E\{(X - \mu\lo i ) \pi\lo i (X) \vec Y\lo S\} \in \cs$. This implies $\spn {\mo} \subseteq \cs$. The second statement of the theorem is straightforward, so its proof is omitted.
\end{proof}

Referring to the generality of the mixture model discussed in Section 2, we regard $\msir$ as building a path that connects SIR and ADR. In one extreme, it coincides with SIR if $X$ has only one mixture component or more generally if both $\beta\lo 0\trans \Sigma\lo i \beta\lo 0$ and $\beta\lo 0\trans \mu\lo i$ are invariant across different mixture components of $X$; the coincidence in the latter case is readily implied by (\ref{eq: mean xpi|x mnoromal}), under which $\mo$ is a list of $q$ identical copies of (\ref{eq: 1st order M}). In the other extreme, if we let $q$ diverge with $n$, then the mixture components will resemble the local neighborhoods of $X$ and thus $\msir$ will essentially resemble ADR. Generally, depending on the underlying distribution of $X$, $\msir$ will provide a data-driven balance between the estimation efficiency and the reliability of SDR results. 

Similarly to the exhaustiveness of ADR, an important advantage of $\msir$ over SIR is that its exhaustiveness is more general, as justified in Theorem \ref{thm: so}. In particular, it can detect symmetric associations between $Y$ and $X$, which cannot be captured by SIR. To see this, suppose $X$ follows a symmetric mixture model with respect to the origin, and
\begin{align}\label{model: quadratic}
	Y = X\lo 1\udex 2 + \epsilon
\end{align}
where $\epsilon$ is a random error. Then $\msir$ can clearly detect $X\lo 1$ as long as $X\lo 1 | Y$ is asymmetric with respect to zero for at least one mixture component, which occurs if one of $\mu\lo i$'s has nonzero first entry. Figure~\ref{figure: model} illustrates this with $X\lo 1$ being a balanced mixture of $N(-2, 1)$ and $N(2, 1)$. In general, the asymmetry of $\beta\lo 0\trans X | Y$ is satisfied for at least one mixture component of $X$ if $\alpha\lo i$'s differ along the directions in $\cs$, in which case the multiplicative weight $\pi\lo i (X)$ in $\mo$ disturbs any symmetric pattern between $\beta\lo 0\trans X$ and $Y$ into asymmetry. Since $\alpha\lo i$'s are regulated to differ from each other, with a sufficiently large $q$ and assuming a continuous prior distribution for $\spn {\beta\lo 0}$ \citep{hall1993}, the exhaustiveness of $\msir$ is satisfied with probability one in a Bayesian sense.

\begin{figure}[!htbp]
\begin{center}
\includegraphics[scale=.35]{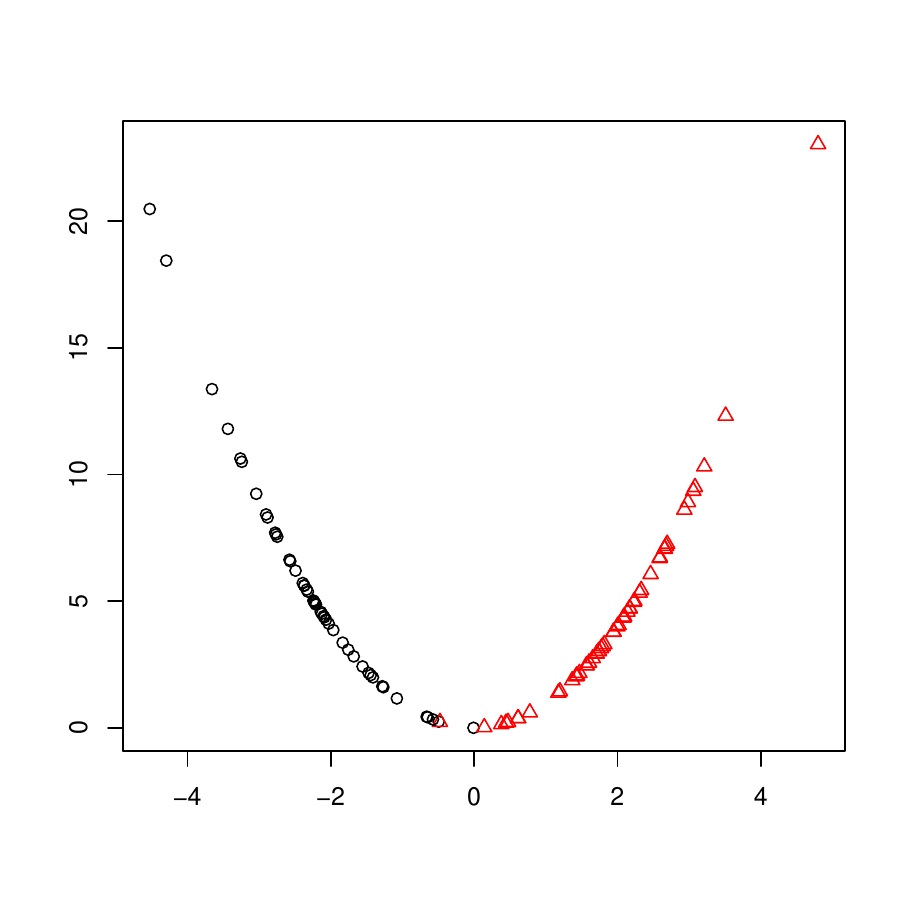}
\caption{\small The scatter plot of $(X\lo 1, X\lo 1\udex 2)$ with $X\lo 1 \sim .5N(-2, 1) + .5 N(2, 1)$ and $n=100$: ``$\circ$" for the first mixture component, and ``$\triangle$" for the second mixture component.}
\label{figure: model}
\end{center}
\end{figure}


Related to this, another advantage of $\msir$ over SIR is that it allows more flexibility in choosing the number of slices $H$. When implementing SIR in practice, researchers need to address the tradeoff in selecting $H$: while a small $H$ means better estimation accuracy due to less number of parameters, it elevates the risk of non-exhaustive recovery of $\cs$, as SIR can only recover up to $H-1$ directions. By contrast, $\msir$ can recover up to $q(H-1)$ directions, relieving the freedom of choosing a coarse slicing to enhance the estimation accuracy. This advantage is crucial when $Y$ is discrete or categorical with a small number of categories, in which case $H$ must also be small and the issue of non-exhaustiveness is less worrisome for $\msir$. The same advantage is also enjoyed by ADR, i.e. the localized SIR, which sets $H=2$ automatically. Again, this complies with the connection between $\msir$ and ADR discussed below Theorem \ref{thm: so}.  


To implement $\msir$ in the sample level, we can replace $\mu\lo i$, $\Sigma\lo i$, and $\pi\lo i (X)$ in $\mo$ with $\widehat \mu\lo i$, $\widehat \Sigma\lo i$, and $\widehat \pi\lo i (X)$ mentioned in Section~\ref{sec:mixnorm}, and replace the population mean in $\mo$ with the sample mean. The resulting $\hmo$ is $n\udex {1/2}$-consistent and asymptotically normal. By \cite{luo2016ladle}, this permits using the ladle estimator on $\hmo$ to determine the dimension $d$ of $\cs$, given the identity between $\spn \mo$ and $\cs$ discussed above. When the sample size is limited compared with the dimension $p$ or the number of mixture components $q$, we also recommend using the predictor augmentation estimator \citep[PAE;][]{luo2021pae} to determine $d$, which does not require the asymptotic normality of $\hmo$. As discussed above Lemma \ref{lemma: q hat q} in Section 2, these estimators permit us to safely assume $d$ to be known throughout the theoretical study. Given $\hmo$, we use its first $d$ left singular vectors to span an $n\udex {1/2}$-consistent estimator of $\cs$. The proofs of these results are simple, so they are omitted. The details about the implementation of $\msir$ is deferred to Appendix B.

When the dimension of $X$ is large, the estimation bias in the clustering stage, especially in estimating $\Sigma\lo i\inv$, will adversely affect $\msir$. One way to address this issue is to impose additional constraints on the distribution of $X$; see Section~\ref{sec:sirHD} later. Alternatively, observe that $\beta\lo 0\trans X$ follows a mixture model in $\real\udex d$. The low dimensionality of $\pi\lo i (\beta\lo 0\trans X)$ makes its estimation more accurate than that of $\pi\lo i (X)$, i.e. regardless of the dimension of $X$, given a consistent initial estimator of $\beta\lo 0$. As $\pi\lo i (\beta\lo 0\trans X)$ can be rewritten as $E\{\pi\lo i (X)|\beta\lo 0\trans X\}$, the enhanced estimation power can also be understood as benefited from targeting at the conditional mean of $\pi\lo i (X)$, which is intuitively smoother than $\pi\lo i (X)$ itself. In addition, (\ref{eq: mean xbx mixNormal}) implies that $E(X | \beta\lo 0\trans X)$ can be estimated by linearly regressing $X$ on $\{\pi\lo i (\beta\lo 0\trans X), \pi\lo i (\beta\lo 0\trans X) \beta\lo 0\trans X\}\lo {i\in\{1,\ldots, q\}}$, which does not involve $\Sigma\lo i$'s and thus is robust to the dimension of $X$.

Based on these observations, we now turn to the second strategy of estimating $\cs$ mentioned in the beginning of this section, which serves as a refinement of $\msir$. Namely, we use (\ref{eq: mean xbx mixNormal}) to construct an appropriate estimating equation (\ref{eq: ee general}) for $\cs$. The consistency of this approach is endorsed by \cite{ma2012}, as $E(\alpha(X) | \beta\trans X)$ in (\ref{eq: ee general}) needs to be truly specified only when $\spn \beta$ coincides with $\spn {\beta\lo 0}$.

Assume that $\pi\lo i (\beta\lo 0\trans X)$ is known a priori, where in practice we can replace $\spn {\beta\lo 0}$ with a consistent estimator such as $\msir$. Let
\begin{align}\label{eq: RX}
	R\lo X(\spn \beta) \equiv \tsum\lo {i=1}\udex q \pi\lo i (\beta\lo 0\trans X) \{P\trans(\Sigma\lo i, \beta) (X - \mu\lo i ) + \mu\lo i \}.
\end{align}
By (\ref{eq: mean xbx mixNormal}), $R\lo X(\cs)$ coincides with $E(X| \beta\lo 0\trans X)$. We explain later why using $\beta\lo 0$ instead of the argument $\beta$ in $\pi\lo i (\beta\lo 0\trans X)$. To measure the relative ``importance" of an outcome to the SDR estimation, we introduce
{\small {
\begin{align}\label{eq: D by Stein's lemma}
	D (\beta\lo 0\trans X) \equiv \tsum\lo {i=1}\udex q \tsum\lo {j=1}\udex q (\beta\lo 0\trans \Sigma\lo j \beta\lo 0)\inv E\{ \beta\lo 0 \trans (X - \mu\lo j) \vec Y\lo S \pi\lo i (\beta\lo 0 \trans X) \pi\lo j (\beta\lo 0\trans X) \} \pi\lo i(\beta\lo 0\trans X).
\end{align} }}

\noindent
This term is best interpretable when $\beta\lo 0\trans X $ follows a mixture normal distribution: by Stein's Lemma, it can rewritten as a weighted average of the derivative of the regression function $E(\vec Y\lo S | \beta\lo 0\trans X)$ with respect to $\beta\lo 0\trans X$, i.e.
\begin{align}\label{eq: D original}
	D(\beta\lo 0\trans X) = \tsum\lo {i=1}\udex q E[\partial \{E(\vec Y\lo S | \beta\lo 0\trans X) \pi\lo i(\beta\lo 0\trans X)\}  / \partial \beta\lo 0 \trans X ] \pi\lo i(\beta\lo 0\trans X),
\end{align}
which reveals the strength of the point-wise effect of $\beta\lo 0\trans X$ on regressing $\vec Y\lo S$.

We now plug $R\lo X(\spn \beta)$ and $D(\beta\lo 0\trans X)$ into (\ref{eq: ee general}), with $\mu \lo g (\beta\trans X)$ set at zero as permitted by the double-robust property. To avoid obscures caused by dimensionality \citep{ma2013}, we rewrite (\ref{eq: ee general}) as minimizing the magnitude of its left-hand side over all the $d$-dimensional subspaces of $\real\udex p$; that is, we minimize
\begin{align}\label{eq: obj first order}
	\go (\spn \beta) \equiv \left\| E [\{X - R\lo X (\spn \beta)\} \vecc\trans \{D(\beta\lo 0\trans X) \diag (\vec Y\lo S)\} ] \right\|\lo F\udex 2,
\end{align}
where $\|\cdot\|\lo F$ denotes the Frobenius norm of a matrix and $\diag (v)$ transforms any vector $v$ to the diagonal matrix with $v$ being the diagonal. Same as for $\pi\lo i(\beta\lo 0\trans X)$, $\cs$ in $D(\beta\lo 0\trans X)$ will be replaced with $\msir$, the explanation for using $D(\beta\lo 0\trans X)$ instead of $D(\beta\trans X)$ deferred to later. Clearly, the minimum value of $\go (\cdot)$ is zero, and can be reached by $\cs$. Following \cite{ma2012}, under fairly general conditions, the intersection of all the minimizers of $\go (\cdot)$ also minimizes $\go (\cdot)$, which means that it is the unique minimizer of $\go (\cdot)$ of the smallest dimension, and it is always a subspace of $\cs$ under the mixture component-wise linearity condition (\ref{assume: linearity Wi}). We call any method that recovers this space $\rmsir$, R in the subscript for refined.

Similar to $\msir$, the exhaustiveness of $\rmsir$ depends on the variability of the association between $Y$ and $\beta\lo 0\trans X$ across different mixture components of $\beta\lo 0\trans X$. In the special case that the $\Sigma\lo i$'s are identical to each other, the following theorem justifies a sufficient condition for the exhaustiveness of $\rmsir$.

\begin{theorem}\label{thm: sro}
Suppose $X$ follows a mixture model that has identical $\Sigma\lo i$'s and satisfies (\ref{assume: linearity Wi}). If $[E \{\tsum\lo {i=1}\udex q \pi\lo i (\beta\lo 0\trans X) \pi\lo j (\beta\lo 0\trans X) \beta\lo 0\trans (X - \mu\lo i) \vec Y\lo S\}]\lo {j\in\{1,\ldots, q\}}$ has rank $d$, then $\rmsir$ exhaustively recovers $\cs$.
\end{theorem}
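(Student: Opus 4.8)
The plan is to use the hypothesis that all the $\Sigma\lo i$'s coincide, say $\Sigma\lo i=\Sigma$, to collapse the objective $\go(\cdot)$ in (\ref{eq: obj first order}) to the squared Frobenius norm of a single $\beta$-free matrix $\Omega$, so that exhaustiveness of $\rmsir$ becomes literally the statement $\mathrm{rank}(\Omega)=d$, and then to identify $\mathrm{rank}(\Omega)$ with the rank of the ensemble matrix in the hypothesis.

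First I would simplify the residual in (\ref{eq: obj first order}). With $\bar\mu(\beta\lo 0\trans X)\equiv\tsum\lo {i=1}\udex q\pi\lo i(\beta\lo 0\trans X)\mu\lo i$, the identity $P\trans(\Sigma,\beta)(X-\mu\lo i)+\mu\lo i=P\trans(\Sigma,\beta)X+Q\trans(\Sigma,\beta)\mu\lo i$ and $\tsum\lo i\pi\lo i(\beta\lo 0\trans X)=1$ reduce (\ref{eq: RX}) to $R\lo X(\spn\beta)=P\trans(\Sigma,\beta)X+Q\trans(\Sigma,\beta)\bar\mu(\beta\lo 0\trans X)$, whence $X-R\lo X(\spn\beta)=Q\trans(\Sigma,\beta)\{X-\bar\mu(\beta\lo 0\trans X)\}$. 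Since $Q\trans(\Sigma,\beta)$ is nonrandom it factors out of the expectation, giving $\go(\spn\beta)=\|Q\trans(\Sigma,\beta)\,\Omega\|\lo F\udex 2$ with the fixed $p\times(dH)$ matrix $\Omega\equiv E[\{X-\bar\mu(\beta\lo 0\trans X)\}\vecc\trans\{D(\beta\lo 0\trans X)\diag(\vec Y\lo S)\}]$. Thus $\go(\spn\beta)=0$ iff $\spn{\Sigma\inv\Omega}\subseteq\spn\beta$, so the minimizers of $\go(\cdot)$ are exactly the subspaces containing $\spn{\Sigma\inv\Omega}$, and their intersection — what $\rmsir$ recovers — is $\spn{\Sigma\inv\Omega}$. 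This intersection lies in $\cs$ by the double-robustness of \cite{ma2012}; alternatively, conditioning on $(\beta\lo 0\trans X,Y)$ and using $Y\indep X\mid\beta\lo 0\trans X$ with (\ref{eq: mean xbx mixNormal}) (again under $\Sigma\lo i=\Sigma$) gives $E\{X-\bar\mu(\beta\lo 0\trans X)\mid\beta\lo 0\trans X,Y\}=P\trans(\Sigma,\beta\lo 0)\{X-\bar\mu(\beta\lo 0\trans X)\}$, so every column of $\Omega$ lies in $\spn{\Sigma\beta\lo 0}$, i.e. $\spn{\Sigma\inv\Omega}\subseteq\spn{\beta\lo 0}=\cs$. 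It therefore suffices to show $\mathrm{rank}(\Omega)=d$, since then $\spn{\Sigma\inv\Omega}$ is a $d$-dimensional subspace of $\cs$ and hence equals $\cs$.

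Next I would evaluate $\mathrm{rank}(\Omega)$. The displayed identity for $E\{X-\bar\mu(\beta\lo 0\trans X)\mid\beta\lo 0\trans X,Y\}$ gives $\Omega=\Sigma\beta\lo 0(\beta\lo 0\trans\Sigma\beta\lo 0)\inv E[\widetilde U\,\vecc\trans\{D(\beta\lo 0\trans X)\diag(\vec Y\lo S)\}]$, where $\widetilde U\equiv\beta\lo 0\trans X-\beta\lo 0\trans\bar\mu(\beta\lo 0\trans X)=\tsum\lo {i=1}\udex q\pi\lo i(\beta\lo 0\trans X)\beta\lo 0\trans(X-\mu\lo i)$ is precisely the vector inside the ensemble matrix of the statement; as $\Sigma\beta\lo 0(\beta\lo 0\trans\Sigma\beta\lo 0)\inv$ has full column rank $d$, $\mathrm{rank}(\Omega)$ equals the rank of $E[\widetilde U\,\vecc\trans\{D(\beta\lo 0\trans X)\diag(\vec Y\lo S)\}]$. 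Then I would unwind $D(\beta\lo 0\trans X)$: with $\Sigma\lo i=\Sigma$ the inner sum over $j$ in (\ref{eq: D by Stein's lemma}) telescopes, $\tsum\lo j\pi\lo j(\beta\lo 0\trans X)\beta\lo 0\trans(X-\mu\lo j)=\widetilde U$, yielding $D(\beta\lo 0\trans X)=(\beta\lo 0\trans\Sigma\beta\lo 0)\inv\tsum\lo {i=1}\udex q\pi\lo i(\beta\lo 0\trans X)A\lo i$ with $A\lo i\equiv E\{\pi\lo i(\beta\lo 0\trans X)\widetilde U\,\vec Y\lo S\}$, so that $(A\lo 1,\dots,A\lo q)$ is exactly the ensemble matrix in the hypothesis. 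Substituting and conditioning once more on $\beta\lo 0\trans X$, and using $E\{\pi\lo i(\beta\lo 0\trans X)\widetilde U\,I(Y\lo S=h)\}=a\lo {ih}$ where $a\lo {ih}$ is the $h$th column of $A\lo i$, the $h$th $d\times d$ block of $E[\widetilde U\,\vecc\trans\{D(\beta\lo 0\trans X)\diag(\vec Y\lo S)\}]$ reduces to $(\tsum\lo {i=1}\udex q a\lo {ih}a\lo {ih}\trans)(\beta\lo 0\trans\Sigma\beta\lo 0)\inv$, whose column space is $\spn{\{a\lo {ih}:1\le i\le q\}}$. Hence the column space of the whole matrix is $\spn{\{a\lo {ih}:1\le i\le q,\ 1\le h\le H\}}$, equal to the column space of $(A\lo 1,\dots,A\lo q)$, so $\mathrm{rank}(\Omega)=\mathrm{rank}(A\lo 1,\dots,A\lo q)=d$ by hypothesis.

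I expect the main obstacle to be bookkeeping rather than ideas: keeping the $d\times H$ factors, the $dH$-vectors and the $p\times(dH)$ matrices conforming through the two telescoping identities and the repeated ``condition on $\beta\lo 0\trans X$'' steps. The one genuinely substantive point, and the only place the assumption of identical $\Sigma\lo i$'s is used, is the factorization $X-R\lo X(\spn\beta)=Q\trans(\Sigma,\beta)\{X-\bar\mu(\beta\lo 0\trans X)\}$: it is what turns $\go(\cdot)$ into the squared norm of a single $\beta$-free matrix and thereby reduces exhaustiveness to a rank condition. When the $\Sigma\lo i$'s differ, $R\lo X(\spn\beta)$ blends the incompatible projections $P\trans(\Sigma\lo i,\beta)$ and this reduction breaks down, which is why the theorem is confined to the equal-covariance case.
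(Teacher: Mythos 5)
Your proof is correct, and it takes a genuinely different (though closely related) route from the paper's. The paper argues by positivity at proper subspaces: it fixes $\spn \beta \subsetneq \cs$, completes $\beta$ to $\beta\lo 0 = (\beta,\gamma)$, and computes $\gamma\trans \Psi(\spn\beta) = (0,A)\tsum\lo {j=1}\udex q M\lo j\udex {\otimes 2}$, where $M\lo j$ is exactly the $j$th block of the hypothesized ensemble matrix premultiplied by $(\beta\lo 0\trans\Sigma\beta\lo 0)\inv$; the rank-$d$ condition makes $\tsum\lo j M\lo j\udex {\otimes 2}$ invertible, hence $\go(\spn\beta)>0$ on every proper subspace of $\cs$, and exhaustiveness follows because the smallest-dimension minimizer lies in $\cs$ and attains the minimum value zero. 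You instead use the equal covariances once, globally, to factor $X - R\lo X(\spn\beta) = Q\trans(\Sigma,\beta)\{X-\bar\mu(\beta\lo 0\trans X)\}$, collapse the objective to $\go(\spn\beta) = \|Q\trans(\Sigma,\beta)\,\Omega\|\lo F\udex 2$ with a $\beta$-free $\Omega$, identify the recovered space in closed form as $\spn {\Sigma\inv\Omega}$, show it lies in $\cs$ (your conditional-expectation argument is a valid self-contained substitute for citing double robustness, since $Y\indep X\mid\beta\lo 0\trans X$ holds by definition of $\cs$), and reduce exhaustiveness to $\mathrm{rank}(\Omega)=d$, which you obtain by telescoping $D(\beta\lo 0\trans X) = (\beta\lo 0\trans\Sigma\beta\lo 0)\inv\tsum\lo i \pi\lo i(\beta\lo 0\trans X) A\lo i$ and noting that your $A\lo j$ is precisely the $j$th block of the matrix in the hypothesis; I checked the factorization, the telescoping, and the blockwise column-space computation $(\tsum\lo i a\lo {ih} a\lo {ih}\trans)(\beta\lo 0\trans\Sigma\beta\lo 0)\inv$, and they are all sound. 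The two arguments rest on the same algebraic ingredients, but yours buys an explicit description of what $\rmsir$ recovers, namely $\spn{\Sigma\inv\Omega}$, even when the rank condition fails, and it avoids choosing a complement $\gamma$ (and the attendant Schur-complement bookkeeping in the paper's computation of $\gamma\trans\{X - R\lo X(\spn\beta)\}$), whereas the paper's componentwise computation makes visible exactly how unequal $\Sigma\lo i$'s would break the argument, which is the same structural point you make at the end of your proposal.
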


\begin{proof}
For any $\beta$ such that $\spn \beta$ is a proper subspace of $\cs$, suppose $\spn \beta$ is $k$-dimensional, $\beta$ is of full column rank, $\beta\lo0$ is a basis of $\cs$, and, without loss of generality, $(\beta, \gamma) = \beta\lo 0$ for some semi-orthogonal $\gamma \in \real\udex {p\times (d-k)}$. We have $P(\Sigma\lo i, \beta)\gamma = \beta (\beta\trans \Sigma\lo i \beta)\inv (\beta\trans \Sigma\lo i \gamma)$ for $i\in\{1,\ldots, q\}$, which, by (\ref{eq: RX}), means
\begin{align*}
	\gamma\trans\{X - R\lo X(\spn\beta)\} & = \tsum\lo i\udex q \pi\lo i (\beta\lo 0\trans X) \{\gamma\trans - (\gamma\trans \Sigma\lo i \beta) (\beta\trans \Sigma\lo i \beta)\inv \beta\trans\} (X - \mu\lo i) \\
	& = \tsum\lo i \udex q \pi\lo i (\beta\lo 0\trans X) (0, A\lo i) (\beta\lo 0\trans \Sigma\lo i \beta\lo 0)\inv \beta\lo 0\trans (X - \mu\lo i)
\end{align*}
where $A\lo i = \gamma\trans \Sigma\lo i \gamma - (\gamma\trans \Sigma\lo i \beta) (\beta\trans \Sigma\lo i \beta) (\beta\trans \Sigma\lo i \gamma)$. Since $\Sigma\lo i$ is invariant of $i$, as is $A\lo i$. Denote the former by $\Sigma$ and the latter by $A$, and write $\go (\spn \beta)$ as $\|\Psi(\spn \beta)\|\lo F\udex 2$. We then have
\begin{align*}
	\gamma\trans \Psi (\spn\beta) & = E[\{\tsum\lo {i=1} \udex q \pi\lo i (\beta\lo 0\trans X) (0, A ) (\beta\lo 0\trans \Sigma\lo i \beta\lo 0)\inv \beta\lo 0\trans (X - \mu\lo i) \vec Y\lo S\} D\trans (\beta\lo 0\trans X)  ] \\
	& = (0,A) \tsum\lo {j=1}\udex q E\udex {\otimes 2}[\tsum\lo {i=1}\udex q \pi\lo i (\beta\lo 0\trans X) \pi\lo j (\beta\lo 0\trans X) (\beta\lo 0\trans \Sigma\lo i \beta\lo 0)\inv \beta\lo 0\trans (X - \mu\lo i) \vec Y\lo S] \\
	& \equiv (0, A) \tsum\lo {j=1}\udex q M\lo j\udex {\otimes 2}.
\end{align*}
Because $\Sigma$ is invertible, $A$ must be nonzero. Hence, if $\tsum\lo {j=1}\udex q M\lo j\udex {\otimes 2} = (M\lo j)\lo {j\in\{1,\ldots, q\}}\udex {\otimes 2}$ is invertible, which is the condition adopted in this theorem, then $\gamma\trans \Psi (\spn\beta)$ is always nonzero, and, as $\go (\spn \beta) \geq \|\gamma\gamma\trans \Psi (\spn \beta)\|\lo F\udex 2$, $\go (\spn \beta)$ must also be nonzero for any $\beta$ that spans a proper subspace of $\cs$. Thus, $\rmsir$ exhaustively recovers $\cs$. This completes the proof.
\end{proof}

The exhaustiveness condition in Theorem \ref{thm: sro} essentially requires that, for any direction $v$ in $\cs$, the association between $v\trans X$ and $Y$ is asymmetric in at least one mixture component of $\beta\lo 0\trans X$. Again, this will be likely to hold in a Bayesian sense if the number of mixture components $q$ is sufficiently large in the mixture model of $X$, and it will be trivially true if we let $q$ diverge with $n$, e.g. in the scenario of approximating non-clustered distribution of $X$ by the mixture model. Same as $\msir$, depending on the complexity of $W$, $\rmsir$ builds a path from SIR that adopts the aggressive parametric assumption on $X$ to the fully nonparametric SDR methods, i.e. of MAVE-type, that require the weakest assumption on $X$. In one extreme that $W$ is degenerate, $D(\beta\lo 0\trans X)$ will also be degenerate and $R\lo X (\spn \beta)$ will reduce to a linear $E(X | \beta\trans X)$, which together imply the identity between $\rmsir$ and SIR. In the other extreme that $W$ is identical to $\beta\lo 0\trans X$, the integrand in (\ref{eq: obj first order}) would resemble the efficient score for homoscedastic data \citep{luo2014}, had $\beta\lo 0$ in the integrand been replaced with $\beta$ and $E\{\vec Y\lo S | \beta\trans X\}$ been consistently estimated instead of being misspecified at zero.

The concern behind the use of $\beta\lo 0$ instead of the argument $\beta$ in $\pi\lo i (\beta\lo 0\trans X)$ and $D (\beta\lo 0\trans X)$ is as follows. If we replace $\beta\lo 0$ with the argument $\beta$ in both $\pi\lo i (\beta\lo 0\trans X)$ in $R\lo X (\spn \beta)$ and $D(\beta\lo 0\trans X)$, then the unique minimizer of (\ref{eq: obj first order}) of smallest dimension, if exists, will still fall in $\cs$. However, the corresponding exhaustiveness will require stronger assumptions. For example, this space will be trivially zero-dimensional in Model (\ref{model: quadratic}), whereas $\rmsir$ still recovers the one-dimensional $\cs$. For this reason, we decide not to do so when constructing $\go (\cdot)$.

To implement $\rmsir$, we assume its exhaustiveness as well as the exhaustiveness of $\msir$, and we start with estimating $\pi\lo i (\beta\lo 0\trans X)$ by rewriting this term as $E(\pi\lo i(X) | \beta\lo 0\trans X)$ and running a kernel regression of $\widehat \pi\lo i(X)$ on $\widetilde \beta\trans X$, where $\spn {\widetilde \beta}$ denotes the result of $\msir$. By plugging $\widehat \mu\lo i$'s, $\widehat \Sigma\lo i$'s, and the estimates of $\pi\lo i (\beta\lo 0\trans X)$'s into (\ref{eq: obj first order}), we have $\hgo (\cdot)$ that consistently estimates $\go (\cdot)$. $\rmsir$ is then the unique minimizer of $\hgo (\cdot)$, which is $n\udex {1/2}$-consistent and asymptotically normal.
Referring to the discussions above, it can outperform $\msir$ in the sample level, which is analogous to the relative effectiveness of the refined MAVE with respect to MAVE \citep{xia2002}. For continuity of the context, we leave the detailed form of $\hgo (\cdot)$ and the algorithm for its minimization to Appendix B.


As mentioned in the Introduction, when $X$ follows a mixture skew-elliptical distribution, \cite{guan2017} proposed the generalized Stein's Lemma based method (StI) to consistently recover $\cs$. Due to the violation of the mixture component-wise linearity condition (\ref{assume: linearity Wi}), both $\msir$ and $\rmsir$ are theoretically inconsistent under \citeauthor{guan2017}'s settings. Nonetheless, while StI requires identical $\Sigma\lo i$'s up to multiplicative scalars and uses the average of SDR results across different mixture components, we allow $\Sigma\lo i$'s to vary freely and we collect the information for SDR from all the mixture components to form the final result. Therefore, our approaches are sometimes exclusively useful and are more likely to recover $\cs$ exhaustively. As seen next, our approaches can also be extended to the high-dimensional settings, as well as to the second-order inverse regression methods.

\section{An extension towards the high-dimensional sparse settings}\label{sec:sirHD}

\def\hgssir{\widehat G\lo {\mathrm {SIR}}\udex S}
\def\hgrssir{\widehat G\lo {\mathrm {rSIR}}\udex S}
\def\hbsirsp{\widehat {\beta}\lo {\mathrm {SIR}}\udex {{\mathrm S}}}
\def\hssirsp{\widehat {\mathcal S}\lo {\mathrm {S-SIR}}}
\def\hsrsirsp{\widehat {\mathcal S}\lo {\mathrm {rSIR}}\udex {\mathrm S}}
\def\hbsiror{\widehat {\beta}\lo {\mathrm {SIR}}\udex {{\mathrm O}}}


In the literature, SIR has been widely studied under the high-dimensional settings, with the aid of the sparsity assumption on $\cs$ that only a few rows of $\beta\lo 0$ are nonzero. Representative works include \cite{chen2010}, \cite{li2007sparse}, \cite{lin2019}, and \cite{tan2018}, etc. A common strategy employed in these works is to reformulate SIR as a least squares method and then incorporate penalty functions. Accordingly, the overall linearity condition (\ref{assume: linearity}) is commonly adopted, which, as mentioned in the Introduction, can be restrictive in practice due to the sparsity assumption.

We now modify $\msir$ and $\rmsir$ towards sparsity under the high-dimensional settings. For the applicability of existing clustering analysis, here we follow \cite{cai2019} to assume that $X$ has a mixture multivariate normal distribution with the number of mixture components $q$ known {\it a priori} and all $\Sigma\lo i$'s identical to some invertible $\Sigma$, and that
\begin{align}\label{assume: sparse mixnorm}
	\Sigma\inv (\mu\lo i - \mu\lo j) \mbox{ is sparse for all } i \neq j,
\end{align}
under which the consistent estimators $\widehat \mu\lo i$'s, $\widehat \Sigma$, and $\widehat \pi\lo i (X)$'s have been derived in \cite{cai2019}. Given the mixture model fit, we modify $\msir$ and $\rmsir$ following the spirit of \cite{tan2018} for its computational efficiency.

\def\fM{{\mathcal {M}}}

Following \cite{tan2018}, we change the parametrization in SDR to $\Pi = \beta\udex {\otimes 2}$ and further extend the parameter space to the set of $p$-dimensional positive semi-definite matrices, denoted by $\fM$. Let
\begin{align}\label{eq: SIR1 HD}
	\M\lo {E(X|Y)} \equiv  \left[[E\{(X - \mu\lo i) \pi\lo i (X) \vec Y\lo S\}]\lo {i\in\{1,\ldots, q\}}\right] \udex {\otimes 2},
\end{align}
and $\widehat \M\lo {E(X|Y)}$ be its estimator using $\widehat \mu\lo i$'s and $\widehat \pi\lo i (X)$'s. We minimize
\begin{align}\label{eq: sparse SIR HD}
	- \tr(\widehat \M\lo {E(X|Y)} \Pi) + \rho \|\Pi\|\lo 1\mbox{ subject to } \tr(\widehat \Sigma \udex {1/2}  \Pi \widehat \Sigma\udex {1/2}) \leq d, \|\widehat \Sigma \udex {1/2}  \Pi \widehat \Sigma\udex {1/2}\|\lo {\mathrm {sp}} \leq 1,
\end{align}
over $\Pi \in \fM$, where $\rho$ is a tuning parameter, $\tr(\cdot)$ and $\|\cdot\|\lo {\mathrm {sp}}$ denote the trace and the spectral norm of a matrix, respectively, and $\|A\|\lo 1 = \tsum\lo {i,j} |a\lo {ij}|$ for any matrix $A  = (a\lo {ij}) \in \fM$.

Because (\ref{eq: sparse SIR HD}) is a convex minimization problem, it has the unique minimizer, which we denote by $\widehat \Pi$. By simple algebra, $\widehat \Pi$ must be on the boundary of the constraints in (\ref{eq: sparse SIR HD}), and must have the form $\widehat \beta\udex {\otimes 2}$ for some $\widehat \beta \in \real\udex {p\times d}$ with $\widehat \Sigma\udex {1/2} \widehat \beta$ being semi-orthogonal. We estimate $\cs$ by $\spn {\widehat \beta}$, and call this estimator the sparse $\msir$ or simply $\smsir$ due to the natural bond between $\M\lo {E(X|Y)}$ and $\mo$ defined in (\ref{eq: modimm first order}). Its consistency is justified in the following theorem, where we use 
\begin{align}\label{eq: distance S hat S}
\delta(\widehat \beta, \beta\lo 0) \equiv \| P(\I\lo p, \widehat \beta) - P(\I\lo p, \beta\lo 0) \|\lo F
\end{align}
to measure the distance between $\spn {\widehat \beta}$ and $\cs$.

\begin{theorem}\label{thm: sparse sir}
Let $s\lo 1=\max\lo {i,j\in\{1,\ldots, q\}}\|\Sigma\inv (\mu\lo i - \mu\lo j)\|\lo \infty$ under the sparsity assumption (\ref{assume: sparse mixnorm}), let $s\lo 2$ be the number of nonzero rows of $\beta\lo 0$ under the sparse SDR assumption, and let $\lambda\lo d$ be the $d$th largest eigenvalue of $\Sigma\udex {-1/2} \M\lo {E(X|Y)} \Sigma\udex {-1/2}$. Under the regularity conditions (C1)-(C3) (see Appendix A), if $d\udex 2 s\lo 1 s\lo 2 \log p = o(n)$, $s\lo 1 s\lo 2\udex 2 \log p = o(n)$, $d s\lo 1 \log p =o ( n \rho \lambda\lo d)$, and $\rho s\lo 2 = o( \lambda\lo d)$, then we have
\begin{align}\label{eq: sparse SIR HD consistency}
	D(\widehat \beta, \beta\lo 0) = \bop (d s\lo 1 \log p  / ( n \rho \lambda\lo d ) + \rho s\lo 2 / \lambda\lo d + d (s\lo 1 s\lo 2 \log p / n)\udex {1/2} ).
\end{align}
\end{theorem}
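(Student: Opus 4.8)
The plan is to adapt the convex-programming and Fantope analysis developed by \cite{tan2018} for sparse SIR to the mixture-model kernel $\M\lo {E(X|Y)}$. The first step is to record the population geometry. Since all $\Sigma\lo i$ equal $\Sigma$, writing $M\lo 0 = [[E\{(X - \mu\lo i) \pi\lo i (X) \vec Y\lo S\}]\lo {i\in\{1,\ldots, q\}}]$ we have $\mo = \Sigma\inv M\lo 0$, hence $\M\lo {E(X|Y)} = M\lo 0\udex {\otimes 2}$ and $\Sigma\nhalf \M\lo {E(X|Y)} \Sigma\nhalf = (\Sigma\nhalf M\lo 0)\udex {\otimes 2}$, whose column space is $\Sigma\half \spn{\mo}$. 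Assuming $\lambda\lo d > 0$, which by Theorem~\ref{thm: so} amounts to the exhaustiveness of $\msir$, this column space equals $\Sigma\half \cs$ and $\Sigma\nhalf \M\lo {E(X|Y)} \Sigma\nhalf$ has rank exactly $d$ with $(d+1)$th eigenvalue $0$. Under the change of variables $B = \Sigma\half \Pi \Sigma\half$, the feasible set of (\ref{eq: sparse SIR HD}) becomes the Fantope $\{B \succeq 0 : \|B\|\lo {\mathrm {sp}} \leq 1,\ \tr B \leq d\}$, on which $B\lo 0 := \Sigma\half \beta\lo 0 \beta\lo 0\trans \Sigma\half$ is the unique maximizer of $B \mapsto \langle \Sigma\nhalf \M\lo {E(X|Y)} \Sigma\nhalf, B\rangle$, identifying the population target $\Pi\lo 0 = \beta\lo 0\udex {\otimes 2}$. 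The Fantope curvature inequality then gives $\langle \Sigma\nhalf \M\lo {E(X|Y)} \Sigma\nhalf, B\lo 0 - B\rangle \geq (\lambda\lo d / 2) \|B - B\lo 0\|\lo F\udex 2$ for every feasible $B$, and the eigenvalue bounds on $\Sigma$ from (C1)--(C3) make $\|B - B\lo 0\|\lo F$ equivalent, up to multiplicative constants, to $D(\widehat\beta, \beta\lo 0)$ when $B = \Sigma\half \widehat\Pi \Sigma\half$.

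The second step is a deviation bound for $\widehat\M\lo {E(X|Y)}$. Writing $\widehat\M\lo {E(X|Y)} - \M\lo {E(X|Y)} = (\widehat M\lo 0 - M\lo 0) M\lo 0\trans + M\lo 0 (\widehat M\lo 0 - M\lo 0)\trans + (\widehat M\lo 0 - M\lo 0)\udex {\otimes 2}$, one separates $\widehat M\lo 0 - M\lo 0$ into the bias from replacing $\mu\lo i$, $\Sigma$, and $\pi\lo i (X)$ by the mixture-fit estimators, whose rates under the sparsity assumption (\ref{assume: sparse mixnorm}) are supplied by \cite{cai2019} and scale with $s\lo 1$ and $s\lo 2$, and the empirical-average error, which is of order $(\log p / n)\half$ in the entrywise maximum norm by sub-Gaussian concentration under (C1)--(C3). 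This yields a bound on $\|\widehat\M\lo {E(X|Y)} - \M\lo {E(X|Y)}\|\lo {\max}$ together with a direct control (not routed through the $\ell\lo 1$ penalty) of the deterministic approximation error of $\widehat\M\lo {E(X|Y)}$ that ultimately contributes the third term $d(s\lo 1 s\lo 2 \log p / n)\half$ of (\ref{eq: sparse SIR HD consistency}); the scaling conditions $d\udex 2 s\lo 1 s\lo 2 \log p = o(n)$ and $s\lo 1 s\lo 2\udex 2 \log p = o(n)$ guarantee that the event on which these bounds hold and $\Pi\lo 0$ is feasible for the $\widehat\Sigma$-constraints has probability tending to one.

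The third step assembles the estimate. On that event, optimality of $\widehat\Pi$ for (\ref{eq: sparse SIR HD}) gives the basic inequality $-\tr(\widehat\M\lo {E(X|Y)} \widehat\Pi) + \rho \|\widehat\Pi\|\lo 1 \leq -\tr(\widehat\M\lo {E(X|Y)} \Pi\lo 0) + \rho \|\Pi\lo 0\|\lo 1$. Splitting $\langle \widehat\M\lo {E(X|Y)}, \cdot\rangle$ into its population part and the deviation, applying the curvature lower bound of Step~1 to the former and H\"older's inequality, through $\|\widehat\M\lo {E(X|Y)} - \M\lo {E(X|Y)}\|\lo {\max}$ and $\|\Delta\|\lo 1$ with $\Delta = \widehat\Pi - \Pi\lo 0$, to the latter, and using $\|\widehat\Pi\|\lo 1 - \|\Pi\lo 0\|\lo 1 \geq \|\Delta\lo {S^{c}}\|\lo 1 - \|\Delta\lo S\|\lo 1$ with $S$ indexing the at most $s\lo 2\udex 2$ nonzero entries of $\Pi\lo 0 = \beta\lo 0\udex {\otimes 2}$, one first derives a cone condition $\|\Delta\lo {S^{c}}\|\lo 1 \lesssim \|\Delta\lo S\|\lo 1$ --- this is where $\rho$ must dominate the stochastic maximum-norm deviation, forcing $d s\lo 1 \log p = o(n \rho \lambda\lo d)$ --- hence $\|\Delta\|\lo 1 \lesssim s\lo 2 \|\Delta\|\lo F$, and then, after passing between $\|\Delta\|\lo F$ and $\|\Sigma\half \Delta \Sigma\half\|\lo F$ via the eigenvalue bounds on $\Sigma$, a quadratic inequality of the shape $(\lambda\lo d / 2) \|\Delta\|\lo F\udex 2 \lesssim \{\rho s\lo 2 + d s\lo 1 \log p / (n\rho)\} \|\Delta\|\lo F + \lambda\lo d\, d\udex 2 s\lo 1 s\lo 2 \log p / n$. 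Solving for $\|\Delta\|\lo F$ and translating to $D(\widehat\beta, \beta\lo 0)$ via Step~1 yields (\ref{eq: sparse SIR HD consistency}), with $\rho s\lo 2 = o(\lambda\lo d)$ ensuring the right-hand side is $\sop (1)$.

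I expect the main obstacle to be Step~2: obtaining a maximum-norm concentration inequality for $\widehat\M\lo {E(X|Y)}$ that correctly tracks the plug-in errors of the high-dimensional mixture fit through the factors $d$ and $s\lo 1$, which is where conditions (C1)--(C3) and the estimation rates of \cite{cai2019} do the real work. A secondary difficulty is the bookkeeping in the coordinate change of Step~1, reconciling the Euclidean projection distance $D(\widehat\beta, \beta\lo 0)$ with the $\Sigma$-weighted Frobenius distance natural to the Fantope curvature argument and to the constraints written in $\widehat\Sigma\half$.
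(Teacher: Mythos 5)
Your overall route is the one the paper itself takes: its proof is explicitly an adaptation of the convex Fantope-type analysis of \cite{tan2018}, with the high-dimensional mixture-fit rates of \cite{cai2019} supplying the $s\lo 1$ factors, and with max-norm deviation bounds $\|\widehat \M\lo {E(X|Y)} - \M\lo {E(X|Y)}\|\lo \max = \bop((s\lo 1 \log p/n)\udex {1/2})$ and $\|\widehat\Sigma - \Sigma\|\lo \max = \bop((s\lo 1 \log p/n)\udex {1/2})$ playing exactly the role of your Step 2. So the skeleton (rank-$d$ population curvature with eigengap $\lambda\lo d$, plug-in deviation control, basic inequality for the convex program) matches.

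Two steps of your plan, however, do not go through as written. First, $\Pi\lo 0 = \beta\lo 0\udex {\otimes 2}$ is generally \emph{not} feasible for the $\widehat\Sigma$-constraints, even asymptotically: the nonzero eigenvalues of $\widehat\Sigma\udex {1/2}\Pi\lo 0\widehat\Sigma\udex {1/2}$ are those of $\I\lo d + \beta\lo 0\trans(\widehat\Sigma-\Sigma)\beta\lo 0$, which exceed one with non-vanishing probability, so the basic inequality of your Step 3 cannot be applied with $\Pi\lo 0$ as comparator. The paper, following Lemma S1 of \cite{tan2018}, compares $\widehat\Pi$ with a feasible surrogate $\tilde\Pi$ built from $\widehat\Sigma$ and uses $\|\tilde\Pi - \Pi\lo 0\|\lo F \leq C d (s\lo 1 s\lo 2 \log p/n)\udex {1/2}$; it is this surrogate error, not the ``deterministic approximation error of $\widehat\M\lo {E(X|Y)}$'' you invoke, that produces the third term of (\ref{eq: sparse SIR HD consistency}). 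Second, your treatment of the stochastic term via H\"older plus a cone condition $\|\Delta_{S^c}\|\lo 1 \lesssim \|\Delta_S\|\lo 1$ requires $\rho$ to dominate the max-norm noise level $(s\lo 1 \log p/n)\udex {1/2}$; this is neither assumed (the condition $d s\lo 1 \log p = o(n\rho\lambda\lo d)$ does not imply it) nor compatible with the optimal choice $\rho \asymp \{d s\lo 1 \log p/(s\lo 2 n)\}\udex {1/2}$ noted after the theorem, which lies below the noise level whenever $s\lo 2 \gg d$ --- precisely the regime where the claimed bound improves on the classical $\rho s\lo 2/\lambda\lo d$ rate, and also why your quadratic inequality with the term $\{\rho s\lo 2 + d s\lo 1\log p/(n\rho)\}\|\Delta\|\lo F$ cannot be reached by that argument. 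The paper instead follows the modified equation (S29) of \cite{tan2018}, where the stochastic contribution enters through a quantity $\gamma = \bop((d s\lo 1 s\lo 2 \log p/n)\udex {1/2})$ and gives $\|\Delta\|\lo F = \bop(\gamma/\lambda\lo d + \gamma\udex 2/(\rho s\lo 2 \lambda\lo d) + \rho s\lo 2/\lambda\lo d)$, the first term being absorbed since it is the geometric mean of the other two; this is what yields $d s\lo 1 \log p/(n\rho\lambda\lo d)$ without forcing $\rho$ above the noise level. With these two repairs (feasible surrogate $\tilde\Pi$, and the (S29)-type handling of the noise), your plan coincides with the paper's argument.
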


\begin{proof}
Theorem 3 can be proved by incorporating the theoretical results of \cite{cai2019} into the proof of Theorem 1 in \cite{tan2018}. Hence, we only provide a sketch of the proof here for the case $q=2$. Let $\gamma = \Sigma\inv (\mu\lo 1 - \mu\lo 2)$. Under Condition (C1) in Appendix A of this article and Lemma 3.1, Lemma 3.2, and Theorem 3.1 in \cite{cai2019}, we have
\begin{align*}
\max(| \widehat \pi\lo 1 - \pi\lo 1 |, \|\widehat \mu\lo 1 - \mu\lo 1\|\lo 2, \|\widehat \mu\lo 2 - \mu\lo 2\|\lo 2, \|\widehat \gamma  - \gamma\|\lo 2) = \bop ((s\lo 1 \log p / n)\udex {1/2}).
\end{align*}
Together with the sub-Gaussian condition of $X$ implied by its mixture multivariate normal distribution and Conditions (C2)-(C3) in Appendix A of this article, it is easy to modify the proofs of Lemma 1 and Proposition 1 in \cite{tan2018} to derive
\begin{align*}
\|\widehat \M\lo {E(X|Y)}\! - \!\M\lo {E(X|Y)}\|\lo \max \!=\! \bop ((s\lo 1\! \log p / n)\udex {1/2}) \mbox{ and } \|\widehat \Sigma \!-\! \Sigma\|\lo \max\! = \!\bop ((s\lo 1\! \log p / n)\udex {1/2}),
\end{align*}
where $\|A\|\lo \max = \max\{|a\lo {ij}|:i,j\in\{1,\ldots, p\}\}$ for any matrix $A = (a\lo {ij}) \in \real\udex {p\times p}$. Similarly, in our scenario, Lemma S1 in the supplementary material for \cite{tan2018} can be modified to
\begin{align}\label{prf: hd}
\|\tilde \Lambda - \Lambda\|\lo {\mathrm {sp}} \leq C (s\lo 1 s\lo 2 \log p / n)\udex {1/2} \mbox{ and } \|\tilde \Pi - \Pi\|\lo F \leq Cd (s\lo 1 s\lo 2 \log p /n)\udex {1/2}
\end{align}
where $\Lambda$ is defined in Equation (4) of \cite{tan2018}, $\Pi = \beta\lo 0\udex {\otimes 2}$, and the definitions of $\tilde \Lambda$ and $\tilde \Pi$ can be found above Lemma S1 in the supplementary material for \cite{tan2018}; Lemma S5 in the supplementary material for \cite{tan2018} can be modified to that the term $\{s \log (ep) / n\}\udex 2$ be changed to $\{s\lo 1 s\lo 2 \log (ep) / n\}\udex {1/2}$; Lemma S6 in the supplementary material for \cite{tan2018} can be modified to that the convergence rate in its conclusion is $C (s\lo 1 \log p /n)\udex {1/2}$; the rest of the lemmas, including Lemma S2, Lemma S3, and Lemma S4, will still hold. Consequently, in the proof of \cite{tan2018}'s Theorem 1 (see their supplementary material), the condition $s\lo 1 s\lo 2\udex 2 \log p = o(n)$ is needed to ensure (S25), and the term $\gamma$ (which is a different term from $\gamma=\Sigma\inv (\mu\lo 1 -\mu\lo 2)$ defined above) is of order $\bop ((d s\lo 1 s\lo 2 \log p / n)\udex {1/2})$. A modification of equation (S29) in their proof then yields
\begin{align}\label{prf: hd 2}
\|\Delta\|\lo F  & = \bop (\gamma / \lambda\lo d + \gamma\udex 2 / (\rho s\lo 2 \lambda\lo d ) + \rho s\lo 2 / \lambda\lo d )
= \bop (\gamma\udex 2 / (\rho s\lo 2 \lambda\lo d ) + \rho s\lo 2 / \lambda\lo d ) \nonumber \\
& = \bop (d s\lo 1 \log p  / ( n \rho \lambda\lo d ) + \rho s\lo 2 / \lambda\lo d ),
\end{align}
where the definition of $\Delta$ can be found below Lemma S1 in the supplementary material for \cite{tan2018}. By the subsequent logic flow of their proof, (\ref{prf: hd}) and (\ref{prf: hd 2}) together imply (\ref{eq: sparse SIR HD consistency}). This completes the proof.
\end{proof}

In Theorem \ref{thm: sparse sir}, we allow divergence of both the number of variables of $X$ that are uniquely informative to $Y$ and the dimension of $\cs$. (\ref{eq: sparse SIR HD consistency}) suggests that the optimal value of $\rho$ is of order $(d s\lo 1 \log p)\udex {1/2} / (s\lo 2 n)\udex {1/2}$. In practice, we follow \cite{tan2018} to suggest tuning $\rho$ by cross validation, details omitted.

To modify $\rmsir$ towards sparsity, we follow Theorem \ref{thm: sro} to assume the equality of $\Sigma\lo i$'s, under which $\rmsir$ delivers the $d$ leading eigenvectors of $\Sigma\inv \Mrsir \Sigma\inv$ in the population level with
\begin{align}\label{eq: SIR2 HD}
	\Mrsir \equiv E\udex {\otimes 2}\left[\{\tsum\lo {i=1}\udex q \pi\lo i(\beta\lo 0\trans X) (X - \mu \lo i) \} \vecc\trans \{D(\beta\lo 0\trans X) \diag (\vec Y\lo S) \} \right].
\end{align}
Let $\hmrsir$ be a consistent estimator of $\Mrsir$ using $\widehat \mu\lo i$'s to replace $\mu\lo i$'s and using $\smsir$ to replace $\cs$. Similar to (\ref{eq: sparse SIR HD}), we minimize
\begin{align}\label{eq: sparse rSIR HD}
	- \tr(\hmrsir \Pi) + \rho \|\Pi\|\lo 1 \mbox{ subject to } \tr (\widehat \Sigma \udex {1/2}  \Pi \widehat \Sigma\udex {1/2}) \leq d, \|\widehat \Sigma \udex {1/2}  \Pi \widehat \Sigma\udex {1/2}\|\lo {\mathrm {sp}} \leq 1,
\end{align}
over $\Pi \in \fM$, which again has the unique minimizer of the form $\widehat \beta\lo R\udex {\otimes 2}$ for some $\widehat \beta\lo R \in \real\udex {p\times d}$ that satisfies $\widehat \beta\lo R\trans \widehat \Sigma \widehat \beta\lo R = \I\lo d$. We call $\spn {\widehat \beta\lo R}$ the sparse $\rmsir$ or simply $\srmsir$. Following a similar reasoning to Theorem \ref{thm: sparse sir}, $\srmsir$ converges to $\cs$ at the same rate as $\smsir$. Because it does not involve $\pi\lo i (X)$, it can be more robust than $\smsir$ against the estimation bias in fitting the mixture model of $X$, caused by large $p$ or the non-normality of the mixture components, etc.

As mentioned above (\ref{assume: sparse mixnorm}), the implementation of both $\smsir$ and $\srmsir$ requires $q$ to be truly specified {\it a priori} in the clustering stage. Because there is lack of consistent estimators of $q$ under the high-dimensional settings in the existing literature, $q$ can be potentially underestimated in practice, causing inconsistency of $\smsir$ and $\srmsir$. Nonetheless, as these methods reduce to the sparse SIR proposed in \cite{tan2018} when $q$ is forced to be one, which is the worst case scenario for estimating $q$, they are still expected to outperform the existing sparse SIR if $q$ is underestimated to be some $\widehat q > 1$. To determine $d$ under the high-dimensional settings, which coincides with the rank of both $\Omega\lo {E(X|Y)}$ and $\Mrsir$, we recommend applying the aforementioned PAE \citep{luo2021pae} to either $\widehat \Omega\lo {E(X|Y)}$ or $\hmrsir$, details omitted.

\section{Adjusting SAVE for $X$ under mixture model}\label{sec:save}

\def\Msave{\Omega\lo {2}}
\def\Mdr{{\Omega}\lo {\mathrm {DR}}}
\def\gsave{{\mathrm G}\lo 2}
\def\bsave{\beta\lo {\mathrm {SAVE}}}
\def\st{{\mathcal S}\lo {2}}
\def\bsave{\beta\lo {\mathrm {rSAVE}}}
\def\hmsave{\widehat {\Omega}\lo {\mathrm {SAVE}}}
\def\hgsave{\widehat {\mathrm G}\lo {\mathrm {SAVE}}}
\def\m{\Omega}

We now parallelize the two proposed strategies above to adjust SAVE for $X$ under the mixture model (\ref{assume: mNormal x}). The results can be developed similarly for the other second-order inverse regression methods, e.g. pHd and directional regression; see Appendix A for detail.

The first strategy amounts to conducting SAVE within each mixture component of $X$ and then taking the ensemble of the results. For $h\in\{1,\ldots, H\}$, let $Y\lo {S,h}$ be the $h$th entry of $\vec Y\lo S$, i.e. $Y\lo {S, h}$ is one if and only if $Y\lo S$ equals $h$. For $i\in\{1,\ldots, q\}$, we introduce
\begin{align*}
	E \{\pi\lo i (X) Y\lo {S, h}\} \I\lo p  - \Sigma\lo i\inv \left[E\{(X - \mu\lo i )\udex {\otimes 2} \pi\lo i (X)  Y\lo {S, h}\} - H E\udex {\otimes 2}\{(X - \mu\lo i) \pi\lo i (X) Y\lo {S, h}\} \right],
\end{align*}
denoted by $\m\lo {h, i}$, whose ensemble over $h$ mimics SAVE for the $i$th mixture component of $X$. We recover $\cs$ by the column space of $\Msave \equiv (\m\lo {h, i})\lo {h\in\{1,\ldots, H\}, i\in\{1,\ldots,q\}}$, and call the corresponding SDR method $\msave$. When $X$ has only one mixture component, $\msave$ reduces to SAVE. Its general consistency is justified in the following theorem.

\begin{theorem}\label{thm: save}
Suppose $X$ follows a mixture model that satisfies (\ref{assume: linearity Wi}) and (\ref{assume: const var Wi}). Then $\Msave$ always spans a subspace of $\cs$. In addition, $\Msave$ spans $\cs$ if, for any $v \in \cs$, $E\{(v\trans X, v\trans X\udex {\otimes 2} v)\pi\lo i(X) Y\lo {S, h}\}$ varies with $h$ for at least one of $i\in\{1,\ldots, q\}$.
\end{theorem}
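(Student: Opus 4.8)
The plan is to mirror the structure of the proof of Theorem~\ref{thm: so}, treating $\Msave$ as an ensemble of SAVE matrices across mixture components and exploiting the weighted versions of the linearity and constant variance conditions established in Lemmas~\ref{lemma: linear gen} and~\ref{lemma: con var gen}. For the first claim (that $\spn{\Msave} \subseteq \cs$), I would fix $h$ and $i$ and show that each $\m\lo{h,i}$ spans a subspace of $\cs$. The key manipulation is to condition on $\beta\lo 0\trans X$ inside every expectation appearing in $\m\lo{h,i}$, using that $Y\lo{S,h}$ is a function of $Y$ and hence $Y\lo{S,h} \indep X \mid \beta\lo 0\trans X$ under (\ref{assume: sdr cs}). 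This reduces all the moments to the weighted conditional moments $E\lo i(X|\beta\lo 0\trans X)$ and $\var\lo i(X|\beta\lo 0\trans X)$ (plus $E\{\pi\lo i(X)|\beta\lo 0\trans X\} = \pi\lo i(\beta\lo 0\trans X)$), for which (\ref{eq: mean xpi|x mnoromal}) and (\ref{eq: var xpi|x mNormal}) give explicit forms. Substituting these in, the bracketed term in $\m\lo{h,i}$, after premultiplication by $\Sigma\lo i\inv$, becomes a sum of terms each of which is $\Sigma\lo i\inv$ times a matrix whose columns lie in $\spn{\Sigma\lo i P(\Sigma\lo i,\beta\lo 0)}$ together with $\Sigma\lo i\inv$ times outer products of vectors in $\spn{P\trans(\Sigma\lo i,\beta\lo 0)(X-\mu\lo i)} \subseteq \spn{\beta\lo 0}$; the scalar coefficient $E\{\pi\lo i(X)Y\lo{S,h}\}$ times $\I\lo p$ cancels against the $\Sigma\lo i\inv \Sigma\lo i Q(\Sigma\lo i,\beta\lo 0)$ piece coming from the variance term, exactly as in classical SAVE, leaving a matrix whose column space lies in $\spn{P(\Sigma\lo i,\beta\lo 0)} = \spn{\beta\lo 0}$. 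Taking the ensemble over $h$ and $i$ preserves this containment, giving $\spn{\Msave}\subseteq\cs$.

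For the second claim (exhaustiveness), I would argue by contraposition: if $\spn{\Msave}$ is a proper subspace of $\cs$, pick $v\in\cs$ orthogonal (in the $\Sigma\lo i$-inner products, or after an appropriate standardization) to $\spn{\Msave}$, and show this forces $E\{(v\trans X,\ v\trans X\udex{\otimes 2}v)\pi\lo i(X)Y\lo{S,h}\}$ to be constant in $h$ for every $i$. Concretely, $v\trans \m\lo{h,i} v = 0$ for all $h,i$; expanding $v\trans\m\lo{h,i}v$ using the explicit substitutions above expresses it as an affine function of the two scalar moments $E\{v\trans X\,\pi\lo i(X)Y\lo{S,h}\}$ (squared) and $E\{(v\trans X)\udex 2\pi\lo i(X)Y\lo{S,h}\}$, with coefficients not depending on $h$; since $\tsum\lo h Y\lo{S,h}=1$, summing the $v\trans\m\lo{h,i}v=0$ relations and comparing slice-by-slice shows each of these moments must be independent of $h$, which is the negation of the stated condition. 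Reversing the implication gives exhaustiveness under the hypothesis. I would also note the cross-term: $v\trans X\udex{\otimes 2}v$ should be read as the scalar $(v\trans X)\udex 2$ and the first-moment and second-moment contributions are separated by the same parity/rank bookkeeping that makes SAVE exhaustive, so no extra hypothesis beyond the stated one is needed.

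The main obstacle I anticipate is the bookkeeping in the exhaustiveness direction: unlike the first-order case of Theorem~\ref{thm: so}, the SAVE-type matrix $\m\lo{h,i}$ mixes a first-moment outer product $E\udex{\otimes 2}\{(X-\mu\lo i)\pi\lo i(X)Y\lo{S,h}\}$ with a genuine second moment, so one must be careful that the vanishing of $v\trans\m\lo{h,i}v$ over all $h$ really does pin down both scalar moments rather than just a linear combination of them. The resolution is that the coefficient $H$ multiplying the outer-product term, combined with the constraint $\tsum\lo h E\{\pi\lo i(X)Y\lo{S,h}\}\neq 0$ and the convexity/Cauchy–Schwarz relation between $E\{v\trans X\,\pi\lo i(X)Y\lo{S,h}\}$ and $E\{(v\trans X)\udex 2\pi\lo i(X)Y\lo{S,h}\}$, decouples them across slices; this is the standard mechanism behind SAVE's exhaustiveness and I expect it to go through here after restricting to the $i$th mixture component, where the weight $\pi\lo i(X)$ plays the role of a (random, but conditionally deterministic given $W$) indicator. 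I would keep the statement's phrasing ``varies with $h$'' as the clean characterization and not attempt to weaken it.
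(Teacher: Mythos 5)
Your treatment of the containment claim is correct and essentially the paper's own argument: the paper rewrites each block $\Omega\lo {h,i}$ in terms of within-slice moments (the slices having equal probability $1/H$), conditions on $\beta\lo 0\trans X$ via the law of total variance together with $Y\indep X\mid\beta\lo 0\trans X$, and substitutes (\ref{eq: mean xpi|x mnoromal}) and (\ref{eq: var xpi|x mNormal}); the $\I\lo p$ term then combines with the $Q(\Sigma\lo i,\beta\lo 0)$ part of (\ref{eq: var xpi|x mNormal}) into $P(\Sigma\lo i,\beta\lo 0)$, exactly the cancellation you describe, giving $\Omega\lo {h,i}=\beta\lo 0 A\lo {h,i}$ and hence $\spn {\Omega\lo 2}\subseteq\cs$.

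The exhaustiveness half is where your proposal has a genuine gap; note the paper itself does not spell this part out but defers to the proofs of Theorems 3 and 4 in \cite{li2007}. Two concrete problems. (i) If $v\in\cs$ is orthogonal to $\spn {\Omega\lo 2}$, what you actually get is the vector identity $v\trans \Omega\lo {h,i}=0$; because of the premultiplication by $\Sigma\lo i\inv$, the scalar $v\trans \Omega\lo {h,i} v$ is \emph{not} an affine function of $E\{v\trans X\, \pi\lo i(X)Y\lo {S,h}\}$ and $E\{(v\trans X)\udex 2\pi\lo i(X)Y\lo {S,h}\}$: the linear functional that appears is $v\trans\Sigma\lo i\inv (X-\mu\lo i)$ paired with $(X-\mu\lo i)\trans v$, i.e.\ cross-moments in the two different directions $\Sigma\lo i\inv v$ and $v$, and these reduce to the moments named in the theorem only after the within-component standardization (the $Z$-scale argument of \cite{li2007}) that you never carry out. (ii) Even granting that reduction, the inference ``summing over $h$ and comparing slice-by-slice shows each of these moments must be independent of $h$'' is not valid: each slice contributes a single scalar relation which moreover contains the $h$-dependent weight $E\{\pi\lo i(X)Y\lo {S,h}\}$, so it ties together three slice-dependent quantities, and the vanishing of one combination per slice does not force the first and second weighted moments to be separately constant in $h$; the Cauchy--Schwarz/``parity bookkeeping'' you appeal to does not supply the missing decoupling, and by using only the quadratic form $v\trans\Omega\lo {h,i}v$ you also discard most of the information in $v\trans\Omega\lo {h,i}=0$ that a correct argument along Li and Wang's lines needs. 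So keep your containment proof as is, but either rebuild the exhaustiveness step on the standardized scale following \cite{li2007}, or simply cite that argument, as the paper does.
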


\begin{proof}
For each $h\in\{1,\ldots, H\}$, we can rewrite $\m\lo {h, i}$ as
\begin{align*}
	\m\lo {h, i} & = [E \{\pi\lo i (X) | Y\lo {S, h} = 1\}/H ] \I\lo p  -\\
	& \hspace{.4cm} \Sigma\lo i\inv [E\{(X - \mu\lo i )\udex {\otimes 2} \pi\lo i (X) | Y\lo {S, h} = 1\}  - E\udex {\otimes 2}\{(X - \mu\lo i) \pi\lo i (X) | Y\lo {S, h} = 1\} ]/H \\
	& = [ E \{\pi\lo i (X) | Y\lo {S, h} = 1\} \I\lo p - \Sigma\lo i\inv \var \{(X-\mu\lo i) I(W=i) | Y\lo {S, h}=1\}]/H \\
	& = E \{ \pi\lo i (X) \I\lo p - \Sigma\lo i\inv \var\lo i (X  |\beta\lo 0\trans X) | Y\lo {S, h} = 1\} / H \\
 & \hspace{.4cm} - \Sigma\lo i \inv \var \{ E\lo i (X| \beta\lo 0\trans X) | Y\lo {S, h}=1\} / H.
\end{align*}
Thus, by (\ref{eq: mean xpi|x mnoromal}) and (\ref{eq: var xpi|x mNormal}), we have $\m \lo {h, i} = \beta\lo 0 A\lo {h, i}$ for some $A\lo {h, i} \in \real\udex {d\times p}$, which implies $\spn {\Msave} \subseteq \cs$. The condition for the coincidence between $\spn {\Msave}$ and $\cs$ can be derived straightforwardly following the proofs of Theorem 3 and 4 in \cite{li2007}.
\end{proof}

By Theorem \ref{thm: save}, the exhaustiveness of $\msave$ requires each informative direction of $X$ to be captured by SAVE in at least one mixture component of $X$, so it is quite general \citep{li2007}. The implementation of $\msave$ involves replacing the parameters of the mixture model in $\Msave$ with the corresponding estimators in Section 2 and replacing the population moments with the sample moments. The resulting $\widehat \M\lo 2$ is $n\udex {1/2}$-consistent and asymptotically normal, by which the ladle estimator \citep{luo2016ladle} can be applied again to determine $d$. The leading $d$ left singular vectors of $\widehat \M\lo 2$ then span a $n\udex {1/2}$-consistent estimator of $\cs$. These results are straightforward, so we omit the details.

Under framework (\ref{eq: ee general}), SAVE can be reformulated by taking
\begin{align*} 
	g(Y,\beta\trans X) = \phi (\beta\trans X) \left[ E (X\udex {\otimes 2}\otimes \vec Y\lo S | \beta\trans X) - \{\var(X | \beta\trans X) + E\udex {\otimes 2}(X | \beta\trans X) \} \otimes \vec Y\lo {S} \right]
\end{align*} 
with $\phi(\beta\trans X)$ set at a constant for simplicity and $\alpha(X) = E\{g(Y, \beta\trans X) | X\}$, and $\mu\lo g (\beta\trans X)$ set at zero because $E\{g(Y, \beta\lo 0\trans X) | \beta\lo 0\trans X\}$ is zero, and $\mu\lo \alpha (\beta\trans X)$ also at zero due to the double-robust property. To adjust for the mixture model of $X$, we apply the functional forms (\ref{eq: mean xbx mixNormal}) for $E(X|\beta \trans X)$ and (\ref{eq: var xbx mixNormal}) for $\var(X|\beta \trans X)$, and we set $\phi(\beta\trans X)$ at $\pi\lo i\udex 2 (\beta\trans X)$ for each $i\in\{1,\ldots, q\}$ and then merge the estimating equations. For simplicity, one may also set $\phi(\beta\trans X)$ at a constant as in SAVE. The resulting objective function is 
\begin{align*}
\gsave (\spn \beta) \equiv \tsum\lo {i=1}\udex q \tsum\lo {h=1}\udex H & \left\| E \{\pi\lo i\udex 2 (\beta\trans X) X\udex {\otimes 2} Y\lo {S, h}\} - E\{\pi\lo i\udex 2 (\beta\trans X) \mu\lo 2 (\beta\trans X) Y\lo {S, h}\} \right. \\
&  \hspace{.2cm} \left. - \tsum\lo {j=1}\udex q E\{\pi\lo i\udex 2 (\beta\trans X) \pi\lo j(\beta\trans X) Y\lo {S, h}\} \Sigma\lo j Q (\Sigma\lo j, \beta) \right\|\lo F\udex 2,
\end{align*}
where $\mu\lo 2 (\beta\trans X)$ is defined in Lemma \ref{lemma: con var gen}. Same as $\go (\cdot)$, there also exists the unique minimizer of $\gsave(\cdot)$ with the smallest dimension, which is always a subspace of $\cs$ under the mixture component-wise linearity and constant variance conditions (\ref{assume: linearity Wi}) and (\ref{assume: const var Wi}). The exhaustiveness of this space roughly follows that of SAVE, although $\pi\lo i(\beta\trans X)$ is used instead of $\pi\lo i(\beta\lo 0\trans X)$ in $\gsave (\cdot)$.

Let $\widehat \G\lo 2(\cdot)$ be the estimator of $\gsave(\cdot)$ using model fitting results in Section 2. We call the unique $d$-dimensional minimizer of $\widehat \G\lo 2(\cdot)$ the refined $\msave$ or simply $\rmsave$, whose $n\udex {1/2}$-consistency can be easily proved. The implementation of $\rmsave$ is deferred to Appendix B. Again, the advantage of $\rmsave$ over $\msave$ comes from that the mixture components are embedded in the low dimensional $\beta\trans X$ rather than the ambient $X$, much comparable with the advantage of the refined MAVE over MAVE.

\section{Simulation studies}

We now use simulation models to evaluate the effectiveness of the proposed methods. We will start with generating $X$ by some simple mixture multivariate normal models, under which we examine the performance of the proposed $\msir$ and $\rmsir$. To better comprehend the applicability of $\msir$ and $\rmsir$, we will then assess their robustness under more complex mixture model of $X$ and under the violation of the mixture model assumption itself. The adjusted sparse SIR, i.e. $\smsir$ and $\srmsir$, will be evaluated afterwards under the high-dimensional settings where $p \geq n$, and the adjustments of SAVE will be assessed finally. The section is divided into four subsections in this order. A complementary simulation study is presented in Appendix C, where we evaluate how the estimation error in fitting the mixture model and how a hypothetical misspecification of $q$ impact the proposed SDR results. 


\subsection{Adjusted SIR for mixture normal $X$}

We first evaluate the performance of the proposed $\msir$ and $\rmsir$ in comparison of SIR and the MAVE-type methods, i.e. MAVE, ADR, and eCVE, in the case that $X$ follows a mixture multivariate normal distribution. The \texttt{R} packages \texttt{meanMAVE} and \texttt{CVarE} are used to implement MAVE and eCVE, respectively, where the tuning parameters are automatically selected. To implement ADR, we  follow the suggestion in \cite{wang2020} to set $H=2$ and characterize the local neighborhoods of $X$ by the $k$-nearest neighbors with $k = 4p$. To fit the mixture model in implementing $\msir$ and $\rmsir$, we use the maximal likelihood estimation, with $q$ determined by BIC, both available from the \texttt{R} package \texttt{mixtools}. For SIR, $\msir$, and $\rmsir$, we set $H=5$ uniformly unless $Y$ is discrete. 

We apply these SDR methods to the following four models, where $\varepsilon$ is an independent error with standard normal distribution. In the first three models, $X$ is a balanced mixture of two multivariate normal distributions, i.e. $\pi\lo 1 N(\mu\lo 1, \Sigma\lo 1) + \pi\lo 2 N(\mu\lo 2, \Sigma\lo 2)$ with $\pi\lo 1 = \pi\lo 2 = .5$, and we set $\mu\lo 2 = -\mu\lo 1$ and $\Sigma\lo i = [r\lo i]\lo p$ for some scalar $r\lo i$ for $i=1,2$. Here, $[r\lo i]\lo p$ denotes the $p$-dimensional square matrix whose $(i,j)$th entry is $r\lo i\udex {|i-j|}$, and we allow $r\lo 1 \neq r\lo 2$. The case of unbalanced $X$ is studied in Model $4$ and furthermore in Subsection 6.2 later. 
\begin{align*}
&\mbox{Model 1:} \quad  Y = \sign{X\lo 1+1}\,\log(|X\lo 2-2+\varepsilon|),  \\
&\qquad\qquad\quad \mu\lo 1=(1.5, 1.5, 1.5,0,\cdots,0)\trans, (r\lo 1, r\lo 2) = (.5, .5), \\
&\mbox{Model 2:} \quad  Y = \sign{X\lo 1}\,(5-X\lo1\udex 2+X\lo 2\udex 2)+ .5 \varepsilon,  \\
&\qquad\qquad\quad \mu\lo 1 =  (.8,.8,.8,0,\ldots, 0)\trans, (r\lo1,r\lo2)=(.5,-.5), \\
&\mbox{Model 3:} \quad  Y = (X\lo 1 +X\lo2)\udex 2+ \exp(X\lo3+X\lo4)\varepsilon,  \\
&\qquad\qquad\quad \mu\lo 1 = (2,2, \cdots,2)\trans/p\udex {1/2}, (r\lo1,r\lo2)=(.5,-.5),\\
&\mbox{Model 4:} \quad  Y \sim \mbox{Bernoulli}[\{1+\exp(2X\lo1+ 2X\lo2+4)\}\inv], \\
&\qquad\qquad\quad  X \sim .25 N(\mu\lo 1, [.3]\lo p) + .5 N(0, I\lo p) + .25 N(\mu\lo 3, [-.3]\lo p),  \\
&\qquad\qquad\quad \mu\lo 1=(2,2,0,\cdots,0)\trans, \mu\lo 3=(-2,2,0,\cdots,0)\trans.
\end{align*}
In Models $1-3$, $Y$ is continuous and the central subspace is two-dimensional. The monotonicity or asymmetry between $\beta\lo 0\trans X$ and $Y$ varies in these models: while both $X\lo 1$ and $X\lo 2$ have monotone effects on $Y$ in Model 1, the monotonicity only applies to $X\lo 1$ in Model 2, and neither effects are asymmetric in Model 3. Consequently, these models are in favor of SIR from the most to the least, if we temporarily ignore the impact from the distribution of $X$. By contrast, the asymmetry between $\beta\lo 0\trans X$ and $Y$ becomes stronger when specified for the individual mixture components in all the three models. The deviation between $\mu\lo i$'s also varies in these models, representing different degrees of separation between the mixture components of $X$. In Model 4, $X$ consists of three unbalanced mixture components, and $Y$ is binary with one-dimensional central subspace. The monotonicity between $\beta\lo 0\trans X$ and $Y$ in Model $4$ suggests that this model is in favor of SIR.

For each model, we set $(n,p)$ at $(500,10)$, and generate $200$ independent copies. To measure the accuracy of an SDR estimator $\spn {\widetilde \beta}$ for each model, we use the sample mean and sample standard deviation of $\delta(\widetilde \beta, \beta\lo 0)$ defined in (\ref{eq: distance S hat S}). The performance of the aforementioned SDR methods under this measure are recorded in Table \ref{tab: ss SIR}. For reference, if $\widetilde \beta$ is randomly generated from the uniform distribution on the unit hyper-sphere $\{\beta \in \real\udex {p\times d}: \beta\trans \beta =\I\lo d\}$, then $E\{\delta(\widetilde \beta, \beta\lo 0)\}$ is $\{2 d (p-1) /p\}\udex {1/2}$, which is $1.34$ for $d=1$ and $1.90$ for $d=2$ given $p=10$. 

From Table \ref{tab: ss SIR}, SIR fails to capture the central subspace in Models $1-3$. The MAVE-type methods have an uneven performance across the models. In particular, ADR fails to capture the local patterns of Model 3 where these patterns vary dramatically with the local neighborhoods of $X$, MAVE fails to capture the non-continuous effect of $X\lo 1$ in Model 1 and the effect of $X$ that falls out of the central mean subspace in Model 3, and eCVE shows a similar inconsistency to MAVE in Model 1. These comply with the intrinsic limitations of the corresponding SDR methods. By contrast, both $\msir$ and $\rmsir$ are consistent in all the models, and they mostly outperform SIR and the MAVE-type methods even when the latter are also consistent. Compared with $\msir$, $\rmsir$ is generally a slight improvement.

\begin{center}
\begin{table}[t]
\centering
\caption{Performance of SDR methods for mixture normal $X$}
\label{tab: ss SIR}
\begin{threeparttable}
\begin{tabular}[l]{l|cccc}
\hline \hline
\empty  & Model 1  & Model 2  & Model 3  & Model 4       \\ 
\hline
SIR      &.908(.136) &.910(.128) 
         &1.08(.179) &.634(.127) \\
ADR      &.611(.176) &.473(.138) 
        &1.31(.113)  &.645(.189) \\
MAVE     &.745(.229) &.223(.056) 
        &.895(.278)  &.326(.089) \\
eCVE    &1.42(.126)  &.452(.128) 
        &.609(.222)  &1.29(.226) \\
$\msir$  &.590(.151) &.436(.094) 
        &.553(.111)  &.374(.105) \\
$\rmsir$ &.555(.146) &.402(.085) 
        &.502(.109)  &.338(.098) \\
\hline\hline
\end{tabular}
\begin{tablenotes}
\footnotesize
\item In each cell of Columns 2-5, $a(b)$ is the sample mean (sample standard deviation) of $\delta(\widehat \beta, \beta\lo 0)$ for the corresponding SDR method, based on $200$ replications.
\end{tablenotes}
\end{threeparttable}
\end{table}
\end{center}

\subsection{Robustness of adjusted SIR}

We now evaluate the performance of $\msir$ and $\rmsir$ more carefully when the distribution of $X$ deviates from the simple balanced mixture multivariate normal distribution. These deviations include the case of unbalanced mixture components, the case of skewed distributions for the individual mixture components, and the case of non-clustered $X$ that violates the mixture model assumption itself. 

To generate an unbalanced mixture model for $X$, we set $(\pi\lo 1, \pi\lo 2)$ in Models $1-3$ in Subsection 6.1 above to each of $(.1, .9)$ and $(.3, .7)$, representing severe unbalance and mild unbalance between mixture components, respectively. The performance of the SDR methods are summarized in Table \ref{tab: different weight}, in the same format as Table \ref{tab: ss SIR}. Because similar phenomena to Table \ref{tab: ss SIR} can be observed, Table \ref{tab: different weight} again suggests the consistency of $\msir$ and $\rmsir$ when $X$ has a mixture multivariate normal distribution. However, when $\pi$ becomes smaller, the advantage of these methods over the existing SDR methods becomes less substantial, which can be anticipated in theory as a smaller $\pi$ can be regarded as a weaker clustered pattern of $X$. Note that, for each model, the overall strength of SDR pattern differs when we change the distribution of $X$. Thus, it is not meaningful to compare the performance of an individual SDR method in an individual model for different values of $(\pi\lo 1, \pi\lo 2)$, i.e. across Table \ref{tab: ss SIR} and Table \ref{tab: different weight}.

\begin{center}
\begin{table}[t]
\centering
\caption{Performance of SDR methods for unbalanced $X$}
\label{tab: different weight}
\begin{threeparttable}
\setlength{\tabcolsep}{3mm}{
\begin{tabular}[l]{ll|ccc}
\hline \hline
$\pi$ &Methods & Model 1  & Model 2  &Model 3 \\ 
\hline
\multirow{6}{*}{$.1$}
&SIR &.729(.100) &.692(.140) &.937(.149)\\
&ADR &.572(.152) &.485(.126) &1.15(.228)\\
&MAVE&.395(.109) &.211(.054) &.868(.283)\\
&eCVE&.503(.105) &.309(.055) &.702(.335)\\
&$\msir$  &.461(.098) &.541(.131)  
          &.458(.094) \\
&$\rmsir$ &.449(.106)   &.531(.132)  
          &.436(.100) \\
\hline
\multirow{6}{*}{$.3$}
&SIR &.713(.117) &.954(.255) &1.36(.127)\\
&ADR &.586(.142) &.451(.109) &1.23(.214)\\
&MAVE &.409(.102)&.218(.058) &.898(.263)\\
&eCVE &.841(.117)&.436(.045) &.594(.210)\\
&$\msir$ &.503(.099) &.528(.133)  
        &.481(.101)   \\
&$\rmsir$ &.490(.097) &.498(.126) 
        &.481(.098) \\
\hline\hline
\end{tabular}}
\begin{tablenotes}
\footnotesize
\item The meanings of numbers in each cell follow those in Table~\ref{tab: ss SIR}.
\end{tablenotes}
\end{threeparttable}
\end{table}
\end{center}

Next, we add skewness to the mixture components of $X$. As mentioned in Section 4, this case was studied in \cite{guan2017}, where StI was specifically proposed to recover $\cs$. Because both $\msir$ and $\rmsir$ are theoretically inconsistent in this case due to the violation of the mixture component-wise linearity condition (\ref{assume: linearity Wi}), we evaluate their sample-level effectiveness using SIR as a reference and using StI as the benchmark. For simplicity, we use three simulation models, i.e. Models (4.1), (4.3) and (4.5), in \cite{guan2017}:  
\begin{align*}
&\mbox{Model (4.1):} \quad  Y = \beta\lo 1\trans X +  \varepsilon, \\
&\mbox{Model (4.3):} \quad  Y = \sin(\beta\lo 1\trans X/3 + \varepsilon), \\
&\mbox{Model (4.5):} \quad  Y = (\beta\lo 1\trans X+ .3\varepsilon) \,/\,(2+|\beta\lo 2\trans X-4+\varepsilon|),
\end{align*}
where $\beta\lo1=(1,1,1,1,0, \ldots,0)\trans$, $\beta\lo2=(1,1,0,0,1,1,0\ldots,0)\trans$, and $\varepsilon$ is the same random error as in Subsection 6.1 above. For all the three models, $X$ is generated by $.5 SN\lo p(\mu\lo1,I\lo p, C\lo 1) + .3 SN\lo p(\mu\lo2,I\lo p, C\lo2) + .2 SN\lo p(\mu\lo3,I\lo p, C\lo3)$, with 
{\small {
\begin{align*}
& \mu\lo 1 = (3,3,3, 0,\ldots,0)\trans, \mu\lo 2 = (3,0,3,3,0,\ldots, 0)\trans, \mu\lo3 = (3,0,0,3,3,0,\ldots,0)\trans, \\
& C\lo 1 =\beta\lo 1, C\lo2 =3\beta\lo 1, C\lo 3 =(1,1, 0,0, 0,0, 1,1,0,0)\trans
\end{align*} }}

\noindent in the first two models, and with
{\small {
\begin{align*}
& \mu\lo 1 = (3,3,3,3, 0,\ldots,0)\trans, \mu\lo 2 = (3,0,0,3,3,0,\ldots, 0)\trans, \mu\lo 3 = (3,3,0,0,3,3,0, \ldots,0)\trans, \\
& C\lo 1 =2\beta\lo 1, C\lo2 =3\beta\lo 1, C\lo 3 =(-1,-1, 0,0, 0,0, -1,-1,0,0)\trans 
\end{align*} }}

\noindent
in the third model. Here, $SN(\mu,\Sigma,C)$ denotes a multivariate skew normal distribution, with $C$ being the skewness parameter whose magnitude indicates the severity of skewness; see \cite{guan2017} for more details. Roughly, the skewness is more severe for the mixture components of $X$ in Model (4.5) than in the other two models.   

Same as \cite{guan2017}, we set $(n,p)=(400,10)$. To permit a direct transfer of their simulation results, we also use $\delta\udex 2 (\widehat \beta, \beta\lo 0)$ instead of $\delta (\widehat \beta, \beta\lo 0)$ to measure the estimation error of a SDR method, which is a linear transformation of their working measure. The performance of all the SDR methods under $\delta\udex 2 (\widehat \beta, \beta\lo 0)$ is summarized in Table \ref{tab: guan}. From this table, the proposed $\msir$ and $\rmsir$ have a similar performance to StI in comparison of SIR for the first two models, and they are sub-optimal to StI but still substantially ourperform SIR for the third model, where again the skewness is more severe in the mixture components of $X$. Overall, these suggest that both $\msir$ and $\rmsir$ are robust against mild skewness in the mixture components of $X$ if $X$ follows a mixture model, and $\msir$ is slightly better than $\rmsir$ in this context.

\begin{center}
\begin{table}[t]
\centering
\caption{Performance of SDR methods for mixture skew normal $X$}
\label{tab: guan}
\begin{threeparttable}
\setlength{\tabcolsep}{5mm}{
\begin{tabular}[l]{l|cccc}
\hline \hline
Method & Model (4.1)  & Model (4.3)  & Model (4.5) \\ 
\hline
SIR      &.156(.041) &.219(.112) &1.20(.072) \\
$\msir$  &.026(.012) &.050(.046) &.816(.201) \\
$\rmsir$ &.062(.044) &.078(.050) &.826(.232) \\
StI      &.026(.010) &.038(.022) &.168(.084) \\
\hline\hline
\end{tabular}}
\begin{tablenotes}
\footnotesize
\item In each cell of Columns 2-4, $a(b)$ is the sample mean (sample standard deviation) of $\delta\udex 2(\widehat \beta, \beta\lo 0)$ for the corresponding SDR method, based on $200$ replications.
\end{tablenotes}
\end{threeparttable}
\end{table}
\end{center}

To evaluate the applicability of $\msir$ and $\rmsir$ for more complex data, we next generate $X$ such that it conveys a unimodal, continuous, but curved pattern. Namely, we first generate $(X\lo 1, X\lo 3,\ldots, X\lo p)$ under $N(0, [.3]\lo {p-1})$, and, given $X\lo 1$, we generate $X\lo 2$ in each of the following ways:
\begin{align}\label{eq: generate X2}
	X\lo 2 = |X\lo 1| + \varepsilon\lo 1, \quad X\lo 2 = \cos(2 X\lo 1) + \varepsilon\lo 1
\end{align}
where $\varepsilon\lo 1$ is an independent error distributed as $N(0, .1\udex 2)$. The sample support of $(X\lo 1, X\lo 2)$ conveys a $V$ shape and a $W$ shape, respectively. For each case, we generate $Y$ from Model 1, Model 2 and Model 4 in Subsection 6.1 above. The corresponding performance of SDR methods are summarized in Table \ref{tab: non-mixnormal}, again measured by $\delta (\widetilde \beta, \beta\lo 0)$ in (\ref{eq: distance S hat S}). The results for Model 3 are omitted because $X\lo 1 + X\lo 2$ is close to zero with a large probability under (\ref{eq: generate X2}), which causes an intrinsic difficulty to recover its effect for this model. The similar phenomena in Table \ref{tab: non-mixnormal} to Table \ref{tab: ss SIR} suggest the robustness of both $\msir$ and $\rmsir$ against the violation of the mixture model assumption on $X$, which means that these methods can have a wider application in practice. 

\begin{center}
\begin{table}[H]
\centering
\caption{Performance of SDR methods for V- and W-shaped $X$}
\label{tab: non-mixnormal}
\begin{threeparttable}
\begin{tabular}[l]{ll|ccc}
\hline \hline
\empty &Methods  & Model 1 & Model 2  & Model 4  \\
\hline
\multirow{6}{*}{V-shaped}
&SIR         &.278(.059)  &1.14(.220) &1.26(.099)\\
& ADR        &.654(.180)  &.641(.231) &1.01(.296) \\
& MAVE       &.087(.019)  &.064(.013) &1.08(.358)\\
&eCVE        &1.41(.007)  &1.42(.010) &1.33(.114) \\
& $\msir$    &.292(.093)  &.132(.112) &.357(.163)\\
& $\rmsir$   &.166(.109)  &.115(.130) &.328(.185)\\
\hline
\multirow{6}{*}{W-shaped}
&SIR         &.317(.090)  &.289(.067) &.196(.040) \\
& ADR        &.353(.096)  &.310(.067) &.442(.110) \\
& MAVE       &.076(.021)  &.061(.013) &.372(.092) \\
&eCVE        &1.36(.243)  &.328(.085) &.648(.395)  \\
& $\msir$    &.242(.102)  &.236(.154) &.502(.178)\\
& $\rmsir$   &.151(.091)  &.161(.079) &.495(.180)\\
\hline
\hline
\end{tabular}
\begin{tablenotes}
\footnotesize
\item The meanings of numbers in each cell follow those in Table~\ref{tab: ss SIR}.
\end{tablenotes}
\end{threeparttable}
\end{table}
\end{center}

\subsection{Adjusted high-dimensional sparse SIR}

We now compare the proposed $\smsir$ and $\srmsir$ with the sparse SIR proposed in \cite{tan2018} under the high-dimensional settings, where $n$ is set at $200$ and $p$ is set at each of $200$ and $300$. The ratio $p/n$ under the latter setting is the same as the simulation studies in \cite{tan2018}. The MAVE-type methods are omitted under these settings due to their theoretical inconsistency caused by localization. 

To generate a high-dimensional mixture multivariate normal $X$, we set $q=2$ with $\pi\lo 1 = \pi\lo 2 = .5$, and set both $\Sigma\lo {1}$ and $\Sigma\lo{2}$ to be $\Sigma = 1.5[-.3]\lo p$. To set $\mu\lo 1$ and $\mu\lo 2$, we follow \cite{cai2019} to introduce a discriminant vector $\gamma = (2u,\ldots,2u,0,\ldots,0)\trans$ with the first ten entries being nonzero, and let $\mu\lo{1}=(u,\ldots,u)\trans$ and $\mu\lo 2 = \mu\lo 1-\Sigma \gamma$. The magnitude of $u$ controls the degree of separation between the two mixture components, which we set to be $.5$, $.8$, and $1$ sequentially. Note that with $u=.5$, the mixture pattern is rather weak along any direction of $X$. Given $X$, we generate $Y$ from each of
\begin{alignat*}{2}
	&\mbox{Model A:} &\quad  &Y = \{(X\lo 1 + X\lo 2 + X\lo 3)/\sqrt{3}+1.5\}\udex2 + .5\varepsilon \\
	&\mbox{Model B:} &\quad  &Y = \exp\{(X\lo 1 + X\lo 2 + X\lo 3)/\sqrt{3}\} + \varepsilon
\end{alignat*}
where $\varepsilon$ is the same random error as in Subsection 6.1. Both models share the same one-dimensional $\cs$. In Model A, the effect of $X$ on $Y$ is only moderately asymmetric over all the observations, but it is more monotone if specified for each mixture component of $X$. Both effects are monotone in Model B.

Because $Y$ is continuous in both Model A and Model B, we still set the number of slices $H$ at five in implementing the sparse SIR and the proposed $\smsir$ and $\srmsir$. Again, as mentioned in Section 4, $q$ is assumed to be known when fitting the mixture multivariate normal distribution of $X$ in implementing the proposed methods. The accuracy of these SDR methods in estimating $\cs$ is summarized in Table~\ref{tab: ss SIR HD}, and their variable selection consistency, as measured by the true positive rate and the false positive rate, is summarized in Table~\ref{tab: ss SIR HD VS}, both based on $200$ independent runs.

\begin{center}
\begin{table}[t]
\caption{Performance of sparse SDR methods in estimating $\cs$ under HD settings}
\label{tab: ss SIR HD}
\centering
\begin{threeparttable}
\resizebox{\linewidth}{!}{
\begin{tabular}[l]{@{}clccccccc}
\hline \hline
\multirow{2}{*}{Model}  &\multirow{2}{*}{Method} & \multicolumn{3}{c}{$p=200, n=200$}&\multirow{2}{*}{} &\multicolumn{3}{c}{$p=300, n=200$}\\
\cline{3-5} \cline{7-9}& &$u=.5$  &$u=.8$ &$u=1$& &$u=.5$  &$u=.8$ &$u=1$\\
\hline
\multirow{3}{*}{A}
&SIR        &.536(.169)  &.641(.314)  &1.16(.039)  &
            &.712(.085)  &.805(.069)  &.992(.051)  \\
&$\smsir$   &.342(.098)  &.431(.087)  &.487(.068)  &
            &.411(.197)  &.686(.115)  &.709(.102)  \\
&$\srmsir$  &.236(.083)  &.401(.095)  &.420(.078)  &
            &.407(.122)  &.569(.091)  &.592(.103)  \\
\hline
\multirow{3}{*}{B}
&SIR        &.665(.142)  &.795(.115)  &.973(.427)  &
            &.692(.086)  &.846(.161)  &.972(.061)  \\
&$\smsir$   &.636(.357)  &.570(.125)  &.608(.220)  &
            &.620(.233)  &.818(.111)  &.832(.139)  \\
&$\srmsir$  &.554(.407)  &.569(.106)  &.604(.185)  &
            &.586(.138)  &.713(.104)  &.704(.124)  \\
\hline
\hline
\end{tabular}}
\begin{tablenotes}
\footnotesize
\item The meanings of numbers in each cell follow those in Table~\ref{tab: ss SIR}.
\end{tablenotes}
\end{threeparttable}
\end{table}
\end{center}

\begin{center}
\begin{table}[t]
\caption{Performance of sparse SDR methods in variable selection under HD settings}
\label{tab: ss SIR HD VS}
\centering
\begin{threeparttable}
\resizebox{\linewidth}{!}{
\begin{tabular}[l]{@{}clccccccc}
	\hline \hline
\multirow{2}{*}{Model}  &\multirow{2}{*}{Method} & \multicolumn{3}{c}{$p=200, n=200$}&\multirow{2}{*}{} &\multicolumn{3}{c}{$p=300, n=200$}\\
\cline{3-5} \cline{7-9}&  &$u=.5$  &$u=.8$ &$u=1$& &$u=.5$  &$u=.8$ &$u=1$\\
\hline
\multirow{3}{*}{A}
&SIR   &1.00(.004) &1.00(.017)  &1.00(.050) &
       &1.00(.032) &1.00(.023)  &1.00(.024)\\
&$\smsir$
       &1.00(.002) &1.00(.008)  &1.00(.012) &
       &1.00(.006) &1.00(.021)  &1.00(.023)\\
&$\srmsir$
       &1.00(.001) &1.00(.008) &1.00(.008) &
       &1.00(.004) &1.00(.011) &1.00(.013)\\
\hline
\multirow{3}{*}{B}
&SIR
       &1.00(.021) &.976(.405)  &.987(.031) &
       &1.00(.018) &1.00(.039)  &1.00(.043)\\
&$\smsir$
       &.933(.006) &.956(.198)  &1.00(.020) &
       &1.00(.011) &1.00(.021)  &1.00(.024)\\
&$\srmsir$
       &.940(.009) &.964(.186)  &1.00(.021) &
       &1.00(.007) &1.00(.021)  &1.00(.021) \\
\hline
\hline
\end{tabular}}
\begin{tablenotes}
\footnotesize
\item In each cell of Columns 3-8, $a(b)$ is the true(false) positive rate of the corresponding 
\item sparse SDR method, based on $200$ replications.
\end{tablenotes}
\end{threeparttable}
\end{table}
\end{center}

From these two tables, both $\smsir$ and $\srmsir$ are clearly more effective than the sparse SIR, except for Model B with $u=.5$ where they are comparable. In both models, the sparse SIR quickly fails to recover $\cs$ as $u$ increases (although still consistent in variable selection), whereas $\smsir$ and $\srmsir$ are much more robust against this change. The reason for the slightly compromised performance of $\smsir$ and $\srmsir$ when $u$ increases, is that a larger $u$ makes the second mixture component of $\beta\lo 0\trans X$ move towards the origin in Model A and leave the origin from the left in Model B, both leading to a vanishing signal of $Y|\beta\lo 0\trans X$ in this mixture component. When $u$ is $.5$, $\smsir$ and $\srmsir$ still slightly outperform the sparse SIR, so they are useful even when the distribution of $X$ is only weakly clustered.

We next change $X$ to follow a V-shaped distribution, by first generating $(X\lo 1, X\lo 3,\ldots, X\lo p)$ from $N(0, [-.3]\lo {p-1})$ and then generating $X\lo 2$ as in the first case of (\ref{eq: generate X2}) given $X\lo 1$. Same as above, we fix $q=2$ when approximating this distribution by a mixture multivariate normal distribution using \citeauthor{cai2019}'s method, for which a more complex non-clustered distribution such as W-shaped cannot be well approximated and thus is omitted. Based on $200$ independent runs, the performance of the sparse SDR methods for Model A and Model B is recorded in Table \ref{tab: ss SIR HD VW} and Table \ref{tab: ss SIR HD VS VW}, in the same format as Table \ref{tab: ss SIR HD} and Table \ref{tab: ss SIR HD VS}, respectively. Compared with Table \ref{tab: ss SIR HD} and Table \ref{tab: ss SIR HD VS}, a less substantial but similar phenomenon can be observed, suggesting the usefulness of both $\smsir$ and $\srmsir$ when $X$ has a curved and non-clustered distribution.

\begin{center}
\begin{table}[t]
\caption{Performance of sparse SDR methods in estimating $\cs$ for HD V-shaped $X$}
\label{tab: ss SIR HD VW}
\centering
\begin{threeparttable}
\begin{tabular}[l]{@{}clccc}
\hline \hline
Model  &Method & $p=200, n=200$ & &$p=300, n=200$\\
\hline
\multirow{3}{*}{model A}
&SIR      &.625(.079) &&.672(.068)\\
&$\smsir$ &.463(.078) &&.502(.068)\\
&$\srmsir$&.432(.089) &&.499(.087)\\
\hline
\multirow{3}{*}{model B}
&SIR      &.614(.110) &&.694(.088)\\
&$\smsir$ &.503(.108) &&.545(.086)\\
&$\srmsir$&.484(.106) &&.510(.102)\\
\hline
\hline
\end{tabular}
\begin{tablenotes}
\footnotesize
\item The meanings of numbers in each cell follow those in Table~\ref{tab: ss SIR}.
\end{tablenotes}
\end{threeparttable}
\end{table}
\end{center}

\begin{center}
\begin{table}[t]
\caption{Performance of sparse SDR methods in variable selection for HD V-shaped $X$}
\label{tab: ss SIR HD VS VW}
\centering
\begin{threeparttable}
\begin{tabular}[l]{@{}clccc}
\hline \hline
Model  &Method & $p=200, n=200$& &$p=300, n=200$\\
\hline
\multirow{3}{*}{model A}
&SIR      &1.00(.001)&&1.00(.001)\\
&$\smsir$ &1.00(.001)&&1.00(.001)\\
&$\srmsir$&1.00(.001)&&1.00(.001)\\
\hline
\multirow{3}{*}{model B}
&SIR      &.987(.001) &&1.00(.001)\\
&$\smsir$ &1.00(.004) &&1.00(.004)\\
&$\srmsir$&1.00(.004) &&1.00(.004)\\
\hline
\hline
\end{tabular}
\begin{tablenotes}
\footnotesize
\item The meanings of numbers in each cell follow those in Table~\ref{tab: ss SIR HD VS}.
\end{tablenotes}
\end{threeparttable}
\end{table}
\end{center}

\subsection{Adjusted SAVE under various settings}

To illustrate the performance of the proposed $\msave$ and $\rmsave$ in comparison of SAVE and the MAVE-type methods, we generate $Y|X$ by the following two models where $\varepsilon$ is the same random error as in Subsection 6.1.
\begin{align*}
&\mbox{Model \one:} \quad  Y = \sin(X\lo 1 + X\lo 2+2) + .5 \varepsilon, \\
&\mbox{Model \two:} \quad  Y = \sin(X\lo 1)+\sin(X\lo2) + .5 \varepsilon.
\end{align*}
Similarly to the studies for the adjusted SIR,
we first generate $X$ under the mixture model $.5 N(\mu\lo 1, [.3]\lo p) + .5 N(-\mu\lo 1, [r\lo 2]\lo p)$, where $\mu\lo 1 = (-1.5, -1.5,0,\cdots,0)\trans / p$ and we set $r\lo 2 =-.3$ for Model \one \ and $r\lo 2=.3$ for Model \two; we then generate $X$ under the V-shaped distribution and under the W-shaped distribution in the same way as in Subsection 6.2. Note that both $\msave$ and $\rmsave$ are theoretically consistent only in the first case.

Let $(n,p) = (500, 10)$. For each of SAVE, $\msave$, and $\rmsave$, we set $H=5$. The MAVE-type methods are implemented in the same way as in Subsection $6.1$. The results are summarized in Table~\ref{tab: ss SAVE}. Clearly, both $\msave$ and $\rmsave$ are effective even when $X$ does not convey a clustered pattern, and they outperform both SAVE and the MAVE-type methods in most cases.  

\begin{center}
\begin{table}[t]
\caption{Performance of SDR methods in comparison of SAVE}
\label{tab: ss SAVE}
\centering
\begin{threeparttable}
\begin{tabular}[l]{@{}lccccccc}
\hline \hline
\multirow{2}{*}{Method} & \multicolumn{3}{c}{Model $\one$}&\multirow{2}{*}{} &\multicolumn{3}{c}{Model $\two$}\\
\cline{2-4} \cline{6-8}& MMN  &V-shaped &W-shaped& & MMN  &V-shaped &W-shaped\\
\hline
SAVE      &1.40(.015) &1.06(.342) 
          &.267(.050) &
          &.515(.114) &.249(.071) 
          &.311(.092)\\
ADR       &.312(.071) &.398(.072) 
          &.275(.079) &
          &.620(.207) &.916(.314) 
          &.403(.096)\\
MAVE      &.871(.625) &.129(.031) 
          &.092(.028) &
          &.371(.286) &.944(.344) 
          &.408(.236)\\
eCVE      &.191(.208) &.353(.088) 
          &.729(.425) &
          &1.14(.414) &1.42(.009) 
          &1.38(.072) \\
$\msave$  &.347(.099) &.158(.055) 
          &.532(.187) &
          &.581(.166) &.103(.029) 
          &.241(.186)  \\
$\rmsave$ &.334(.094) &.143(.028) 
          &.514(.171) &
          &.575(.160) &.106(.032) 
          &.268(.129)\\
\hline
\hline
\end{tabular}
\begin{tablenotes}
\footnotesize
\item ``MMN" represents the case that $X$ is generated under the mixture multivariate normal distribution. The meanings of numbers in each cell follow those in Table~\ref{tab: ss SIR}.
\end{tablenotes}
\end{threeparttable}
\end{table}
\end{center}

\subsection{A real data example}

The MASSCOL2 data set, available at MINITAB (in catalogue STUDENT12), was collected in $56$ four-year colleges of Massachusetts in 1995, with the research interest on assessing how the academic strength of the incoming students of each college and the features of the college itself jointly affect the percentage of the freshman class that graduate. After removing missing values, $44$ observations are left in the data. Based on preliminary exploratory data analysis, we choose seven variables as the working predictor, including the percentage of the freshman class that were in the top $25\%$ of their high school graduating class (denoted by Top 25), the median SAT Math score (MSAT), the median SAT Verbal score (VSAT), the percentage of applicants accepted by each college (Accept), the percentage of enrolled students among all who were accepted (Enroll), the student/faculty ratio (SFRatio), and the out-of-state tuition (Tuition). The same predictor was also selected in \cite{li2009}, who applied SDR to an earlier version of the data set. The response variable, which is the percentage of the freshman class that graduate, is labeled as WhoGrad in the data set. We next analyze this data set using the proposed $\msir$ and $\rmsir$.

As suggested by BIC, the distribution of $X$ in this data set can be best approximated by the mixture multivariate normal distribution if we set the number of mixture components $q$ to be two or three. Here, we use three as a conservative choice. The dimension $d$ of $\cs$ for this data set is determined to be two. To choose $H$ for $\msir$, we measure the performance of $\msir$ by the predictability of the corresponding reduced predictor, for which we calculate the out-of-sample $R\udex 2$ based on the leave-one-out cross validation. In each training set, the regression function is fitted by kernel mean estimation using the \textsf{R} package {\tt{npreg}}, with the bandwidth selected automatically from the \textsf{R} package {\tt{npregbw}}. The resulting optimal choice of $H$ is two. Likewise, the optimal choice of $H$ for $\rmsir$ is three. We then apply $\msir$ and $\rmsir$ accordingly.

To illustrate the goodness of the mixture model fit for $X$ in this data set, we draw the scatter plots of the two-dimensional reduced predictors from $\msir$ and $\rmsir$, in the upper-left and upper-right panels of Figure \ref{figure: data}, respectively. The point type for each observation $x$ is determined by an independent multinomial random variable, with probability $(\widehat \pi\lo 1 (x),\widehat \pi\lo 2 (x), \widehat \pi\lo 3 (x))$ derived from the mixture model fit. The nature of clustering in these plots suggests the appropriateness of the mixture model and consequently the effectiveness of $\msir$ and $\rmsir$ for this data set. The latter is further supported by the aforementioned out-of-sample $R\udex 2$ for the reduced predictor from these methods, which is $.87$ for $\msir$ and is $.84$ for $\rmsir$. For reference, we also apply SIR, ADR, and eCVE to the data, and the out-of-sample $R\udex 2$ for their reduced predictors are $.79$, $.61$, and $.62$, respectively. Roughly, this means that the variance of the error in the reduced data derived by $\msir$ is $38\%$ less than that by SIR, $67\%$ less than that by ADR, and $66\%$ less than that by eCVE. The out-of-sample $R\udex 2$ for MAVE is $.86$, so MAVE performs equally well for this data set as the proposed SDR methods. 

\begin{center}
\begin{figure}[t]
\setlength{\abovecaptionskip}{0pt}
\setlength{\belowcaptionskip}{-10pt}
\includegraphics[width=.33\textwidth]{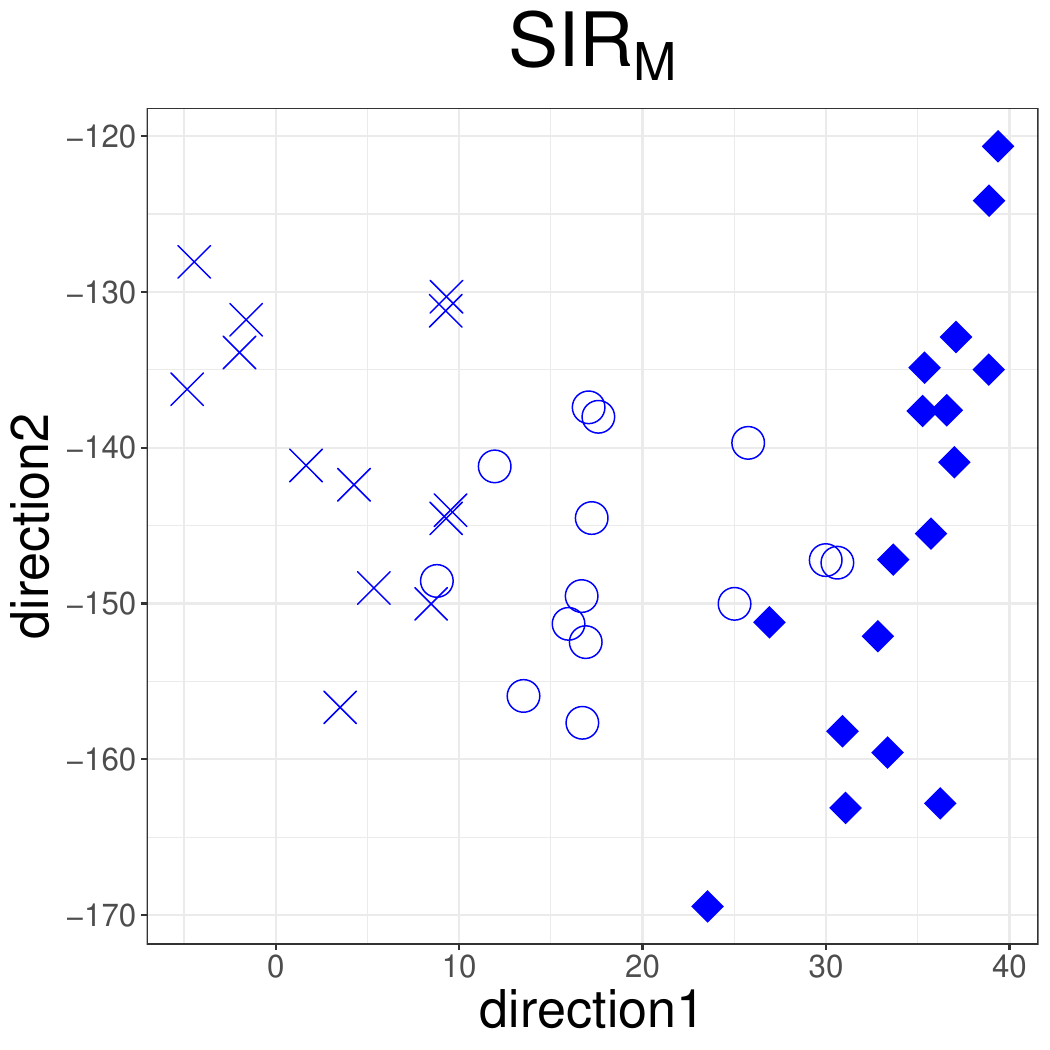}
\includegraphics[width=.33\textwidth]{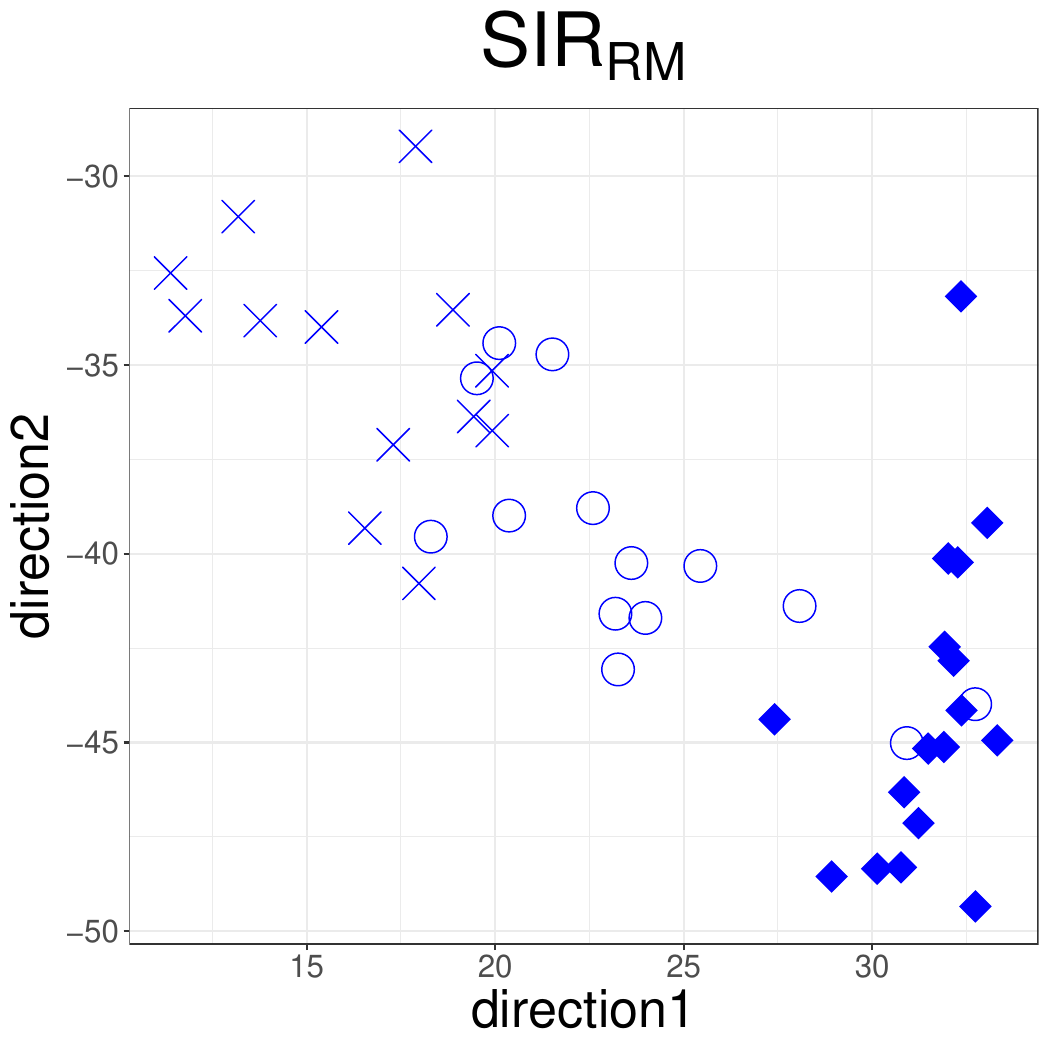}\vspace{1.5em}
\includegraphics[width=.40\textwidth]{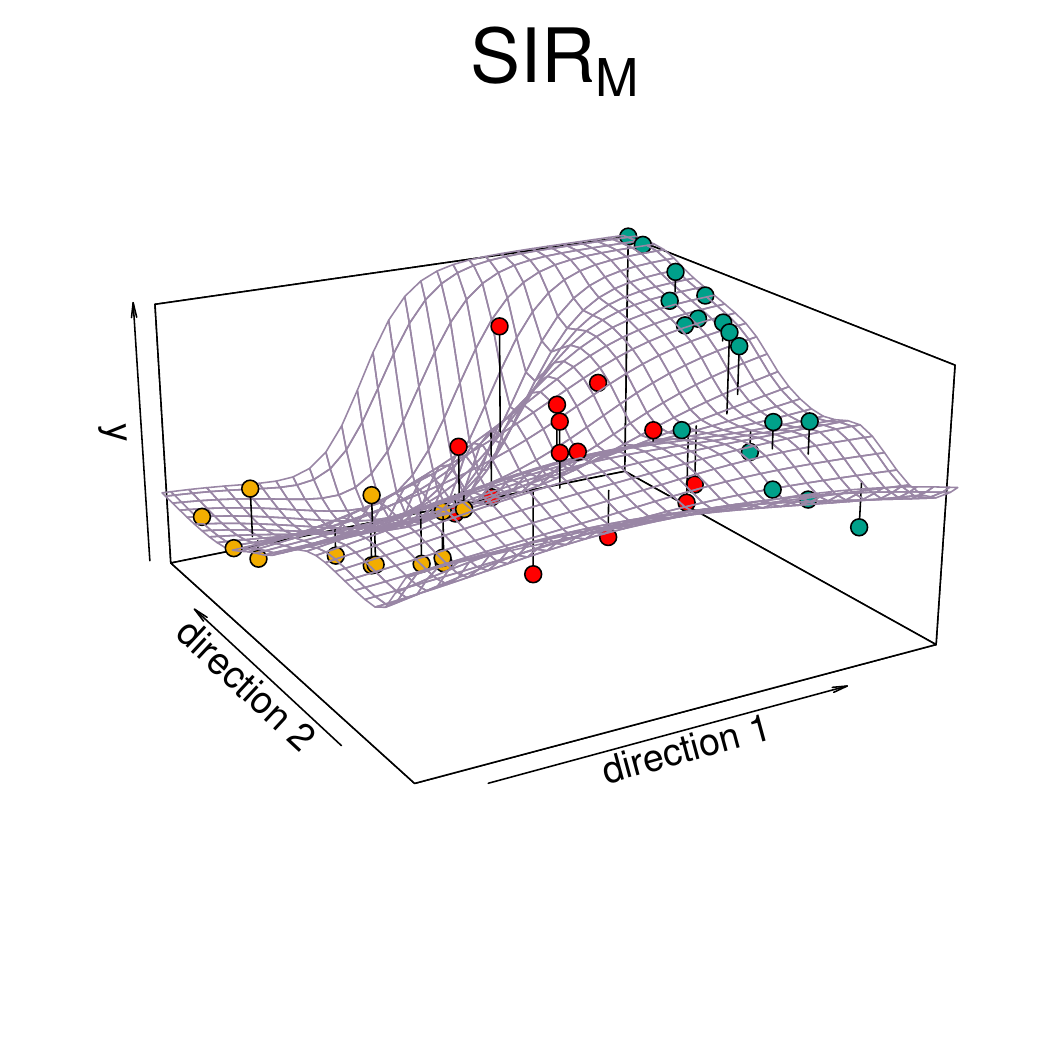}
\includegraphics[width=.40\textwidth]{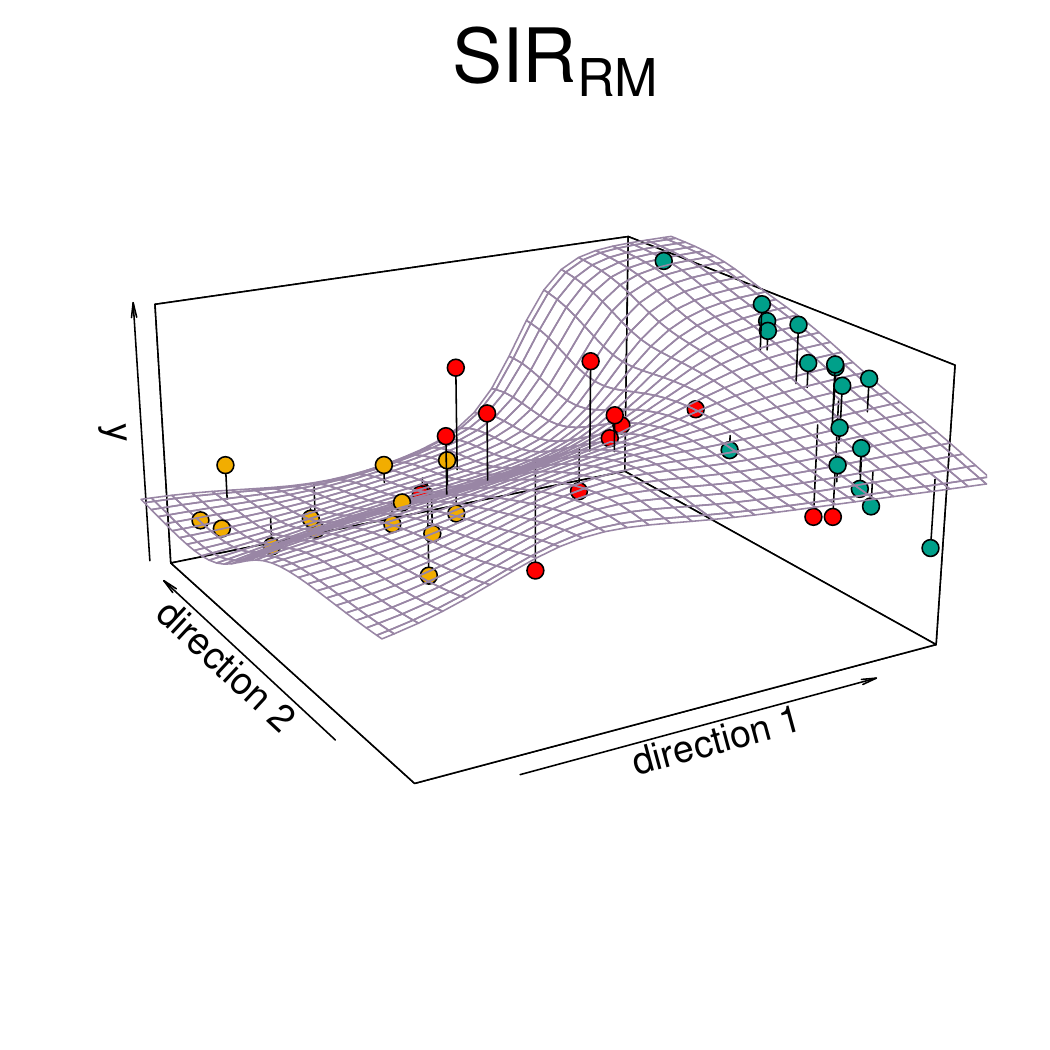}
\caption{In the upper panel are the scatter plots of the reduced predictors, the left for $\msir$ and the right for $\rmsir$; in the lower panel are the 3D scatter plots of the reduced data with $Y$ in the $z$-axis, as well as the fitted surfaces by the kernel mean estimation, again the left for $\msir$ and the right for $\rmsir$.}
\label{figure: data}
\end{figure}
\end{center}

To visualize the effect of the reduced predictor from $\msir$ on the response, we draw the $3$-dimensional scatter plot of the reduced data in the lower-left panel of Figure \ref{figure: data}. Clearly, both components of the reduced predictor uniquely affect the response, and their effect interact with each other: the effect of the first component changes from moderately positive to dramatically positive as the second component grows, and the effect of the second component changes from monotone decreasing to monotone increasing as the first component grows. A similar pattern can be observed from the $3$-dimensional scatter plot of the reduced data from $\rmsir$, as depicted in the lower-right panel of Figure \ref{figure: data}. To better interpret these, we report the coefficients of the working reduced predictor for each SDR method in Table \ref{tab: data coefficients}, with each variable in the original predictor standardized to have the unit variance. 

From Table \ref{tab: data coefficients}, both  $\msir$ and $\rmsir$ suggest that the median SAT Verbal score, the percentage of applicants accepted by each college, and the out-of-state tuition are the dominating variables in predicting the percentage of the freshman class that graduate, among which the out-of-state tuition seems to constantly contribute a positive effect. These may help the researchers comprehend the mechanism underlying the data set in the future investigation.

\begin{center}
\begin{table}[t]
\caption{Coefficients of the reduced predictor from SDR for MASSCOL2 data set}
\label{tab: data coefficients}
\centering
\begin{threeparttable}
\begin{tabular}[l]{@{}lrrcrr}
\hline \hline
\multirow{2}{*}{$X$}   & \multicolumn{2}{c}{$\msir$}&\multirow{2}{*}{} &\multicolumn{2}{c}{$\rmsir$}\\
\cline{2-3} \cline{5-6}&direction 1  & direction 2 & &direction 1 &direction 2   \\
\hline
Top 25 & .201 &-.014 & &.193   &-.186\\
MSAT   &-.197 &-.041 & &-.076  &.330\\
VSAT   &-.125 &-.516 & &-.213  &-.726\\
Accept &-.193 &-.846 & &-.189  &-.431\\
Enroll &.069  &.016  & &.133   &.299\\
SFRatio&.223  &-.035 & &.324   &-.027\\
Tuition&.901  &-.118 & &.866   &.226\\
\hline
\hline
\end{tabular}
\end{threeparttable}
\end{table}
\end{center}

\section{Discussion}

In the proposed work, we conduct SDR under the mixture model of $X$, which relaxes the linearity and constant variance conditions commonly adopted in the inverse regression methods, and meanwhile provides a parametric analogue of the MAVE-type methods such as ADR. Complying with the wide applicability of mixture models in approximating unknown distributions, our numerical studies suggest the effectiveness of the proposed SDR methods for generally distributed predictors, even when the clustered pattern is not obvious.

One issue that is underrepresented in this article is the difficulty of fitting the mixture model. This includes a possibly high computational cost for implementing the maximal likelihood estimator, which is sensitive to the initial value, as well as the potential failure of recovering the mixture components with small proportions when the underlying mixture model is severely unbalanced. As mentioned in Section 4, under the high-dimensional settings, only the mixture multivariate normal distribution has been studied in the existing literature, with the number of mixture components $q$ assumed known, and with the homogeneity of $\Sigma\lo i$'s and the sparsity of $\Sigma\lo i\inv (\mu\lo i - \mu\lo j)$'s imposed for estimation consistency. These limitations may restrict the applicability of the proposed SDR methods, especially for large-dimensional data.   

Finally, we speculate that the proposed strategy can be applied to adjust other SDR methods besides inverse regression. For example, we can assume a mixture model on $X|Y$ instead of on $X$, which will adjust the SDR methods that assume $X|Y$ to have an elliptical contoured distribution or fall in the exponential family \citep{cook2007fisher,cook2008pfc}. These extensions may further facilitate the applicability of SDR for data sets with large-dimensional and non-elliptically distributed predictor. 

\begin{acks}[Acknowledgments]
\end{acks}

\appendix
\section{Regularities and other adjusted SDR methods}

In this Appendix, we include the adjustments of pHd and directional regression, as well as the regularity conditions (C1)-(C3) needed for Theorem 3 in the main text. The equations in this Appendix are labeled as (A.1), (A.2), etc.

\setcounter{equation}{0}
\renewcommand\theequation{A.\arabic{equation}}

\subsection{Adjusted pHd and directional regression}
\def\mphd{{\Omega}\lo {\mathrm {pHd }}}
\def\hmphd{\widehat {\Omega}\lo {\mathrm {pHd }}}
\def\hsphd{{\mathrm {pHd}}\lo {\mathrm {M}}}
\def\gphd{G\lo {\mathrm {pHd}}}
\def\hgphd{\widehat G\lo {\mathrm {pHd}}}
\def\hsrphd{{\mathrm {pHd}}\lo {\mathrm {RM}}}

\def\mdr{{\Omega}\lo {\mathrm {DR}}}
\def\gdr{G\lo {\mathrm {DR}}}
\def\hgdr{\widehat G\lo {\mathrm {DR}}}
\def\hsdr{{\mathrm {DR}}\lo {\mathrm {M}}}
\def\hsrdr{{\mathrm {DR}}\lo {\mathrm {RM}}}

The candidate matrix for pHd is $\Sigma\lo X\inv E[(X-\mu\lo X)\udex {\otimes 2} \{Y -E(Y)\}] \Sigma\lo X\inv$. Thus, the first adjustment of pHd is to recover the central mean subspace $\cms$ by the column space of
\begin{align}\label{eq: mphd}
	\mphd = \left[ \Sigma\lo i\inv \left[ E\{\pi\lo i (X) Y\} \Sigma\lo i  - E\{\pi\lo i (X)(X - \mu\lo i)\udex {\otimes 2} Y\} \right] \Sigma\lo i\inv \right]\lo {i\in\{1,\ldots, q\}}.
\end{align}
A $n\udex {1/2}$-consistent estimator $\hmphd$ can be constructed by plugging the mixture model fit and using the sample moments, whose leading left singular vectors span a $n\udex {1/2}$-consistent estimator of $\cms$, denoted by $\hsphd$. In addition, define
\begin{align*}
	\gphd (\spn \beta) = \sum\lo {j=1}\udex q\left\| E \left[\{ \pi\lo i (\beta\trans X) \Sigma\lo i Q(\Sigma\lo i, \beta) - X\udex {\otimes 2} + \mu\lo {2}(\beta\trans X)\} Y \pi\lo j(\beta\trans X)\right]\right\|\lo F\udex 2
\end{align*} 
where $\mu\lo {2}(\beta\trans X) = \tsum\lo {i=1}\udex q \pi\lo i (\beta\trans X) \{P\trans (\Sigma\lo i, \beta) (X - \mu\lo i) + \mu\lo i\}\udex {\otimes 2}$. The second adjustment of pHd is to minimize a sample-level $\hgphd(\cdot)$, and the resulting minimizer spans a $n\udex {1/2}$-consistent estimator of $\cms$, denoted by $\hsrphd$.

Let $(\widetilde X, \widetilde Y)$ be an independent copy of $(X, Y)$. The candidate matrix for directional regression is $\Sigma\lo X\inv E\{2\Sigma\lo X - \var(X - \widetilde X | Y, \widetilde Y)\}\udex 2 \Sigma\lo X\inv$. The first adjustment of directional regression is to recover $\cs$ by the column space of
\begin{align*} 
	\mdr = & E \left[ \Sigma\lo i\inv \left[ 2 E\{\pi\lo i (X) |Y\lo S = s\} E\{\pi\lo i (\widetilde X) | \widetilde Y\lo S = t\} \Sigma\lo i \right.\right.\\
	& \hspace{.5cm} \left.\left.- E\{ \pi\lo i (X) \pi\lo i (\widetilde X)(X - \widetilde X)\udex {\otimes 2} | Y\lo S =s , \widetilde Y\lo S = t\} \right] \Sigma\lo i \inv  \right]\lo {s, t \in \{ 1,\ldots, H\}, i\in\{1,\ldots, q\}}.
\end{align*}
A $n\udex {1/2}$-consistent estimator of $\mdr$ can be constructed by plugging the mixture model fit and using the sample moments, whose leading left singular vectors span a $n\udex {1/2}$-consistent estimator of $\cs$, denoted by $\hsdr$. In addition, we propose the following objective function
\begin{align*}
	\gdr (\spn \beta) = & \left\| E\left[ (\pi\lo i (\beta\trans X) \pi\lo j (\beta\trans \widetilde X))\lo {i,j\in\{1,\ldots, q\}} \right.\right. \\
	& \hspace{.7cm} \left[ \tsum\lo {i=1}\udex q E\{\pi\lo i (\beta\trans X) | Y\lo S = s \} \Sigma\lo i Q(\Sigma\lo i, \beta)\right. \\
	& \hspace{.9cm}+ \tsum\lo {j=1}\udex q E\{\pi\lo {j} (\beta\trans \widetilde X) | \widetilde Y\lo S = t \} \Sigma\lo j Q(\Sigma\lo j, \beta)   \\
	& \hspace{.9cm}- E\{(X - \widetilde X)\udex {\otimes 2} | Y\lo S =s, \widetilde Y\lo S = t\}   \\
	& \hspace{.9cm} + \tsum\lo {i=1}\udex q E[\pi\lo i (\beta\trans X) \{ P\trans (\Sigma\lo i, \beta) (X - \mu\lo i) + \mu\lo i\}\udex {\otimes 2} | Y\lo S = s]  \\
	& \hspace{.9cm} + \tsum\lo {j=1}\udex q E[\pi\lo { j} (\beta\trans \widetilde X) \{ P\trans (\Sigma\lo j, \beta) (\widetilde X - \mu\lo { j}) + \mu\lo j\}\udex {\otimes 2} |\widetilde Y\lo S = t] \\
	& \hspace{.9cm} - [\tsum\lo {i=1}\udex q E[ \pi\lo i (\beta\trans X) \{P\trans (\Sigma\lo i, \beta) (X - \mu\lo i) + \mu\lo i\} | Y\lo S = s] \\
	& \hspace{1.6cm} [\tsum\lo {j=1}\udex q E[ \pi\lo j (\beta\trans \widetilde X) \{P\trans (\Sigma\lo j, \beta) (\widetilde X - \mu\lo j) + \mu\lo j\} | \widetilde Y\lo S = t]]\trans  \\
	& \hspace{.9cm} -  [\tsum\lo {j=1}\udex q E[ \pi\lo j (\beta\trans \widetilde X) \{P\trans (\Sigma\lo j, \beta) (\widetilde X - \mu\lo j) + \mu\lo j\} | \widetilde Y\lo S = t]]  \\
	& \hspace{1.4cm} \left.\left.\left. [\tsum\lo {i=1}\udex q \!E[ \pi\lo i (\beta\trans X) \{P\trans (\Sigma\lo i, \beta) (X \!- \!\mu\lo i) \!+\! \mu\lo i\} | Y\lo S\! =\! s]]\trans \right]\udex {\otimes 2}  \right] \right\|\lo F\udex 2.
\end{align*}
The second adjustment of directional regression is to minimize a sample-level $\hgdr(\cdot)$, and the resulting minimizer spans a $n\udex {1/2}$-consistent estimator of $\cs$, denoted by $\hsrdr$. The minimization of both $\hgphd(\cdot)$ and $\hgdr(\cdot)$ requires iterative algorithms, which resemble
those for implementing $\rmsir$ and $\rmsave$ discussed later in Appendix B and thus are omitted.

\subsection{Regularity conditions for Theorem 3}
\noindent
(C1). The signal-to-noise ratio $\Delta$ of the mixture multivariate normal distribution, defined as $\{(\mu\lo 1 - \mu\lo 2)\trans \Sigma\inv (\mu\lo 1 - \mu\lo 2)\}\udex {1/2}$, satisfies $\Delta > C(c\lo 0, c\lo 1, M, C\lo b)$, where $C(c\lo 0, c\lo 1, M, C\lo b)$ is given in (C.24) in supplementary material for \cite{cai2019}.

\noindent
(C2). $Y$ has a bounded support. $E(X|Y)$ has a total variation of order $0.25$, i.e.
\begin{align*}
	\sup\lo {B > 0} \lim\lo {n \rightarrow \infty} n\udex {-1/4} \sup\lo {\Xi(B)} \tsum\lo {i=2}\udex n \| E(X | Y = a\lo i) - E(X | Y = a\lo {i-1})\|\lo \max = 0,
\end{align*}
where $\|v\|\lo \max = \max \{v\lo i: i\in\{1,\ldots,p\}\}$ for any $v = (v\lo 1,\ldots, v\lo p)\trans \in \real\udex p$, and, for any $B > 0$,
$\Xi (B)$ is the collection of all the $n$-point partitions $-B \leq a\lo {1} \leq \ldots \leq a\lo n \leq B$.

\noindent
(C3). Define the $s\lo 2$-sparse minimal and maximal eigenvalues of $\Sigma$ as
\begin{align*}
	\lambda\lo \min (\Sigma, s\lo 2) = \min \lo {\|v\|_0 \leq s_2} (v\trans \Sigma v / v\trans v), \quad
	\lambda\lo \max (\Sigma, s\lo 2) = \max \lo {\|v\|_0 \leq s_2} (v\trans \Sigma v / v\trans v)
\end{align*}
where $\|v\|\lo 0$ is the number of nonzero elements in $v$ for any $v\in \real\udex p$. Assume that there exist constants $c\lo 1, c\lo 2 > 0$ such that $c\lo 1 \leq \lambda\lo \min (\Sigma, s\lo 2) \leq \lambda\lo \max (\Sigma, s\lo 2) \leq c\lo 2$.

\section{Algorithms for the proposed methods}

\setcounter{equation}{0}
\renewcommand\theequation{B.\arabic{equation}}

\def\hbmsir{\widehat\beta\lo {\mathrm {mSIR}}}
\def\hbrmsir{\widehat\beta\lo {\mathrm {rmSIR}}}
\def\hbmsave{\widehat\beta\lo {\mathrm {mSAVE}}}
\def\hbrmsave{\widehat\beta\lo {\mathrm {rmSAVE}}}
\def\hMsave{\widehat\Omega\lo {2}}

In this Appendix, we elaborate the details in implementing $\msir$, $\rmsir$, $\msave$, and $\rmsave$.

\subsection{Algorithms for $\msir$ and $\rmsir$}

Given $\hmo$, $\msir$ is readily derived by singular value decomposition. We list the detailed steps in Algorithm \ref{alg: es1_sir}.

\begin{algorithm}[t]
	\caption{$\msir$}
	\label{alg: es1_sir}
	\begin{algorithmic}[1] 
		\STATE Use the EM algorithm to calculate the maximal likelihood estimators $\widehat\mu\lo i$, $\widehat\Sigma\lo i$, and $\widehat \pi\lo i$ for $i\in\{1,\dots, q\}$, and derive the posterior probability $\widehat\pi\lo i(X)$ accordingly;
		\label{code:step1}
		\STATE Plug the estimators above into $\mo$ to derive
		\begin{align*}
			\hmo = E\lo n[\{\widehat \Sigma\lo i\inv(X-\widehat\mu\lo i)\widehat\pi\lo i(X)\vec Y\lo S\}\lo {i\in\{1,\ldots, q\}}]
		\end{align*}
		and calculate the linear span of the leading $d$ left singular vectors of $\hmo$ to estimate $\cs$.
	\end{algorithmic}
\end{algorithm}

To implement $\rmsir$, we fix $\beta\lo 0$ at $\widehat \beta\lo 1$ in $D(\beta\lo 0\trans X)$ and each $\pi\lo i (\beta\lo 0\trans X)$ involved in $\go (\cdot)$, where $\widehat \beta\lo 1$ is an arbitrary orthonormal basis matrix of $\spn \hmo$ derived from $\msir$. The estimators $\widehat \pi\lo i (\widehat \beta\lo 1\trans X)$ are then derived from the mixture model fit mentioned in Section 2, which deliver
\begin{align*}
	\widehat D (\widehat \beta\lo 1\trans X) = \tsum\lo{i=1}\udex q\tsum\lo{j=1}\udex q(\widehat\beta\lo1\trans \widehat\Sigma\lo j \widehat\beta\lo1)\inv
	E\lo n\{ \widehat\beta\lo1\trans(X-\widehat\mu\lo j)\vec Y\lo S \widehat\pi\lo i(\widehat\beta\lo1\trans X) \widehat\pi\lo j(\widehat\beta\lo1\trans X) \}\widehat\pi\lo i(\widehat\beta\lo1\trans X).
\end{align*}
Together with $\widehat R\lo X (\spn \beta) = \tsum\lo {i=1}\udex q \widehat \pi\lo i (\widehat \beta\lo 1\trans X) \{P\trans(\widehat \Sigma\lo i, \beta) (X - \widehat \mu\lo i ) + \widehat \mu\lo i \}$, we have
\begin{align}\label{eq: hgo}
\hgo (\spn \beta) \equiv \left\| E\lo n [\{X - \widehat R\lo X (\spn \beta)\} D\lo s ] \right\|\lo F\udex 2.
\end{align}
where $D\lo s$ denotes $\vecc\trans \{\widehat D (\widehat \beta\lo 1\trans X) \diag(\vec Y\lo S)\}$. To minimize $\hgo(\cdot)$, we propose an iterative algorithm as follows. Start with $\msir$ as the initial estimator of $\cs$. In the $k$th iteration, let $\widehat \beta\lo k$ be an orthonormal basis matrix of the most updated estimate of $\cs$. We replace each $P\trans (\widehat \Sigma\lo i, \beta)$ in $\hgo(\spn \beta)$ with $\widehat \Sigma\lo i \beta (\widehat \beta\lo k\trans \widehat \Sigma\lo i \widehat \beta\lo k)\inv \widehat \beta\lo k\trans$, which changes $\hgo (\spn \beta)$ into a quadratic function of $\beta$ that can be readily minimized, with the minimizer being the update $\spn {\widehat \beta\lo {k+1}}$. We repeat the iterations until a prefixed convergence threshold for $\delta (\widehat \beta\lo k, \widehat \beta\lo {k+1})$ defined in (\ref{eq: distance S hat S}) of the main text is met. These are summarized as Algorithm \ref{alg: es2_sir}. 

\begin{algorithm}[t]
	\caption{$\rmsir$}
	\label{alg: es2_sir}
	\begin{algorithmic}[1] 
		\STATE Use $\msir$ to estimate $\cs$ in each $\pi\lo i (\beta\lo 0\trans X)$, and use the mixture model fit to construct $\hgo (\spn \beta)$ as in (\ref{eq: hgo}); let $\spn {\widehat \beta\lo 1}$ derived from $\msir$ be the initial value of $\spn \beta$.
		\label{code:step1}
		\STATE In the $k$th iteration, replace each $P\trans (\widehat \Sigma\lo i, \beta)$ in $\hgo(\spn \beta)$ with $\widehat \Sigma\lo i \beta (\widehat \beta\lo k\trans \widehat \Sigma\lo i \widehat \beta\lo k)\inv \widehat \beta\lo k\trans$, and solve the corresponding quadratic minimization problem to derive $\spn {\widehat \beta\lo {k+1}}$.
		\STATE Repeat the iterations until $\delta (\widehat \beta\lo k, \widehat \beta\lo {k+1}) < c\lo 0$ for some prefixed small scalar $c\lo 0$, e.g. $n\inv$. The final estimator of $\cs$ is the most updated $\spn {\widehat \beta\lo {k+1}}$.
		\label{code:step3}
	\end{algorithmic}
\end{algorithm}

\subsection{Algorithms for $\msave$ and $\rmsave$ }

Given $\Msave$, $\msave$ is readily derived by singular value decomposition. We list the detailed steps in Algorithm \ref{alg: es1_save}.

\begin{algorithm}[t]
	\caption{ $\msave$}
	\label{alg: es1_save}
	\begin{algorithmic}[1] 
		\STATE Apply EM algorithm to estimate $\widehat\mu\lo i$, $\widehat\Sigma\lo i$, and $\widehat\pi\lo i(X)$, $i\in\{1,\ldots, q\}$;
		\label{code:step1}
		\STATE Calculate the span of leading $d$ left singular vectors of $\hMsave$ as $\msave$, denoted by $\spn {\hbmsave}$.
		\label{code:step2}
	\end{algorithmic}
\end{algorithm}

To implement $\rmsave$, we fix $\beta\lo 0$ at $\widehat \beta\lo 1$ in each $\pi\lo i (\beta\lo 0\trans X)$ involved in $\gsave (\cdot)$, where $\widehat \beta\lo 1$ is an arbitrary orthonormal basis matrix of $\spn \hMsave$ derived from $\msave$. The estimators $\widehat \pi\lo i (\widehat \beta\lo 1\trans X)$ are then derived from the mixture model fit mentioned in Section 2. We have, 
\begin{align}\label{eq:hgsave}
\widehat \G\lo 2(\spn \beta)\equiv 
\left\|\tsum\lo{i=1}\udex q\tsum\lo{h=1}\udex H (A\lo{i,h}+B\lo{i,h})\udex{\otimes 2} \right\|\lo F\udex 2. 
\end{align}
where
\begin{align}\notag
A\lo{i,h} &= \E\lo n[X\udex{\otimes 2}\widehat\pi\lo i\udex 2 (\widehat\beta\lo1\trans X)Y\lo {S, h}] - \tsum\lo{j=1}\udex q E\lo n[\widehat\pi \lo i\udex2 (\widehat\beta\lo1\trans X) \widehat\pi\lo j (\widehat\beta\lo1\trans X)Y\lo {S, h}]\widehat\Sigma \lo j \\
\notag
& \quad - E\lo n[\widehat\pi\lo i \udex 2(\widehat\beta\lo1\trans X)
\mu \lo 2({\beta\trans} X) Y\lo {S, h}]\\
\notag
B\lo{i,h} &= \tsum\lo{j=1}\udex q E\lo n[\widehat\pi \lo i\udex2 (\widehat\beta\lo1\trans X) \widehat\pi\lo j (\widehat\beta\lo1\trans X) Y\lo {S, h}]\widehat\Sigma \lo j P(\widehat\Sigma\lo j,{\beta})
\end{align}
To minimize $\widehat \G\lo2(\cdot)$, we propose an iterative algorithm as follows. Start with $\msave$ as the initial estimator of $\cs$. In the $k$th iteration, let $\widehat \beta\lo k$ be an orthonormal basis matrix of the most updated estimate of $\cs$. We replace each $\mu\lo 2(\beta\trans X)$ in $A\lo{i,h}$ with $\mu\lo 2(\widehat\beta\lo k\trans X)$, and replace each $P(\widehat \Sigma\lo i, \beta)$ in $B\lo{i,h}$ with $\beta(\widehat \beta\lo k\trans \widehat \Sigma\lo i \widehat \beta\lo k)\inv \widehat\beta\lo k\trans\widehat \Sigma\lo i$, which changes $\widehat \G\lo2 (\spn \beta)$ into a quadratic function of $\beta$ that can be readily minimized, with the minimizer being the update $\spn {\widehat \beta\lo {k+1}}$. We repeat the iterations until a prefixed convergence threshold for $\delta (\widehat \beta\lo k, \widehat \beta\lo {k+1})$ defined in (\ref{eq: distance S hat S}) of the main text is met. These are summarized as Algorithm \ref{alg:es2_save}.

\begin{algorithm}[t]
\caption{ $\rmsave$ }
\label{alg:es2_save}
\begin{algorithmic}[1] 
\STATE Use $\msave$ to estimate $\cs$ in each $\pi\lo i (\beta\lo 0\trans X)$, and use the mixture model fit to construct $\widehat \G\lo 2(\spn \beta)$ as in (\ref{eq:hgsave}); let $\spn {\widehat \beta\lo 1}$ derived from $\msave$ be the initial value of $\spn \beta$.
\label{code:step1}
\STATE In the $k$th iteration, replace each $\mu\lo 2(\beta\trans X)$ in $A\lo{i,h}$ with $\mu\lo 2(\widehat\beta\lo k\trans X)$, and replace each $P(\widehat \Sigma\lo i, \beta)$ in $B\lo{i,h}$ with $\beta(\widehat \beta\lo k\trans \widehat \Sigma\lo i \widehat \beta\lo k)\inv \widehat\beta\lo k\trans\widehat \Sigma\lo i$, and solve the corresponding quadratic minimization problem to derive $\spn {\widehat \beta\lo {k+1}}$.
\STATE Repeat the iterations until $\delta (\widehat \beta\lo k, \widehat \beta\lo {k+1}) < c\lo 0$ for some prefixed small scalar $c\lo 0$, e.g. $n\inv$. The final estimator of $\cs$ is the most updated $\spn {\widehat \beta\lo {k+1}}$.
\label{code:step3}
\end{algorithmic}
\end{algorithm}

\section{Complementary simulation studies}

In this Appendix, we present the complementary simulation studies to evaluate how the proposed $\msir$ and $\rmsir$ are impacted by the estimation error in fitting the mixture model of $X$, as well as by hypothetical misspecification of the number of mixture components $q$, under the conventional low-dimensional settings and assuming that the mixture model of $X$ holds. The misspecification of $q$ does not occur in the simulation studies of the main text due to the effectiveness of BIC, but it is possible in real data analyses and thus worth the investigation. 

Generally, for an arbitrarily fixed working number of mixture components $k$, the corresponding mixture model misspecifies the distribution of $X$ if $k < q$, and it still correctly specifies the distribution if $k > q$. Thus, the proposed SDR methods theoretically lose consistency if $k < q$ and are still consistent if $k > q$. As an illustration, we set $k$ to be each of $1$, $2$, $3$, and $4$ sequentially for all the models in Subsection 6.1 of the main text, and apply $\msir$ and $\rmsir$ in each case. Note that this procedure differs from the implementation of $\msir$ and $\rmsir$ in Subsection 6.1, as $q$ is forced to be a fixed value rather than being estimated by BIC. After repeating the procedure for $200$ times, the results are summarized in Table \ref{tab: q misspecified}. When $k=1$, both $\msir$ and $\rmsir$ reduce to SIR, so the result of SIR in Table \ref{tab: ss SIR} is used for this case.

\begin{center}
\begin{table}[t]
\centering
\caption{Performance of $\msir$ and $\rmsir$ under misspecification of $q$}
\label{tab: q misspecified}
\begin{threeparttable}
\begin{tabular}[l]{ll|ccccc}
\hline \hline
$\widehat q$ &Methods  & Model 1  & Model 2  & Model 3 & Model 4\\ 
\hline
$1$ 
&SIR      &.908(.136)   &.910(.128) 
          &1.08(.179)   &.634(.127)\\
\hline
\multirow{2}{*}{$2$}
&$\msir$  &.590(.151) &.436(.094) 
          &.553(.111) &.375(.058)\\
&$\rmsir$ &.555(.146) &.402(.088) 
          &.502(.109) &.320(.076) \\
\hline
\multirow{2}{*}{$3$}
&$\msir$  &.698(.191) &.506(.154) 
          &.608(.192) &.374(.105)\\
&$\rmsir$ &.642(.156) &.416(.128) 
          &.545(.174) &.338(.098)\\
\hline
\multirow{2}{*}{$4$}
&$\msir$ &.801(.233)  &.549(.188) 
         &.636(.217)  &.388(.104)\\
&$\rmsir$ &.689(.194) &.435(.163) 
          &.590(.189) &.350(.095)\\
\hline
\multirow{3}{*}{\empty}
&ADR      &.611(.176) &.473(.138) 
          &1.31(.113) &.645(.189)\\
&MAVE     &.745(.229) &.223(.056) 
          &.895(.278) &.326(.089) \\
&eCVE     &1.42(.126) &.452(.128) 
          &.609(.222) &1.29(.226) \\
\hline
\end{tabular}
\begin{tablenotes}
\footnotesize
\item The meanings of numbers in each cell follow those in Table~\ref{tab: ss SIR}.
\end{tablenotes}
\end{threeparttable}
\end{table}
\end{center}

Recall that $q$ is $2$ for Models $1-3$ and is $3$ for Model 4. Thus, Table \ref{tab: q misspecified} illustrates both the impact of overestimation of $q$ and the impact of underestimation of $q$ on the proposed methods for each model. Compared with both the benchmark and the worst case scenario, i.e. the proposed methods with truly specified $q$ and SIR, respectively, an overestimation of $q$ brings negligible additional bias to the proposed methods and makes them still comparable and sometimes outperform the MAVE-type methods. By contrast, an underestimate $\widehat q=1$ brings significant bias to the proposed methods (i.e. SIR) for all the four models. These comply with the theoretical anticipation above. 

An interesting exception in Table \ref{tab: q misspecified} is the case of $\widehat q$ forced to be $2$ for Model 4, which suggests the effectiveness of the proposed methods even though $q$ is slightly underestimated. The reason is that the mixture of three multivariate normal distributions in this model can still be approximated by a mixture of two multivariate normal distributions to a good extent. To see this, we draw the scatter plot of $(X\lo 1, X\lo 2)$, the sub-vector of $X$ that has the clustered pattern, in Figure \ref{figure: model 4 underest q}. The point type of each observation $x$ is determined by an independent Bernoulli random variable, whose probability of success is $\widehat \pi\lo 1 (x)$ derived from the mixture model fit with $\widehat q = 2$. The two fitted clusters are then formed by the observations that have the same point types. From Figure \ref{figure: model 4 underest q}, each cluster conveys an approximate elliptical distribution, and, in particular, their loess fits of $E(X | \beta\lo 0\trans X)$ are approximately linear.

\begin{figure}[H]
\begin{center}
\includegraphics[width=.3\textwidth]{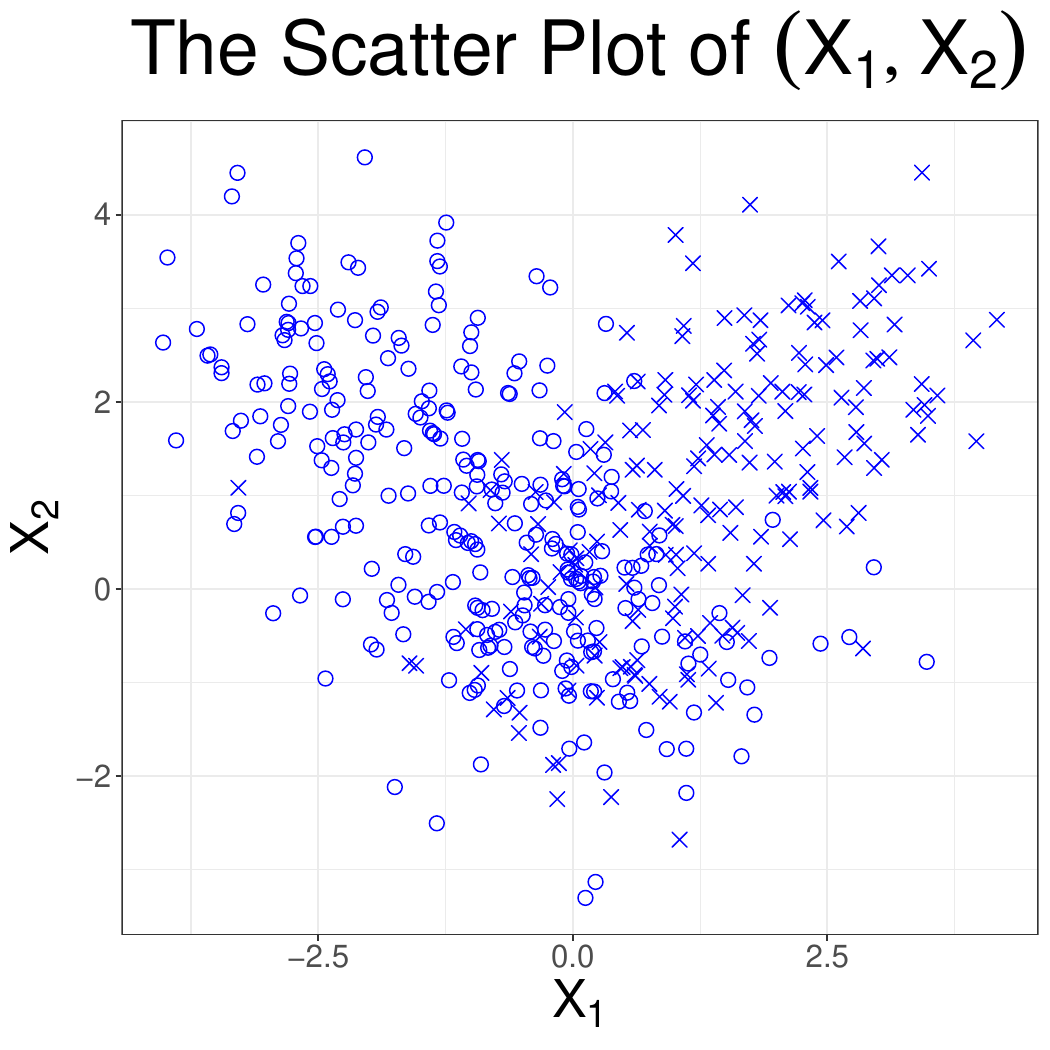}
\includegraphics[width=.3\textwidth]{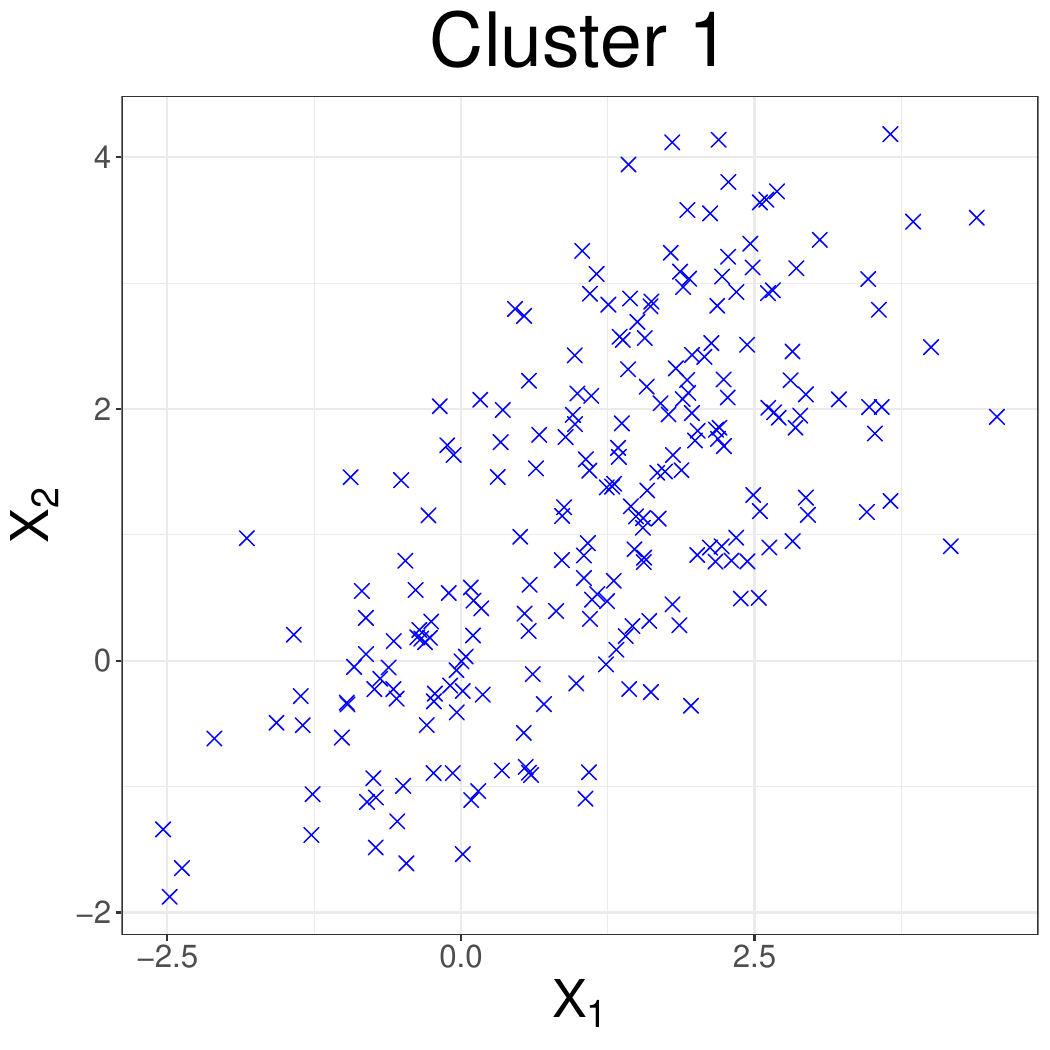}
\includegraphics[width=.3\textwidth]{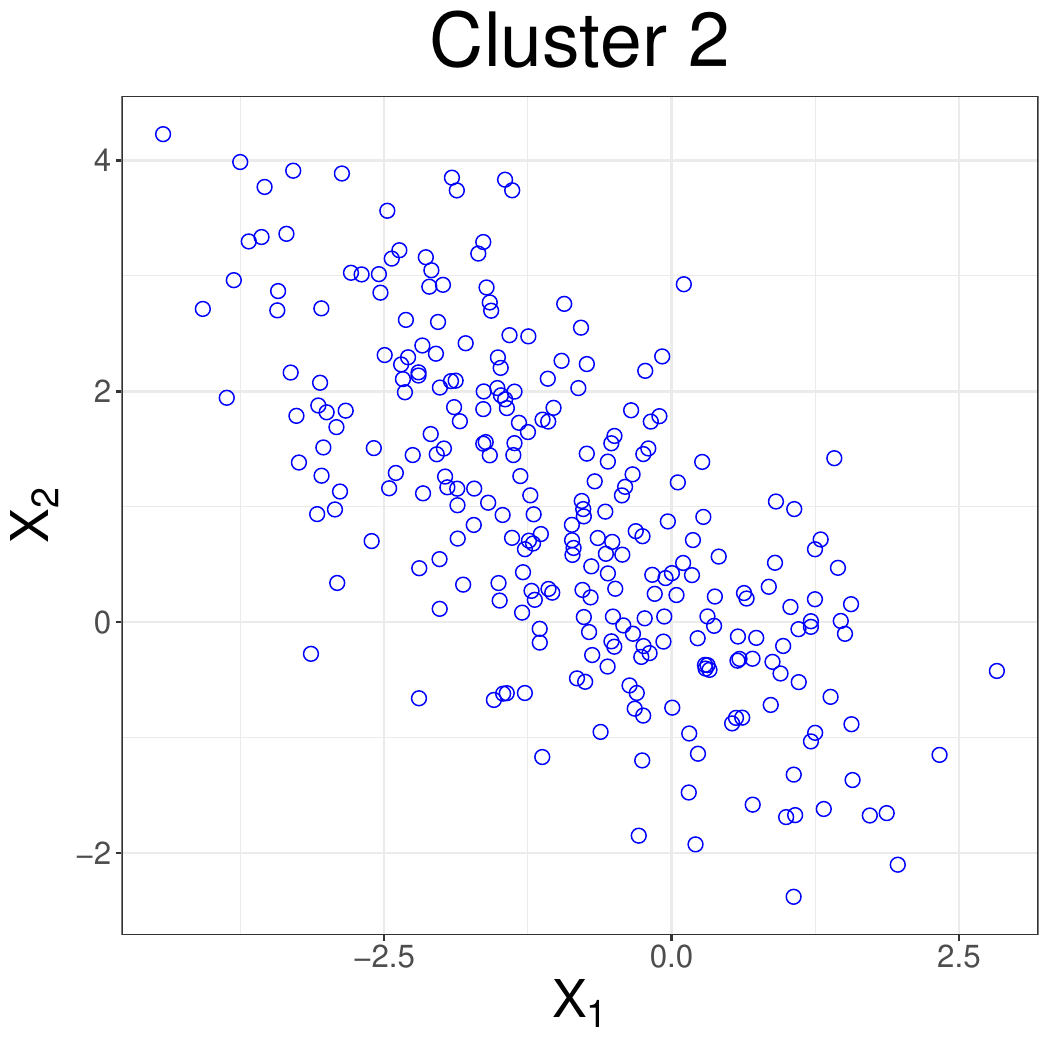}
\vspace{-.5cm}
\includegraphics[width=.3\textwidth]{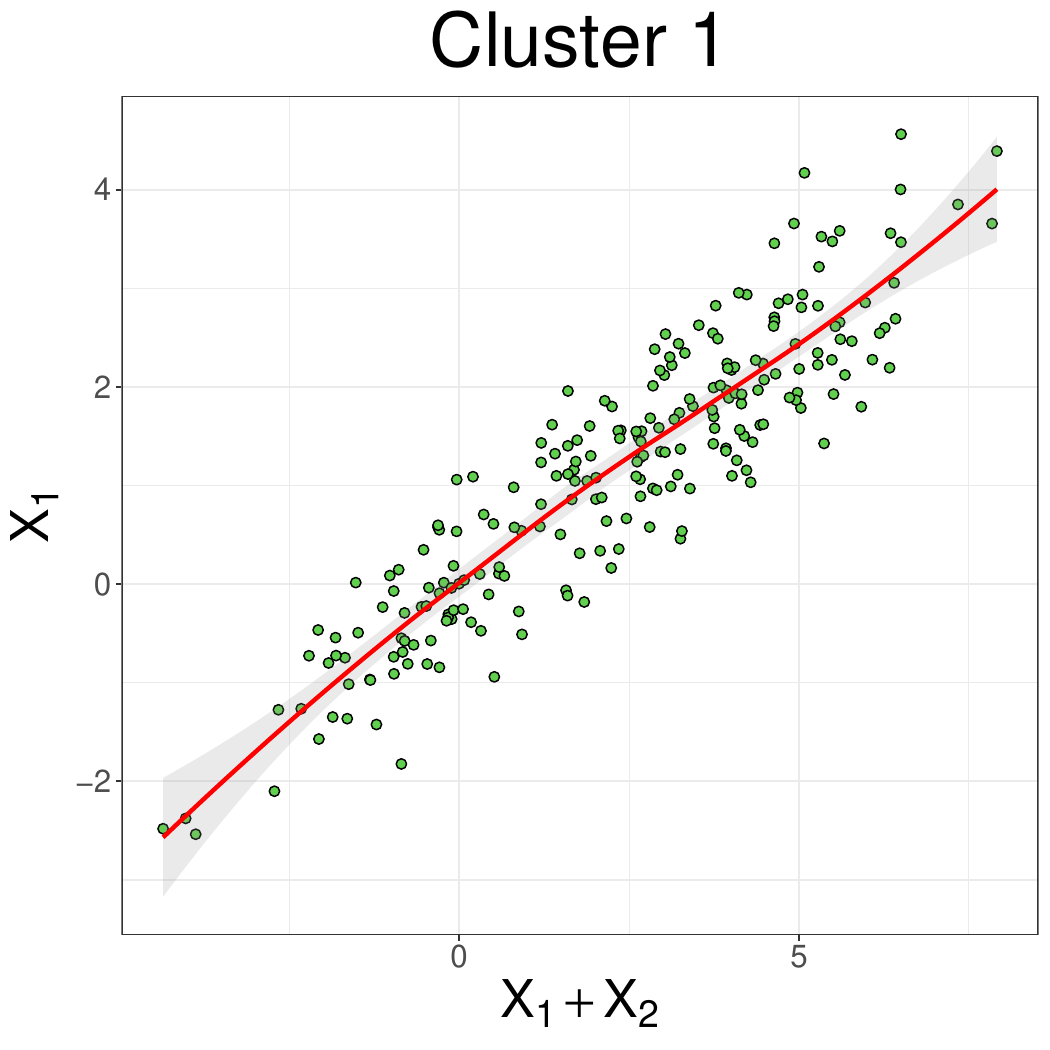}
\includegraphics[width=.3\textwidth]{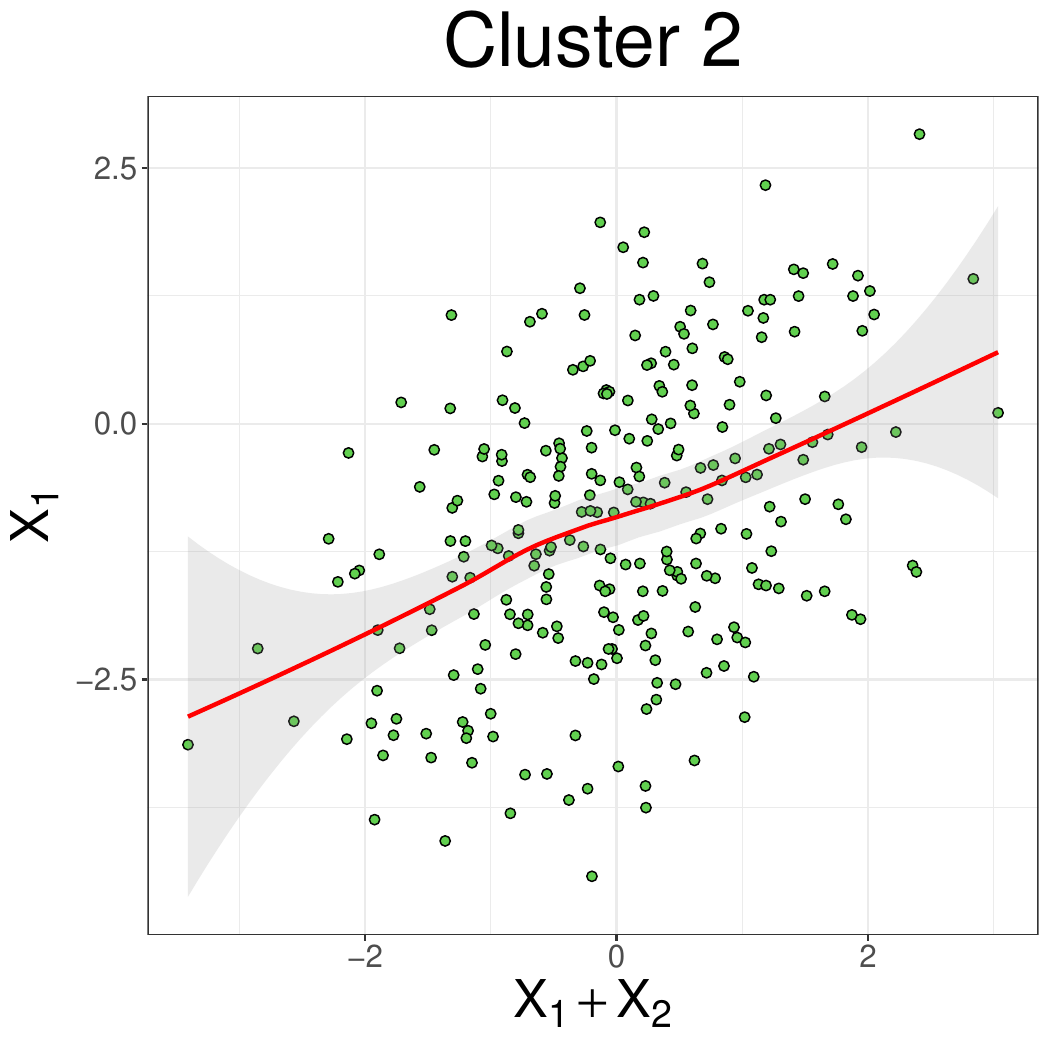}
\caption{\small In the upper panels are the scatter plots of the fitted clusters of Model 4 with $\widehat q$ forced at two, the left two plots specified for each cluster; in the lower panels are the scatter plots that illustrate the loess fit of $E(X | \beta\lo 0\trans X)$ within each cluster, where the x-axis is changed to $\beta\lo 0\trans X = X\lo 1 + X\lo 2$, and the y-axis is changed to $X\lo 1$.}
\label{figure: model 4 underest q}
\end{center}
\end{figure}

When $q$ is estimated by a consistent order-determination method, it may still be misspecified in practice. However, unlike the case of arbitrarily fixed $\widehat q$ above, an underestimation of $q$ in this case is plausibly a sign for severely overlapping mixture components in the data set; that is, two or more mixture components have similar distributions and thus can be regarded as one mixture component that approximately falls in the working parametric family ${\mathcal {F}}$ (e.g. multivariate normal). Thus, the distribution of $X$ can still be approximated by the working mixture model, indicating the effectiveness of the proposed SDR methods. Together with the robustness of the proposed methods to the choice of $q$ discussed above, a misspecification of $q$ should not be worrisome in practice.

Next, we assess the impact of the estimation error of $\widehat \pi$'s, $\widehat \mu\lo i$'s, and $\widehat \Sigma\lo i$'s on the proposed $\msir$ and $\rmsir$. To this end, we implement both SDR methods in the oracle scenario that the mixture model of $X$ is completely known for each model in Subsection 6.1 of the main text, and compare their performance (based on $200$ independent runs) with that in Table \ref{tab: ss SIR}. The results are summarized in Table \ref{tab: ss SIR oracle}. Generally, the estimation error in fitting the mixture model of $X$ has a negligible impact on both of the proposed methods. A relative asymptotic study is deferred to future.

\begin{center}
\begin{table}[t]
\centering
\caption{Comparison of the proposed methods with their oracle counterparts}
\label{tab: ss SIR oracle}
\begin{threeparttable}
\begin{tabular}[l]{l|cccc}
\hline \hline
\empty  & Model 1  & Model 2  & Model 3  & Model 4       \\ 
\hline
$\msir$  &.590(.151) &.436(.094) 
        &.553(.111)  &.374(.105) \\
$\rmsir$ &.555(.146) &.402(.085) 
        &.502(.109)  &.338(.098) \\
oracle-$\msir$  &.616(.158) &.469(.113) 
                &.544(.128) &.401(.105)\\
oracle-$\rmsir$ &.538(.138) &.314(.064) 
                &.476(.116) &.326(.078)\\
\hline
\end{tabular}
\begin{tablenotes}
\footnotesize
\item The methods ``oracle-$\msir$" and ``oracle-$\rmsir$" refer to $\msir$ and $\rmsir$ with the mixture model of $X$ completely known, respectively. The meanings of numbers in each cell of Columns $2-5$ follow those in Table~\ref{tab: ss SIR}.
\end{tablenotes}
\end{threeparttable}
\end{table}
\end{center}

	

\bibliographystyle{imsart-nameyear}
\bibliography{mixnormal}

\begin{thebibliography}{40}

\bibitem[\protect\citeauthoryear{Biernacki, Celeux and
  Govaert}{2000}]{biernacki2000assessing}
\begin{barticle}[author]
\bauthor{\bsnm{Biernacki},~\bfnm{Christophe}\binits{C.}},
  \bauthor{\bsnm{Celeux},~\bfnm{Gilles}\binits{G.}} \AND
  \bauthor{\bsnm{Govaert},~\bfnm{G{\'e}rard}\binits{G.}}
(\byear{2000}).
\btitle{Assessing a mixture model for clustering with the integrated completed
  likelihood}.
\bjournal{IEEE transactions on pattern analysis and machine intelligence}
\bvolume{22}
\bpages{719--725}.
\end{barticle}
\endbibitem

\bibitem[\protect\citeauthoryear{Cai, Ma and Zhang}{2019}]{cai2019}
\begin{barticle}[author]
\bauthor{\bsnm{Cai},~\bfnm{T~Tony}\binits{T.~T.}},
  \bauthor{\bsnm{Ma},~\bfnm{Jing}\binits{J.}} \AND
  \bauthor{\bsnm{Zhang},~\bfnm{Linjun}\binits{L.}}
(\byear{2019}).
\btitle{CHIME: Clustering of high-dimensional Gaussian mixtures with EM
  algorithm and its optimality}.
\bjournal{The Annals of Statistics}
\bvolume{47}
\bpages{1234--1267}.
\end{barticle}
\endbibitem

\bibitem[\protect\citeauthoryear{Chen, Zou and Cook}{2010}]{chen2010}
\begin{barticle}[author]
\bauthor{\bsnm{Chen},~\bfnm{X.}\binits{X.}},
  \bauthor{\bsnm{Zou},~\bfnm{C.}\binits{C.}} \AND
  \bauthor{\bsnm{Cook},~\bfnm{R.~D.}\binits{R.~D.}}
(\byear{2010}).
\btitle{Coordinate-independent sparse suffcient dimension reduction and
  variable selection}.
\bjournal{The Annals of Statistics}
\bvolume{6}
\bpages{3696--3723}.
\end{barticle}
\endbibitem

\bibitem[\protect\citeauthoryear{Chiaromonte, Cook and
  Li}{2002}]{chiaromonte2002sufficient}
\begin{barticle}[author]
\bauthor{\bsnm{Chiaromonte},~\bfnm{Francesca}\binits{F.}},
  \bauthor{\bsnm{Cook},~\bfnm{R~Dennis}\binits{R.~D.}} \AND
  \bauthor{\bsnm{Li},~\bfnm{Bing}\binits{B.}}
(\byear{2002}).
\btitle{Sufficient dimension reduction in regressions with categorical
  predictors}.
\bjournal{Annals of Statistics}
\bpages{475--497}.
\end{barticle}
\endbibitem

\bibitem[\protect\citeauthoryear{Cook}{1998}]{cook1998}
\begin{bbook}[author]
\bauthor{\bsnm{Cook},~\bfnm{R.~D.}\binits{R.~D.}}
(\byear{1998}).
\btitle{Regression Graphics}.
\bpublisher{Wiley, New York}.
\end{bbook}
\endbibitem

\bibitem[\protect\citeauthoryear{Cook}{2007}]{cook2007fisher}
\begin{barticle}[author]
\bauthor{\bsnm{Cook},~\bfnm{R~Dennis}\binits{R.~D.}}
(\byear{2007}).
\btitle{Fisher lecture: Dimension reduction in regression}.
\end{barticle}
\endbibitem

\bibitem[\protect\citeauthoryear{Cook and Forzani}{2008}]{cook2008pfc}
\begin{barticle}[author]
\bauthor{\bsnm{Cook},~\bfnm{R.~Dennis}\binits{R.~D.}} \AND
  \bauthor{\bsnm{Forzani},~\bfnm{Liliana}\binits{L.}}
(\byear{2008}).
\btitle{{Principal Fitted Components for Dimension Reduction in Regression}}.
\bjournal{Statistical Science}
\bvolume{23}
\bpages{485 -- 501}.
\end{barticle}
\endbibitem

\bibitem[\protect\citeauthoryear{Cook and Li}{2002}]{cook2002}
\begin{barticle}[author]
\bauthor{\bsnm{Cook},~\bfnm{R.~D.}\binits{R.~D.}} \AND
  \bauthor{\bsnm{Li},~\bfnm{B.}\binits{B.}}
(\byear{2002}).
\btitle{Dimension reduction for conditional mean in regression}.
\bjournal{The Annals of Statistics}
\bvolume{30}
\bpages{455--474}.
\end{barticle}
\endbibitem

\bibitem[\protect\citeauthoryear{Cook and Weisberg}{1991}]{cook1991}
\begin{barticle}[author]
\bauthor{\bsnm{Cook},~\bfnm{R.~D.}\binits{R.~D.}} \AND
  \bauthor{\bsnm{Weisberg},~\bfnm{S.}\binits{S.}}
(\byear{1991}).
\btitle{Discussion of ``{S}liced inverse regression for dimension reduction"}.
\bjournal{Journal of the American Statistical Association}
\bvolume{86}
\bpages{316--342}.
\end{barticle}
\endbibitem

\bibitem[\protect\citeauthoryear{Dong and Li}{2010}]{dong2010}
\begin{barticle}[author]
\bauthor{\bsnm{Dong},~\bfnm{Y.}\binits{Y.}} \AND
  \bauthor{\bsnm{Li},~\bfnm{B.}\binits{B.}}
(\byear{2010}).
\btitle{Dimension reduction for non-elliptically distributed predictors:
  second-order methods}.
\bjournal{Biometrika}
\bvolume{97}
\bpages{279--294}.
\end{barticle}
\endbibitem

\bibitem[\protect\citeauthoryear{Everitt}{2013}]{everitt2013finite}
\begin{bbook}[author]
\bauthor{\bsnm{Everitt},~\bfnm{Brian}\binits{B.}}
(\byear{2013}).
\btitle{Finite mixture distributions}.
\bpublisher{Springer Science \& Business Media}.
\end{bbook}
\endbibitem

\bibitem[\protect\citeauthoryear{Fertl and Bura}{2022}]{fertl2022ensemble}
\begin{barticle}[author]
\bauthor{\bsnm{Fertl},~\bfnm{Lukas}\binits{L.}} \AND
  \bauthor{\bsnm{Bura},~\bfnm{Efstathia}\binits{E.}}
(\byear{2022}).
\btitle{The ensemble conditional variance estimator for sufficient dimension
  reduction}.
\bjournal{Electronic Journal of Statistics}
\bvolume{16}
\bpages{1595--1634}.
\end{barticle}
\endbibitem

\bibitem[\protect\citeauthoryear{Guan, Xie and Zhu}{2017}]{guan2017}
\begin{barticle}[author]
\bauthor{\bsnm{Guan},~\bfnm{Yu}\binits{Y.}},
  \bauthor{\bsnm{Xie},~\bfnm{Chuanlong}\binits{C.}} \AND
  \bauthor{\bsnm{Zhu},~\bfnm{Lixing}\binits{L.}}
(\byear{2017}).
\btitle{Sufficient dimension reduction with mixture multivariate
  skew-elliptical distributions}.
\bjournal{Statistica Sinica}
\bpages{335--355}.
\end{barticle}
\endbibitem

\bibitem[\protect\citeauthoryear{Hall and Li}{1993}]{hall1993}
\begin{barticle}[author]
\bauthor{\bsnm{Hall},~\bfnm{P.}\binits{P.}} \AND
  \bauthor{\bsnm{Li},~\bfnm{K.~C.}\binits{K.~C.}}
(\byear{1993}).
\btitle{On almost linearity of low dimensional projections from high
  dimensional data}.
\bjournal{The Annals of Statistics}
\bvolume{47}
\bpages{867--889}.
\end{barticle}
\endbibitem

\bibitem[\protect\citeauthoryear{Li}{1991}]{li1991}
\begin{barticle}[author]
\bauthor{\bsnm{Li},~\bfnm{K.~C.}\binits{K.~C.}}
(\byear{1991}).
\btitle{Sliced inverse regression for dimension reduction (with discussion)}.
\bjournal{Journal of the American Statistical Association}
\bvolume{86}
\bpages{316--342}.
\end{barticle}
\endbibitem

\bibitem[\protect\citeauthoryear{Li}{1992}]{li1992}
\begin{barticle}[author]
\bauthor{\bsnm{Li},~\bfnm{K.}\binits{K.}}
(\byear{1992}).
\btitle{On principal {H}essian directions for data visualization and dimension
  reduction: another application of {S}tein's lemma}.
\bjournal{Journal of the American Statistical Association}
\bvolume{87}
\bpages{1025--1039}.
\end{barticle}
\endbibitem

\bibitem[\protect\citeauthoryear{Li}{2007}]{li2007sparse}
\begin{barticle}[author]
\bauthor{\bsnm{Li},~\bfnm{Lexin}\binits{L.}}
(\byear{2007}).
\btitle{Sparse sufficient dimension reduction}.
\bjournal{Biometrika}
\bvolume{94}
\bpages{603--613}.
\end{barticle}
\endbibitem

\bibitem[\protect\citeauthoryear{Li and Dong}{2009}]{li2009}
\begin{barticle}[author]
\bauthor{\bsnm{Li},~\bfnm{B.}\binits{B.}} \AND
  \bauthor{\bsnm{Dong},~\bfnm{Y.}\binits{Y.}}
(\byear{2009}).
\btitle{Dimension reduction for nonelliptically distributed predictors}.
\bjournal{The Annals of Statistics}
\bpages{1272--1298}.
\end{barticle}
\endbibitem

\bibitem[\protect\citeauthoryear{Li and Duan}{1989}]{li1989}
\begin{barticle}[author]
\bauthor{\bsnm{Li},~\bfnm{K.~C.}\binits{K.~C.}} \AND
  \bauthor{\bsnm{Duan},~\bfnm{N.}\binits{N.}}
(\byear{1989}).
\btitle{Regression analysis under link violation}.
\bjournal{The Annals of Statistics}
\bpages{1009--1052}.
\end{barticle}
\endbibitem

\bibitem[\protect\citeauthoryear{Li and Wang}{2007}]{li2007}
\begin{barticle}[author]
\bauthor{\bsnm{Li},~\bfnm{B.}\binits{B.}} \AND
  \bauthor{\bsnm{Wang},~\bfnm{S.}\binits{S.}}
(\byear{2007}).
\btitle{On directional regression for dimension reduction}.
\bjournal{Journal of the American Statistical Association}
\bvolume{35}
\bpages{2143--2172}.
\end{barticle}
\endbibitem

\bibitem[\protect\citeauthoryear{Lin, Zhao and Liu}{2016}]{lin2016sparse}
\begin{barticle}[author]
\bauthor{\bsnm{Lin},~\bfnm{Qian}\binits{Q.}},
  \bauthor{\bsnm{Zhao},~\bfnm{Zhigen}\binits{Z.}} \AND
  \bauthor{\bsnm{Liu},~\bfnm{Jun~S}\binits{J.~S.}}
(\byear{2016}).
\btitle{Sparse sliced inverse regression for high dimensional data}.
\bjournal{arXiv preprint arXiv:1611.06655}.
\end{barticle}
\endbibitem

\bibitem[\protect\citeauthoryear{Lin, Zhao and Liu}{2019}]{lin2019}
\begin{barticle}[author]
\bauthor{\bsnm{Lin},~\bfnm{Qian}\binits{Q.}},
  \bauthor{\bsnm{Zhao},~\bfnm{Zhigen}\binits{Z.}} \AND
  \bauthor{\bsnm{Liu},~\bfnm{Jun~S}\binits{J.~S.}}
(\byear{2019}).
\btitle{Sparse sliced inverse regression via lasso}.
\bjournal{Journal of the American Statistical Association}
\bvolume{114}
\bpages{1726--1739}.
\end{barticle}
\endbibitem

\bibitem[\protect\citeauthoryear{Lindsay}{1995}]{lindsay1995mixture}
\begin{binproceedings}[author]
\bauthor{\bsnm{Lindsay},~\bfnm{Bruce~G}\binits{B.~G.}}
(\byear{1995}).
\btitle{Mixture models: theory, geometry, and applications}.
\bpublisher{Ims}.
\end{binproceedings}
\endbibitem

\bibitem[\protect\citeauthoryear{Luo}{2018}]{luo2018}
\begin{barticle}[author]
\bauthor{\bsnm{Luo},~\bfnm{Wei}\binits{W.}}
(\byear{2018}).
\btitle{On the second-order inverse regression methods for a general type of
  elliptical predictors}.
\bjournal{STATISTICA SINICA}
\bvolume{28}
\bpages{1415--1436}.
\end{barticle}
\endbibitem

\bibitem[\protect\citeauthoryear{Luo}{2022}]{luo2022determine}
\begin{barticle}[author]
\bauthor{\bsnm{Luo},~\bfnm{Wei}\binits{W.}}
(\byear{2022}).
\btitle{Determine the number of clusters by data augmentation}.
\bjournal{Electronic Journal of Statistics}
\bvolume{16}
\bpages{3910--3936}.
\end{barticle}
\endbibitem

\bibitem[\protect\citeauthoryear{Luo, Li and Yin}{2014}]{luo2014}
\begin{barticle}[author]
\bauthor{\bsnm{Luo},~\bfnm{W.}\binits{W.}},
  \bauthor{\bsnm{Li},~\bfnm{B.}\binits{B.}} \AND
  \bauthor{\bsnm{Yin},~\bfnm{X.}\binits{X.}}
(\byear{2014}).
\btitle{On efficient dimension reduction with respect to a statistical
  functional of interest}.
\bjournal{The Annals of Statistics}
\bvolume{42}
\bpages{382--412}.
\end{barticle}
\endbibitem

\bibitem[\protect\citeauthoryear{Luo and Li}{2016}]{luo2016ladle}
\begin{barticle}[author]
\bauthor{\bsnm{Luo},~\bfnm{W.}\binits{W.}} \AND
  \bauthor{\bsnm{Li},~\bfnm{B.}\binits{B.}}
(\byear{2016}).
\btitle{Combining eigenvalues and variation of eigenvectors for order
  determination}.
\bjournal{Biometrika}
\bvolume{103}
\bpages{875--887}.
\end{barticle}
\endbibitem

\bibitem[\protect\citeauthoryear{Luo and Li}{2021}]{luo2021pae}
\begin{barticle}[author]
\bauthor{\bsnm{Luo},~\bfnm{Wei}\binits{W.}} \AND
  \bauthor{\bsnm{Li},~\bfnm{Bing}\binits{B.}}
(\byear{2021}).
\btitle{{On order determination by predictor augmentation}}.
\bjournal{Biometrika}
\bvolume{108}
\bpages{557-574}.
\end{barticle}
\endbibitem

\bibitem[\protect\citeauthoryear{Ma and Zhu}{2012}]{ma2012}
\begin{barticle}[author]
\bauthor{\bsnm{Ma},~\bfnm{Yanyuan}\binits{Y.}} \AND
  \bauthor{\bsnm{Zhu},~\bfnm{Liping}\binits{L.}}
(\byear{2012}).
\btitle{A Semiparametric Approach to Dimension Reduction}.
\bjournal{Journal of the American Statistical Association}
\bvolume{107}
\bpages{168-179}.
\end{barticle}
\endbibitem

\bibitem[\protect\citeauthoryear{Ma and Zhu}{2013}]{ma2013}
\begin{barticle}[author]
\bauthor{\bsnm{Ma},~\bfnm{Y.}\binits{Y.}} \AND
  \bauthor{\bsnm{Zhu},~\bfnm{L.}\binits{L.}}
(\byear{2013}).
\btitle{Efficient estimation in sufficient dimension reduction}.
\bjournal{Annals of statistics}
\bvolume{41}
\bpages{250}.
\end{barticle}
\endbibitem

\bibitem[\protect\citeauthoryear{Marin, Mengersen and
  Robert}{2005}]{marin2005bayesian}
\begin{barticle}[author]
\bauthor{\bsnm{Marin},~\bfnm{Jean-Michel}\binits{J.-M.}},
  \bauthor{\bsnm{Mengersen},~\bfnm{Kerrie}\binits{K.}} \AND
  \bauthor{\bsnm{Robert},~\bfnm{Christian~P}\binits{C.~P.}}
(\byear{2005}).
\btitle{Bayesian modelling and inference on mixtures of distributions}.
\bjournal{Handbook of statistics}
\bvolume{25}
\bpages{459--507}.
\end{barticle}
\endbibitem

\bibitem[\protect\citeauthoryear{McLachlan, Lee and
  Rathnayake}{2019a}]{mclachlan2019finite}
\begin{barticle}[author]
\bauthor{\bsnm{McLachlan},~\bfnm{Geoffrey~J}\binits{G.~J.}},
  \bauthor{\bsnm{Lee},~\bfnm{Sharon~X}\binits{S.~X.}} \AND
  \bauthor{\bsnm{Rathnayake},~\bfnm{Suren~I}\binits{S.~I.}}
(\byear{2019}a).
\btitle{Finite mixture models}.
\bjournal{Annual review of statistics and its application}
\bvolume{6}
\bpages{355--378}.
\end{barticle}
\endbibitem

\bibitem[\protect\citeauthoryear{McLachlan, Lee and
  Rathnayake}{2019b}]{mclachlan2019}
\begin{barticle}[author]
\bauthor{\bsnm{McLachlan},~\bfnm{Geoffrey~J.}\binits{G.~J.}},
  \bauthor{\bsnm{Lee},~\bfnm{Sharon~X.}\binits{S.~X.}} \AND
  \bauthor{\bsnm{Rathnayake},~\bfnm{Suren~I.}\binits{S.~I.}}
(\byear{2019}b).
\btitle{Finite Mixture Models}.
\bjournal{Annual Review of Statistics and Its Application}
\bvolume{6}
\bpages{355-378}.
\end{barticle}
\endbibitem

\bibitem[\protect\citeauthoryear{Tan et~al.}{2018}]{tan2018}
\begin{barticle}[author]
\bauthor{\bsnm{Tan},~\bfnm{Kean~Ming}\binits{K.~M.}},
  \bauthor{\bsnm{Wang},~\bfnm{Zhaoran}\binits{Z.}},
  \bauthor{\bsnm{Zhang},~\bfnm{Tong}\binits{T.}},
  \bauthor{\bsnm{Liu},~\bfnm{Han}\binits{H.}} \AND
  \bauthor{\bsnm{Cook},~\bfnm{R~Dennis}\binits{R.~D.}}
(\byear{2018}).
\btitle{A convex formulation for high-dimensional sparse sliced inverse
  regression}.
\bjournal{Biometrika}
\bvolume{105}
\bpages{769--782}.
\end{barticle}
\endbibitem

\bibitem[\protect\citeauthoryear{Titterington, Smith and
  Makov}{1985}]{titterington1985}
\begin{binproceedings}[author]
\bauthor{\bsnm{Titterington},~\bfnm{D.~Michael}\binits{D.~M.}},
  \bauthor{\bsnm{Smith},~\bfnm{Adrian F.~M.}\binits{A.~F.~M.}} \AND
  \bauthor{\bsnm{Makov},~\bfnm{Udi~E.}\binits{U.~E.}}
(\byear{1985}).
\btitle{Statistical analysis of finite mixture distributions}.
\bpublisher{John Wiley \& Sons}.
\end{binproceedings}
\endbibitem

\bibitem[\protect\citeauthoryear{Wang et~al.}{2020}]{wang2020}
\begin{barticle}[author]
\bauthor{\bsnm{Wang},~\bfnm{Qin}\binits{Q.}},
  \bauthor{\bsnm{Yin},~\bfnm{Xiangrong}\binits{X.}},
  \bauthor{\bsnm{Li},~\bfnm{Bing}\binits{B.}} \AND
  \bauthor{\bsnm{Tang},~\bfnm{Zhihui}\binits{Z.}}
(\byear{2020}).
\btitle{On aggregate dimension reduction}.
\bjournal{Statistica Sinica}
\bvolume{30}
\bpages{1027--1048}.
\end{barticle}
\endbibitem

\bibitem[\protect\citeauthoryear{Xia et~al.}{2002}]{xia2002}
\begin{barticle}[author]
\bauthor{\bsnm{Xia},~\bfnm{Yingcun}\binits{Y.}},
  \bauthor{\bsnm{Tong},~\bfnm{Howell}\binits{H.}},
  \bauthor{\bsnm{Li},~\bfnm{WK}\binits{W.}} \AND
  \bauthor{\bsnm{Zhu},~\bfnm{Li-Xing}\binits{L.-X.}}
(\byear{2002}).
\btitle{An adaptive estimation of dimension reduction space}.
\bjournal{Journal of the Royal Statistical Society: Series B (Statistical
  Methodology)}
\bvolume{64}
\bpages{363--410}.
\end{barticle}
\endbibitem

\bibitem[\protect\citeauthoryear{Yin, Li and Cook}{2008}]{yin2008}
\begin{barticle}[author]
\bauthor{\bsnm{Yin},~\bfnm{X.}\binits{X.}},
  \bauthor{\bsnm{Li},~\bfnm{B.}\binits{B.}} \AND
  \bauthor{\bsnm{Cook},~\bfnm{R.~D.}\binits{R.~D.}}
(\byear{2008}).
\btitle{Successive direction extraction for estimating the central subspace in
  a multiple-index regression}.
\bjournal{Journal of Multivariate Analysis}
\bvolume{99}
\bpages{1733--1757}.
\end{barticle}
\endbibitem

\bibitem[\protect\citeauthoryear{Zeng, Mai and Zhang}{2022}]{zeng2022}
\begin{barticle}[author]
\bauthor{\bsnm{Zeng},~\bfnm{Jing}\binits{J.}},
  \bauthor{\bsnm{Mai},~\bfnm{Qing}\binits{Q.}} \AND
  \bauthor{\bsnm{Zhang},~\bfnm{Xin}\binits{X.}}
(\byear{2022}).
\btitle{Subspace Estimation with Automatic Dimension and Variable Selection in
  Sufficient Dimension Reduction}.
\bjournal{Journal of the American Statistical Association}
\bvolume{0}
\bpages{1-13}.
\end{barticle}
\endbibitem

\bibitem[\protect\citeauthoryear{Zhao, Hautamaki and
  Fr{\"a}nti}{2008}]{zhao2008knee}
\begin{binproceedings}[author]
\bauthor{\bsnm{Zhao},~\bfnm{Qinpei}\binits{Q.}},
  \bauthor{\bsnm{Hautamaki},~\bfnm{Ville}\binits{V.}} \AND
  \bauthor{\bsnm{Fr{\"a}nti},~\bfnm{Pasi}\binits{P.}}
(\byear{2008}).
\btitle{Knee point detection in BIC for detecting the number of clusters}.
In \bbooktitle{Advanced Concepts for Intelligent Vision Systems: 10th
  International Conference, ACIVS 2008, Juan-les-Pins, France, October 20-24,
  2008. Proceedings 10}
\bpages{664--673}.
\bpublisher{Springer}.
\end{binproceedings}
\endbibitem

\end{thebibliography}

\end{document}